\definecolor{darkblue}{rgb}{0.0,0.0,0.2}
\newcommand{\Comments}{0}
\newcommand{\mynote}[2]{\ifnum\Comments=1\textcolor{#1}{#2}\fi}
\newcommand{\mytodo}[2]{\ifnum\Comments=1
	\todo[linecolor=#1!80!black,backgroundcolor=#1,bordercolor=#1!80!black]{#2}\fi}
\newcommand{\btw}[1]{\mytodo{gray!10!white}{\textcolor{gray!40!black}{BTW: #1}}}
\def\isLeft{1}
\def\isCw{1}
\def\centerOffset{1}
\tikzset{
    arc center/.is choice,
    arc center/left/.code={\def\isLeft{1}\tikzset{arc to}},
    arc center/right/.code={\def\isLeft{0}\tikzset{arc to}},
    arc center offset/.code={\def\centerOffset{#1}\tikzset{arc to}},
    arc direction/.is choice,
    arc direction/cw/.code={\def\isCw{1}\tikzset{arc to}},
    arc direction/ccw/.code={\def\isCw{0}\tikzset{arc to}},
    arc to/.style={
        to path={
            [evaluate={
                coordinate \u, \v, \m, \mv, \mvOrtho, \c, \cu, \cv;
                \u = (\tikztostart);
                \v = (\tikztotarget);
                \m = (\u)!0.5!(\v);
                \mv = (\v)-(\m);
                if \isLeft then {
                    \mvOrtho = (-\mvy, \mvx);
                } else {
                    \mvOrtho = (\mvy, -\mvx);
                };
                \c = (\m)+(\mvOrtho);
                \c = (\m)!\centerOffset!(\c);
                \cu = (\u)-(\c);
                \cv = (\v)-(\c);
                \radius = veclen(\cu);
                \startAngle = atan2(\cuy, \cux);
                \endAngle = atan2(\cvy, \cvx);
                if \isCw then {
                    if \startAngle < \endAngle then {
                        \endAngle = \endAngle - 360;
                    };
                } else {
                    if \startAngle > \endAngle then {
                        \endAngle = \endAngle + 360;
                    };
                };
            }]
            arc[radius=\radius pt, start angle=\startAngle, end angle=\endAngle] \tikztonodes
        }
    },
    vertex/.style={circle, inner sep=1pt},
    left loop/.style={arc center=left, arc direction=cw, arc center offset=10pt},
    right loop/.style={arc center=right, arc direction=ccw, arc center offset=10pt},
    invisible/.style={text opacity=0},
    -{Stealth[round, flex=1]}
}
\newcommand{\Lc}{\mathcal{L}}  
\newcommand{\Bc}{\mathcal{B}}  
\newcommand{\Ac}{\mathcal{A}}
\newcommand{\Fc}{\mathcal{F}}
\newcommand{\shift}[1]{\mathsf{X}_{#1}} 
\newcommand{\term}[1]{\textit{#1}}
\newcommand{\leftm}{{\vartriangleright}}  
\newcommand{\rightm}{{\vartriangleleft}} 
\newcommand{\Cerny}{\v{C}ern\'{y}\xspace}
\newcommand{\PSPACE}{\ensuremath{\mathsf{PSPACE}}\xspace}
\newcommand{\cc}[1]{\ensuremath{\mathsf{#1}}\xspace}
\newcommand{\Zb}{\mathbb{Z}}
\newcommand{\problem}[1]{\textsc{#1}}
\newcommand{\irredp}{\problem{Irreducibility}\xspace}
\newcommand{\equalp}{\problem{Equality}\xspace}
\newcommand{\inclusionp}{\problem{Subshift}\xspace}
\newcommand{\sftp}{\problem{Sft}\xspace}
\newcommand{\sdpp}{\problem{$\exists$Sdp}\xspace}
\newcommand{\minimalp}{\problem{Minimality}\xspace}
\newcommand{\syncp}{\problem{SyncWord}\xspace}
\newcommand{\dfacapp}{\problem{DfaInt}\xspace}
\newcommand{\dfacupp}{\problem{DfaUnion}\xspace}
\def\ygap{2}
\def\xgap{2}
\def\arrowgap{.125}
\declaretheorem[style=definition, numberwithin=section]{theorem}
\declaretheorem[style=definition, sibling=theorem]{lemma}
\declaretheorem[style=definition, sibling=theorem]{proposition}
\declaretheorem[style=definition, sibling=theorem]{corollary}
\declaretheorem[style=remark, sibling=theorem]{remark}
\declaretheoremstyle[
qed=\qedsymbol,
]{reductionstyle}
\declaretheorem[style=reductionstyle]{reduction}
\renewcommand{\in}{\smallin}
\renewcommand{\notin}{\notsmallin}
\title{Computational complexity of problems for deterministic presentations of sofic shifts}
\author{Justin Cai
  \\{\small University of Colorado Boulder}
  \\\texttt{\small justin.cai@colorado.edu}
  \and
  Rafael Frongillo
  \\{\small University of Colorado Boulder}
  \\\texttt{\small raf@colorado.edu}}
\begin{document}

\maketitle

\begin{abstract}
  Sofic shifts are symbolic dynamical systems defined by the set of bi-infinite sequences on an edge-labeled directed graph, called a presentation.
  We study the computational complexity of an array of natural decision problems about presentations of sofic shifts, such as whether a given graph presents a shift of finite type, or an irreducible shift; whether one graph presents a subshift of another; and whether a given presentation is minimal, or has a synchronizing word.
  Leveraging connections to automata theory, we first observe that these problems are all decidable in polynomial time when the given presentation is irreducible (strongly connected), via algorithms both known and novel to this work.
  For the general (reducible) case, however, we show they are all \cc{PSPACE}-complete.
  All but one of these problems (subshift) remain polynomial-time solvable when restricting to synchronizing deterministic presentations.
  We also study the size of synchronizing words and synchronizing deterministic presentations.
  \btw{Old abstracts commented out}
\end{abstract}

{\small \textbf{Keywords:} sofic shifts; symbolic dynamics; computational complexity; automata theory}

\section{Introduction}

Symbolic dynamics in dimension one is the study of shift spaces, which are topological dynamical systems given by ``shifting'' bi-infinite sequences of symbols.
Sofic shifts are shift spaces whose points are given by the label sequences for bi-infinite walks in a labeled graph, called a presentation.
As they characterize the factors of subshifts of finite type (SFTs), sofic shifts have fundamental importance in symbolic dynamics.
They also have an array of applications both within and outside of dynamical systems, including billiards, ergodic theory, continuous dynamics, and information theory, automata theory, and matrix theory~\cite{marcus2008symbolic}.
In particular, one motivation for the present work is the set of computational problems that arise in application to continuous maps via Conley index theory~\cite{kwapisz2000cocyclic,kwapisz2004transfer,day2019sofic}.

Despite their fundamental importance, however, many basic questions about the computational complexity of sofic shifts remain open.
In Table~\ref{tab:problems}, we give seven natural decision problems, many of which also arise frequently in applications.
For example, given a labeled graph, does it present an SFT?
Does a given reducible presentation actually present an irreducible sofic shift?
Do two given labeled graphs present the same shift?
For special cases, such when the given presentations are irreducible, some of these problems are known to be in \cc{P}, i.e., they admit a polynomial-time algorithm.
For the general case, however, only the complexity of \syncp is known: it is \cc{PSPACE}-complete to determine whether a given deterministic presentation has a synchronizing word~\cite{berlinkov2014two}.

In this work, we resolve the complexity of the remaining six problems, showing that they are all \cc{PSPACE}-complete in the general case.
We also study two special cases, sofic shifts given by deterministic presentations which are either irreducible or synchronizing.
For these cases, the problems are generally in \cc{P}.
In fact, the only exception is \inclusionp for synchronizing deterministic presentations, which is again \cc{PSPACE}-complete.
Our reductions also shed light on the size of the smallest synchronizing word and minimal synchronizing deterministic presentation, namely that both can be exponentially large in the given presentation.
These results are significant for understanding sofic shifts in their own right, as well as relevant for applications.

\begin{table}[t]
  \centering
  \begin{tabular}{lll}
    \toprule
    Problem & Input & Decision \\
    \midrule
    \irredp & $G$ & Is $\shift{G}$ irreducible? \\
    \equalp & $G,H$ & Is $\shift{G} = \shift{H}$? \\
    \inclusionp & $G,H$ & Is $\shift{G} \subseteq \shift{H}$? \\
    \sftp & $G$ & Is $\shift{G}$ an SFT? \\
    \sdpp & $G$ & Does $\shift{G}$ have an SDP? \\
    \minimalp & $G,k$ & Does $\shift{G}$ have a $k$-vertex deterministic presentation? \\
    \syncp & $G$ & Does $G$ have a synchronizing word? \\
    \bottomrule
  \end{tabular}
  \caption{Natural decision problems for sofic shifts.  We show that all are \PSPACE-hard in the general case; see Table~\ref{tab:results-overview} for an overview of our results.  For the inputs to the problems, $G,H$ are deterministic presentations, and $k$ is a positive integer.}
  \label{tab:problems}
\end{table}

\subsection{Relation to the literature}
\label{sec:lit}

\paragraph{Conjugacy}

Absent from our list of problems is arguably the most important: deciding whether two sofic shifts are isomorphic, or \term{conjugate}.
In general, the decidability of this conjugacy problem is open, even when restricting to the class of SFTs.
Verifying the natural certificates of
conjugacy, known as sliding block codes, is computable in polynomial time for SFTs given by vertex shifts, and deciding if there is a
certificate of a fixed size is \cc{GI}-hard, meaning there is a polynomial-time reduction from the graph isomorphism problem to that problem~\cite{schrock2021computational}.
One can partially decide
nonconjugacy via conjugacy invariants, i.e., properties
which isomorphic objects share.
For dynamical systems, a set of invariants related to the
connectivity of the state space are being
(topologically)
transitive, mixing, and nonwandering.
For shift spaces, topological transitivity is
also known as irreducibility. 
The class of SFTs is contained with the class of sofic shifts,
and the class of SFTs is closed under conjugacy; thus, for
sofic shifts, being an SFT is a conjugacy invariant.
\citet{conen1977finite}
show that all the above invariants are decidable.
We show that for sofic
shifts given by deterministic presentations, deciding
if they are irreducible or an SFT is \cc{PSPACE}-complete.
The \cc{PSPACE}-hardness of being mixing or nondwandering
also follows immediately from our reduction.
Interestingly, it also follows from our reduction that
deciding conjugacy of sofic shifts is at least \cc{PSPACE}-hard.

\paragraph{Automata theory}

Sofic shifts have a close relationship with automata theory.
A finite automaton (FA) roughly corresponds to an edge-labeled
directed graph with a set of initial and accepting states.
\footnote{We define an FA the same way some authors define a \emph{nondeterminisc finite automaton (NFA)}; we use ``FA'' to avoid confusion when we consider deterministic finite automata (DFAs) as a subset of FAs.}
The language
of an FA is the set of words labeled by a path starting at an initial state
and ending at an accepting state. The languages described by finite
automata are known as \term{regular} languages.
Similarly, we define the language of a shift space to be the set of finite words appearing in
points of the shift space. 

A basic result about shift spaces says that shift spaces are
determined by their languages: two shift spaces are equal if
and only if their languages are equal.
Furthermore, the
languages of sofic shifts are regular in the above sense.\footnote{The languages of sofic shifts are exactly the languages which are \term{factorial}, \term{prolongable}, and regular. \cite[Proposition 1.3.4]{lind2021introduction}}
More specifically,
by interpreting a presentation of a sofic shift as an FA where every state
is both initial and final, then the language of a
presentation (interpreted as an FA) is the same as the
language of the sofic shift it presents. These connections allows
us to use automata-theoretic tools to study sofic shifts.

A \term{deterministic finite automaton} (DFA)
over an alphabet $\Sigma$ is an FA with a single initial state,
such that at each state $q$ and
for each $a \in \Sigma$, there is exactly one edge leaving $q$ labeled $a$.
A deterministic presentation of a sofic shift, when thought of as an FA, has a similar definition: a presentation is \term{deterministic}
if for each $a \in \Sigma$, there is \emph{at most} one edge leaving
that state labeled $a$. Comparing the languages of two DFAs
(i.e.\ whether their languages are equal, or if one is a subset of the other) is computable
in polynomial time.
However, for FAs in general, the same problem is \cc{PSPACE}-complete
\cite{malcher2004minimizing}. 
Comparing the languages of presentations in general is also \cc{PSPACE}-complete,
as a corollary of results from \citet{czeizler2006testing}.
The question that remains is therefore the complexity of comparing languages of deterministic presentations.
It is likely known that comparing languages of irreducible (i.e.\ strongly connected)
deterministic presentations is computable in polynomial time;
in Section~\ref{sec:testing-subshift}, we give an algorithm.
For
deterministic presentations in general, we show in Section~\ref{sec:hardn-equal-cont} that comparing
languages (and thus comparing the shift spaces they present)
of this type of FA is \cc{PSPACE}-complete.

\paragraph{Minimization of presentations}

We say two FAs are \term{equivalent} if they have the
same language.
Algorithms for \term{minimizing} a DFA, i.e.\ finding an equivalent DFA with
fewer states, have been well-studied.
This minimization problem has nice properties: every DFA has an unique minimal
equivalent DFA which can be computed in polynomial time.  For FAs in
general, minimization is not as nice: FAs do not necessarily have
unique minimal equivalent FA \cite{arnold1992note}, and deciding if there is an FA with fewer
states than a given FA is \cc{PSPACE}-complete \cite{malcher2004minimizing}.

As a class of FAs,
irreducible deterministic presentations of sofic shifts have similar minimality
properties to DFAs: every irreducible deterministic presentation has a
unique minimal equivalent irreducible deterministic presentation which
is computable in polynomial time~\cite{lind2021introduction}.
(Furthermore, the property
characterizing a minimal irreducible deterministic presentation
in a sense is exactly the same property as a minimal DFA.)
This observation leaves the question: do
reducible deterministic presentations share the minimzation properties
of the class of DFAs or of the class of FAs?
The class of FAs arguably closest to general deterministic presentations are the \term{multiple-entry DFAs} (mDFAs), which are
essentially DFAs with one source of nondeterminism: multiple initial states.
A deterministic
presentation can be made into an equivalent mDFA by by adding a sink
state, thus ``fully determinizing'' every state, and then interpreting
every state but the sink state as an initial and final state
(c.f.\ \term{sink vertex graph} in Section \ref{sec:find-synchr-words}).
As mDFAs share the minimization
properties of the class of FAs~\cite{malcher2004minimizing, holzer2001state}, and
general deterministic presentations do not have unique minimal equivalent deterministic presentations~\cite{jonoska1996sofic}, one may suspect that minimization of general deterministic presentations is \cc{PSPACE}-complete.
Indeed, we show this result in Section~\ref{sec:sft-reduct}.

\paragraph{Synchronizing words}

Another equivalent way of defining a DFA is by specifying a
transition function; i.e., a function $\delta\colon Q \times \Sigma \to Q $ for some set of states $Q$ and finite alphabet $\Sigma$.
The transition function then
naturally extends to a function $\delta\colon Q \times \Sigma^* \to Q$ from the states and words over an alphabet.
A \term{synchronizing word} (also called a \term{reset word}) for a DFA is
a word that transitions every state to a single state:
$w$ is synchronizing if $\delta(p,w) = \delta(q,w)$ for all states
$p$ and $q$.
(Equivalently, the function $q \mapsto \delta(q, w)$ 
is a constant function.)
A deterministic
presentation can be seen as a DFA with a partially defined transition
function (a function whose domain is a subset of $Q \times \Sigma$); call a DFA with a partial transition function a \term{partial} DFA.
There are multiple ways to generalize the notion of a synchronizing word
to partial DFAs, for example, a \term{carefully} synchronizing word \cite{shabana2020careful} and
a \term{exact} synchronizing word \cite{shabana2019exact, travers2011exact}.
The former is a word whose
transition is defined at all states and sends all states to a single state;
the latter is a word whose transition is defined at least one state 
and sends every state (where it is defined) to the same state.
Interpreting a deterministic presentation as a partial DFA, a
synchronizing word for a deterministic presentation of a sofic shift is defined as an exact synchronizing word.
Note that a DFA might be called
synchronizing or synchronized if it has a synchronizing word; for
presentations of sofic shifts, our usage of a synchronizing presentation
corresponds to that of \citet{jonoska1996sofic}: for every state $q\in Q$, there is a synchronizing word that sends every state to $q$.

For DFAs, one can decide if a synchronizing word exists (and find one) in polynomial time
via Eppstein's algorithm~\cite{eppstein1990reset}.
Independently, \citet{travers2011exact} gave a similar algorithm
which can be used to determine if a synchronizing word exists in
an irreducible deterministic presentation.
In Section \ref{sec:find-synchr-words}, we describe an
algorithm to find a synchronizing word in an irreducible
deterministic presentations (Algorithm \ref{alg:sw}) which combines the techniques
of the previously two mentioned algorithms.
We extend these techniques to subshift testing
for irreducible deterministic presentations (Algorithm \ref{alg:subshift})
and then use  synchronizing word algorithm and subshift testing algorithm
together as subprocedures for testing if a deterministic presentation is synchronizing.
For deterministic presentations in general,
deciding if a synchronizing word exists is \cc{PSPACE}-complete,
and the size of a minimum length synchronizing word may be exponentially
large with respect to the number of states.
While both of these facts were already implied by
\citet{berlinkov2014two}, for completeness we provide proofs in Sections~\ref{sec:hard-sync-word} and~\ref{sec:short-synchr-word}, respectively.

\paragraph{Synchronizing deterministic presentations}

\citet{jonoska1996sofic} introduced synchronizing deterministic
presentations, as defined above: for every state $q\in Q$, there is a synchronizing word that sends every state to $q$.
The shift spaces given by synchronizing deterministic presentations slightly
generalize those given by irreducible deterministic presentations while
retaining serveral nice properties.
For example, synchronizing deterministic presentations
share the minimization properties of the class of DFAs:
every synchronizing deterministic presentation has a unique
minimal equivalent synchronizing deterministic presentation
that is computable in polynomial time. For irreducible
deterministic presentations, it is known that \equalp and \sftp
are in \cc{P}.
We show that the algorithms for the irreducible
case generalize cleanly to the synchronizing case, implying that
\equalp and \sftp are in \cc{P} for synchronizing determinstic presentations
(Sections~\ref{sec:iso-eq} and~\ref{sec:sft-test-synchr}).
Interestingly, although \inclusionp is in \cc{P} for irreducible
deterministic presentations, \inclusionp is \cc{PSPACE}-complete
for synchronizing deterministic presentations (Remark~\ref{remark:subshift-sdp-hard}).

Not all sofic shifts have synchronizing deterministic
presentations.
In fact, we show that the problem of deciding whether a
sofic shift has a synchronizing deterministic presentation, \sdpp, is
\cc{PSPACE}-complete (Section~\ref{sec:hardn-equal-cont}).
For irreducible sofic shifts, minimal
synchronizing deterministic presentations and minimal deterministic
presentations are the same.
For reducible sofic shifts, however, these
minimal presentations are not necessarily the same.
Indeed, we show that a
minimal synchronizing deterministic one can be exponentially larger
than a minimal deterministic one (Section~\ref{sec:minim-synchr-determ}).

\section{Background and Setting}

\subsection{Shift spaces and presentations}

Here, we introduce basic notions about shift spaces, sofic shifts, and presentations. Definitions and notation follow \citet{lind2021introduction}.

Let $\Sigma$ be a finite set. We refer to a finite sequence as a \term{word}, and we denote by $\Sigma^*$ the set of words over $\Sigma$. A subset of $\Sigma^*$ is called a \term{language}.  For $w \in \Sigma^*$, we denote the length of $w$ as $|w|$.  We denote the empty word as $\epsilon$, and note that $|\epsilon| = 0$ and $\epsilon \in \Sigma^*$. The \term{full $\Sigma$-shift} is the set $\Sigma^\Zb$ of bi-infinite sequences over $\Sigma$.  Let $x \in \Sigma^\Zb$. For $i \leq j$, we denote $x_{[i,j]} \triangleq x_i x_{i+1} \dots x_j$. We say a word $w$ \term{appears} in $x$ if there are $i$ and $j$ with $x_{[i,j]} = w$.  For a collection of words $\Fc \subseteq \Sigma^*$, we define $\shift{\Fc} \triangleq \{\, x \in \Sigma^\Zb : \text{no word in } \Fc \text{ appears in } x \,\}$.  A \term{shift space} is a subset $X \subseteq \Sigma^\Zb$ of the full $\Sigma$-shift such that there is a collection of words $\Fc$ with $X = \shift{\Fc}$.  A \term{shift of finite type (SFT)} is a shift space $X = \shift{\Fc}$ for some finite set $\Fc$.

For a subset $X \subseteq \Sigma^\Zb$ of the full $\Sigma$-shift, we define the \term{language} $\Bc(X)$ of $X$ to be the set of words that appear in some $x \in X$, i.e.,\ $\Bc(X) \triangleq \{\, x_{[i,j]} : x \in X, i \leq j \,\}$.
Shift spaces are characterized by their languages: for every shift
space $X \subseteq \Sigma^\Zb$, one has that $X = \shift{\Sigma^*\setminus\Bc(X)}$. Thus,
for shift spaces $X$ and $Y$, if $\Bc(X) = \Bc(Y)$, then
$X = Y$ \cite[Proposition 1.3.4]{lind2021introduction}.
Additionally, one can easily show inclusion is also respected:
$\Bc(X) \subseteq \Bc(Y)$ if and only if $X \subseteq Y$.
Finally, for $u \in \Sigma^*$, we define the \term{follower set} of $u$
as the set $F_X(u) \triangleq \{\, w \in \Sigma^* : uw \in \Bc(X) \,\}$.

To define sofic shifts, we will work with edge-labeled, directed multigraphs,
where self loops and multiple edges between vertices are permitted.  Formally, a
\term{labeled graph} $G$ consists of a finite set $Q$ of \term{vertices} (or
\term{states}), a finite set $E$ of \term{edges}, functions $i\colon E \to Q$
and $t\colon E \to Q$, assigning each edge an \term{initial} and \term{terminal}
vertex, and a function $\Lc\colon E \to \Sigma$, assigning each edge a
\term{label}.
For a given graph $G$, the symbols $Q_G$, $E_G$, $i_G$, $t_G$,
and $\Lc_G$ will refer to the above sets and functions for the graph $G$.
Additionally, we define the \term{alphabet of $G$} as the set
$\Ac_G$ of labels appearing on edges in $G$ (i.e.\
$\Ac_G \triangleq \Lc_G(E_G)$).
When the labels are irrelevant, we will sometimes call a labeled graph
a graph.

If $G$ is a labeled graph and $P \subseteq Q$ is a subset of vertices,
then the \term{subgraph induced by $P$} (in $G$) is the labeled graph
$H$ given by $Q_H, E_H, i_H, t_H, \Lc_H$, where:
$Q_H \triangleq P$;
$E_H \triangleq \{\, e \in E_G : i_G(e) \in P, t_G(e) \in P\,\}$;
$i_H$,$t_H$ are $i_G,t_G$ restricted to $E_H$;
and $\Lc_H$ is $\Lc_G$ restricted to
$Q_H$.

Let $G$ be a labeled graph.
A \term{path} in $G$ is a finite sequence $\pi=e_1 \dots e_n$ of edges
with $t_G(e_{i}) = i_G(e_{i+1})$ for $i=1,\dots, n-1$.  We assign
$i_G(\pi) \triangleq e_1$ and $t_G(\pi) \triangleq e_n$, and say $\pi$
\term{starts} at $i_G(\pi)$ and \term{ends} at $t_G(\pi)$. Additionally,
we assign $\Lc_G(\pi) \triangleq \Lc_G(e_1) \dots \Lc_G(e_n)$, and say
$\Lc_G(\pi)$ is the \term{label} of $\pi$.  Similarly, a \term{bi-infinite path} in $G$ is a
bi-infinite sequence $x \in E_G^\Zb$ of edges with $t_G(x_i) = i_G(x_{i+1})$
for all $i \in \Zb$. For a bi-infinite path $x$ in $G$, we assign the
\term{label} of $x$ as the bi-infinite sequence
$\Lc_G(x) \in \Ac_G^\Zb$ with
$\Lc_G(x)_i \triangleq \Lc_G(x_i)$.  For a vertex, $q$ we define the
\term{follower set} of $q$ in $G$ as the set
$F_G(q) \triangleq \{\, \Lc_G(\pi) : \pi \text{ is a path in } G \text{
  starting at } q \,\}$.\footnote{
  Under this definition, the empty path $\epsilon$, i.e., the empty sequence of edges, is a valid path in $G$ but one where $i_G$ and $t_G$ are undefined.
  To rectify this omission, for every vertex $q \in Q_G$, we declare $\epsilon_q$ to be a path such that the length of $\epsilon_q$ is $0$, $\epsilon_q$ starts and ends at $q$, and $\Lc(\epsilon_q) = \epsilon$.
}

We now have the necessary definitions to define sofic shifts.
For a labeled graph $G$,
we assign it the shift space
\[\shift{G} \triangleq \{\, \Lc_G(x) : x \text{ is a bi-infinite
  path in } G \,\}.\]  A \term{sofic shift} is a shift space
$X$ such that $X = \shift{G}$ for some labeled
graph $G$,
and we say $G$ is a \term{presentation} of $X$ and that $X$
is the sofic shift \term{presented} by $G$.
For a proof that $\shift{G}$ is actually a shift space, see \citet[Theorem 3.1.4]{lind2021introduction}.

Let $G$ be a labeled graph.
For a given path $\pi$ in $G$, it could be the case that $\Lc_G(\pi)$
is not in the language of $\shift{G}$, as $\pi$ might not appear
in a bi-infinite path.
We say a vertex $q$ in $G$ is \term{stranded}
if there is no edge starting at $q$ or if there is no edge ending at
$q$.  If no vertex is stranded, then we say $G$ is
\term{essential}. When $G$ is essential, every path appears in a
bi-infinite path, so $\Lc_G(\pi)$ is always in the language of
$\shift{G}$.
If one removes a stranded vertex from a presentation, the
sofic shift presented by the resulting presentation is the same as the
one presented by the original presentation.
Thus, every sofic shift has an essential presentation, which can be obtained by iteratively
removing stranded vertices until no more exist \cite[Proposition 2.2.10]{lind2021introduction}.
We therefore make the
following convention: a \term{presentation} refers to an essential labeled
graph.
We will still refer to labeled graphs as being
potentially nonessential; the distinction is needed for the algorithms in
Section \ref{sec:compl-upper-bounds}, where we may call our algorithms on
nonessential graphs (line~\ref{alg:is-sync:line:essential} of
  Algorithm~\ref{alg:sync-test}).

Let $G$ be a labeled graph.  We
say $G$ is \term{deterministic} (also called \term{right-resolving}) if for
every vertex $q$ and every $a \in \Ac_G$, there is at most one edge
labeled $a$ starting at $q$.
If for every vertex $q$ and
$a \in \Ac_G$ there is exactly one edge labeled $a$ starting at $q$,
we say $G$ is \term{fully deterministic}.
If $G$ is deterministic,
one can show by induction that for every vertex $q$ and word $w$, if
$\pi$ is a path starting at a vertex $q$ and $\Lc_G(\pi) = w$, then
$\pi$ is the unique path starting at $q$ with $\Lc_G(\pi) = w$. This observation
motivates the following definition: if there is some path $\pi$
starting at $q$ with $\Lc_G(\pi) = w$, we define
$q \cdot w \triangleq t(\pi)$, otherwise, if there is no such $\pi$,
we leave $q \cdot w$ undefined.
We call $\,\cdot\,$ the \term{transition action}.
Because of determinism, the transition action is
a well-defined partial operation between the vertices of $G$ and
words over the alphabet of $G$, and 
$q \cdot w$ is defined if and only if $w \in
F_G(q)$.
The transition action satisfies the following useful properties, for any state $q\in Q_G$ and words $u,v \in \Ac_G^*$:
\begin{enumerate}[(i)]
\item $uv \in F_G(q)$ if and only if $u \in F_G(q)$ and
  $v \in F_G(q \cdot u)$;
\item if $uv \in F_G(q)$, then $q \cdot uv = (q \cdot u) \cdot v$.
\end{enumerate}
When $G$ is fully deterministic, $q \cdot w$ is defined for all $q$
and $w \in \Ac_G^*$, as then $F_G(q) = \Ac_G^*$.

The transition
action naturally extends to a total operation between subsets of
vertices of $G$ and words over the alphabet of $G$: for each subset
$S \subseteq Q_G$ of vertices and word $w$, we set
$S \cdot w \triangleq \{\, q \cdot w : q \in S, w \in F_G(q) \,\}$.
Every sofic shift therefore has a deterministic presentation \cite[Theorem
3.3.2]{lind2021introduction}, using the same idea as
the subset construction from automata theory~\cite{kozen2012automata}.
By definition, the transition action on subsets is monotonic and
distributes over union:
$S \subseteq T$ implies $S \cdot w \subseteq T \cdot w$
and $(S \cup T) \cdot w = (S \cdot w) \cup (T \cdot w)$
for all $S$, $T$, and $w$.

\subsection{Types of presentations}

Let $G$ be a deterministic labeled graph and let $w$ be a word. We say $w$ is
\term{synchronizing} for $G$ if $Q_G \cdot w = \{ r \}$ for some vertex
$r \in Q_G$. In this case, we say $w$ synchronizes to $r$ (in $G$).
We say a vertex $q$ is synchronizing if there is a word that
synchronizes to $q$.
We say $G$ is synchronizing if every vertex in $G$ is synchronizing.
Let $X$ be a shift space. An
\term{intrinsically synchronizing word} $w$ for $X$ is a word
$w \in \Bc(X)$ such that whenever $uw, wv \in \Bc(X)$, then
$uwv \in \Bc(X)$. If $w$ is synchronizing for $G$, then $w$ is
intrinsically synchronizing for $\shift{G}$, but the
converse need not hold; see Lemma~\ref{lem:sync-word-correspondence}.

Let $X$ be a shift space. We say $X$ is \term{irreducible} if for
every $u, v \in \Bc(X)$, there is a word $w$ such that
$uwv \in \Bc(X)$; if $X$ is not irreducible, then we say $X$ is
\term{reducible}. For a graph $G$, we say $G$
is \term{irreducible} (or \term{strongly connected}) if for every pair of vertices $p$ and $q$, there is a
path starting at $p$ and ending at $q$.
If $G$ is not irreducible, we say $G$ is reducible.
One can easily show that if $G$
is irreducible, then $\shift{G}$ is irreducible. However, $\shift{G}$ may be irreducible even if $G$ is reducible. (See Figure~\ref{fig:example}.)

Let $G$ be a graph, and let $p$ and $q$ be vertices in $G$. We say $q$
is \term{reachable} from $p$ if there is a path starting at $p$ and
ending at $q$.
Under the equivalence relation where $p \approx q$ when $q$ is reachable from $p$ and $q$ is reachable from $q$, the equivalence
classes are called \term{irreducible components} as the subgraphs
induced by them are irreducible.  We say an irreducible
component $C$ is \term{initial} if whenever $q$ is reachable from $p$
and $q \in C$, then $p \in C$. Dually, we say a irreducible component
$C$ is \term{terminal} if whenever if $q$ is reachable from $p$ and
$p \in C$, then $q \in C$.

Let $G$ be a labeled graph. We say two
vertices $p$ and $q$ in $G$ are \term{follower-equivalent} if
$F_G(p) = F_G(q)$, an equivalence relation $\sim$.
We say $G$ is \term{follower-separated} if no
distinct pair of vertices are follower equivalent. 
Given a
labeled graph $G$, the \term{follower-separation} of $G$ is the the
labeled graph $G/{\sim}$ whose vertices are the follower-equivalence
classes of $G$ and with exactly one edge labeled $a$ between two
classes $C_1$ and $C_2$ if and only if there is an edge labeled $a$ in
$G$ from a vertex in $C_1$ to a vertex in $C_2$. Informally, the
follower-separation of $G$ collapses vertices in a given
follower-equivalence class into a single vertex.  The
follower-separation of $G$ enjoys the following properties: we have
$G/{\sim}$ is follower-separated and $\shift{G} = \shift{G/{\sim}}$;
if $G$ is deterministic, then $G/{\sim}$ is deterministic; if $G$ is
essential, then $G/{\sim}$ is essential \cite[Lemma 3.3.8]{lind2021introduction};
if $G$ is synchronizing, then $G/{\sim}$ is synchronizing
\cite[Proposition 4.3]{jonoska1996sofic}.
In particular, every sofic shift has a follower-separated, deterministic
presentation.

The notion of follower-equivalence is similar to the notion of
equivalent states in deterministic finite automata (DFA; see Sections \ref{sec:lit} and \ref{sec:complexity-lower-bounds}).
In fact, one may reduce the problem of computing
follower-equivalence to computing equivalent states in DFA, as follows.
Add a ``sink'' state to $G$, and edges to
the sink state to make $G$ fully deterministic (c.f. Section \ref{sec:find-synchr-words}).
Now consider the resulting graph as a DFA, with an arbitrary initial state, and where every state but the sink state is an accepting state.
One can show that two
states in $G$ are follower-equivalent if and only if they are
equivalent as states in the constructed DFA.
Therefore, one can use
Hopcroft's algorithm for state equivalence in DFAs to
compute follower-equivalences in polynomial time~\cite{hopcroft1971n}.

\subsection{Basic results}

In this section, we discuss several useful facts which we use throughout the paper.
To begin, the following statements about a deterministic presentation $G$ establish
basic relationships between its transition action, follower sets $F_G$ and $F_{\shift{G}}$, the language $\Bc(\shift{G})$, and synchronizing words for $G$.
\footnote{Recall that a presentation refers to
  an \textit{essential} labeled graph, a necessary condition
  here, as these statements do not necessarily
  hold if $G$ is nonessential.}
The statements follow immediatly from the definitions.

\begin{proposition}\label{prop:basic}
  Let $G$ be a deterministic presentation.
  Then, we have
  \begin{enumerate}[(i)]
  \item $w \in \Bc(\shift{G})$ if and only if $Q_G \cdot w \neq \varnothing$;
  \item $\Bc(\shift{G}) = \bigcup_{q \in Q} F_G(q)$;
  \item $F_{\shift{G}}(w) = \bigcup_{q \in Q \cdot w} F_G(q)$;
  \item if $w$ is synchronizes to $r$ in $G$, then $F_{\shift{G}}(w) = F_G(r)$;
  \item if $w$ is intrinsically synchronizing for $\shift{G}$ and $w \in F_{\shift{G}}(u)$, then $F_{\shift{G}}(uw) = F_{\shift{G}}(w)$.
  \end{enumerate}
\end{proposition}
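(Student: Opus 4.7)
The plan is to dispatch the five items in order, since each later item builds on the earlier ones, and to rely on the stated fact that when $G$ is essential, every finite path $\pi$ in $G$ has $\Lc_G(\pi) \in \Bc(\shift{G})$, together with the transition-action identities $(uv \in F_G(q)) \Leftrightarrow (u \in F_G(q)$ and $v \in F_G(q\cdot u))$.

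For (i) and (ii), the core observation is that $\Bc(\shift{G})$ is exactly the set of labels of finite paths of $G$: containment in $\Bc(\shift{G})$ gives a path inside a bi-infinite one, and conversely essentiality lets us extend any finite path to a bi-infinite path. This rewrites as $\Bc(\shift{G}) = \bigcup_{q \in Q_G} F_G(q)$, which is (ii). Then (i) follows because, by definition, $w \in F_G(q)$ iff $q \cdot w$ is defined, so $w$ lies in some $F_G(q)$ iff $Q_G \cdot w \neq \varnothing$.

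For (iii), I would unfold the definition $F_{\shift{G}}(w) = \{v : wv \in \Bc(\shift{G})\}$ and apply (ii). A word $wv$ lies in $\Bc(\shift{G})$ iff there is some $p \in Q_G$ with $wv \in F_G(p)$, which by the transition-action identity is iff $w \in F_G(p)$ and $v \in F_G(p \cdot w)$. The set of vertices $p \cdot w$ obtained as $p$ ranges over $\{p : w \in F_G(p)\}$ is exactly $Q_G \cdot w$, giving the claimed union. Item (iv) is then just the special case $Q_G \cdot w = \{r\}$.

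The only slightly different piece is (v), which unpacks the definition of intrinsically synchronizing. For the inclusion $F_{\shift{G}}(uw) \subseteq F_{\shift{G}}(w)$, if $uwv \in \Bc(\shift{G})$ then $wv \in \Bc(\shift{G})$ since shift languages are closed under taking subwords. For the reverse, assume $v \in F_{\shift{G}}(w)$, so $wv \in \Bc(\shift{G})$; we also have $uw \in \Bc(\shift{G})$ by hypothesis $w \in F_{\shift{G}}(u)$; intrinsic synchronization then yields $uwv \in \Bc(\shift{G})$, i.e.\ $v \in F_{\shift{G}}(uw)$. There is no real obstacle here — the entire proposition is a translation exercise between the transition action, follower sets, and the language — so the only care needed is to invoke essentiality when moving between ``path in $G$'' and ``word in $\Bc(\shift{G})$.''
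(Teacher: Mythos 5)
Your proof is correct; the paper itself gives no argument, stating only that the claims ``follow immediately from the definitions,'' and your unfolding via essentiality, the transition-action identities, and the factorial property of $\Bc(\shift{G})$ is exactly the intended routine verification. The one point worth being explicit about, which you do handle implicitly, is that in (iii) both sides are empty when $w \notin \Bc(\shift{G})$, so the identity holds for all words $w$.
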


Next, we review results of \citet{jonoska1996sofic} about synchronizing deterministic presentations.
First, we state a useful result about
the correspondence between synchronizing and intrinsically synchronizing
words in synchronizing deterministic presentations. For
deterministic presentations in general, only the forward implication
of this result holds. The following is essentially
Proposition 9.5 of \citet{jonoska1996sofic}.

\begin{lemma}\label{lem:sync-word-correspondence}
  Let $G$ be a follower-separated, synchronizing deterministic presentation.
  Then, $w$ is synchronizing
  for $G$ if and only if $w$ is intrinsically synchronizing for $\shift{G}$.
  \footnote{We remind the reader that our notion of synchronization is different from ``careful'' sychronization from automata theory; see ``{Synchronizing words}'' in \S~\ref{sec:lit}.}
\end{lemma}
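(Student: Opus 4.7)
The plan is to prove both directions, using Proposition~\ref{prop:basic} throughout. The forward direction (synchronizing for $G$ implies intrinsically synchronizing for $\shift{G}$) is essentially a direct calculation and should not require the synchronizing or follower-separated hypotheses; the main work lies in the converse.

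For the forward direction, I would assume $Q_G \cdot w = \{r\}$ and take any $u,v$ with $uw, wv \in \Bc(\shift{G})$. From $uw \in \Bc(\shift{G})$ together with part~(i) of Proposition~\ref{prop:basic}, $Q_G \cdot uw$ is nonempty; since $Q_G \cdot uw = (Q_G \cdot u) \cdot w \subseteq Q_G \cdot w = \{r\}$, it must equal $\{r\}$. From $wv \in \Bc(\shift{G})$ and part~(iv), $v \in F_{\shift{G}}(w) = F_G(r)$, so $r \cdot v$ is defined, giving $Q_G \cdot uwv = \{r \cdot v\} \neq \varnothing$, i.e., $uwv \in \Bc(\shift{G})$.

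For the converse, assume $w$ is intrinsically synchronizing for $\shift{G}$. The goal is to show $|Q_G \cdot w| = 1$. Since $w \in \Bc(\shift{G})$, the set $Q_G \cdot w$ is nonempty by part~(i). The key step is to show that every $p \in Q_G \cdot w$ satisfies $F_G(p) = F_{\shift{G}}(w)$, after which follower-separatedness of $G$ will force $|Q_G \cdot w| = 1$. To do this, pick any $p \in Q_G \cdot w$ and choose $p_0 \in Q_G$ with $p_0 \cdot w = p$ (guaranteed by the definition of the transition action on subsets). Since $G$ is synchronizing, there is a word $\alpha$ with $Q_G \cdot \alpha = \{p_0\}$, and hence $Q_G \cdot \alpha w = \{p\}$, so $\alpha w$ synchronizes to $p$ in $G$; in particular $\alpha w \in \Bc(\shift{G})$ and, by part~(iv), $F_{\shift{G}}(\alpha w) = F_G(p)$.

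Finally, since $w$ is intrinsically synchronizing and $w \in F_{\shift{G}}(\alpha)$, part~(v) gives $F_{\shift{G}}(\alpha w) = F_{\shift{G}}(w)$. Combining the last two equalities yields $F_G(p) = F_{\shift{G}}(w)$ for every $p \in Q_G \cdot w$, and follower-separatedness then collapses $Q_G \cdot w$ to a single vertex, proving $w$ synchronizes in $G$. The main obstacle is the converse: specifically, recognizing that one can convert each element of $Q_G \cdot w$ into a synchronizing word for the entire subset via a synchronizing prefix $\alpha$, so that the intrinsically synchronizing property (part~(v)) can be invoked to equate follower sets inside $\shift{G}$ with follower sets of individual vertices in $G$.
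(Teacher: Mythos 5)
Your proof is correct and follows essentially the same route as the paper's: both directions hinge on the same ideas (for the converse, prefixing by a word that synchronizes to a preimage state, invoking the intrinsically-synchronizing hypothesis, and finishing with follower-separation). The only cosmetic difference is that you compare each $F_G(p)$ for $p \in Q_G \cdot w$ to the single canonical set $F_{\shift{G}}(w)$ via Proposition~\ref{prop:basic}(iv)--(v), whereas the paper performs a pairwise double-inclusion argument on $F_G(p\cdot w)$ and $F_G(q\cdot w)$ directly from the definition of an intrinsically synchronizing word.
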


\begin{proof}
  Suppose $w$ is synchronizing for $G$, and let $uw, wv \in \Bc(\shift{G})$.
  As $w$ is synchronizing for $G$, by Proposition~\ref{prop:basic}(iv),
  it follows that there is a vertex $r$ such
  that $F_{\shift{G}}(uw) = F_G(r)$ and $v \in F_G(r)$. Thus, we have $v \in F_G(uw)$ so
  $uwv \in \Bc(\shift{G})$.

  Conversely, suppose $w$ is intrinsically synchronizing for
  $\shift{G}$.  Let $p$ and $q$ be vertices in $G$ with $w \in F_G(p)$
  and $w \in F_G(q)$. As $G$ is synchronizing, let $u_p$ and $u_q$ be
  words synchronizing to $p$ and $q$ in $G$, respectively.
  We next show that $F_G(p \cdot w) \subseteq F_G(q \cdot w)$.
  Let $v \in F_G(p \cdot w)$, so that $wv \in \Bc(\shift{G})$. As $u_q$
  synchronizes to $q$ in $G$ and $w \in F_G(q)$, we have
  $u_q w \in \Bc(\shift{G})$. As $w$ is intrinsically synchronizing
  for $\shift{G}$, we have $u_q w v \in \Bc(\shift{G})$,
  i.e.  $Q_G \cdot u_q w v \neq \varnothing$.  But as
  $Q_G \cdot u_q w = \{q \cdot w\}$, we have
  $v \in F_G(p \cdot w)$.
  Thus, $F_G(p \cdot w) \subseteq F_G(q \cdot w)$; moreover, the same argument swapping the roles of $p$ and $q$ gives $F_G(q \cdot w) \subseteq F_G(p \cdot w)$ and therefore $F_G(p \cdot w) = F_G(q \cdot w)$.
  As $G$ is follower-separated, we conclude $p \cdot w = q \cdot w$, implying that $w$ is synchronizing for $G$.
\end{proof}

The following characterization of when
a sofic shift has a synchronizing deterministic presentation is slightly modified from Theorem 8.13 and Corollary 9.6 in \citet{jonoska1996sofic}.

\begin{theorem}\label{thm:sdp-char}
  Let $X \subseteq \Sigma^\mathbb{Z}$ be a sofic shift. Then, $X$ has a
  synchronizing deterministic presentation if and only if
  for every $u \in \Bc(X)$ there is an intrinsically
  synchronizing word $w$ for $X$ such that $u \in F_X(w)$.
\end{theorem}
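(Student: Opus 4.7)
The plan is to prove each direction separately, with the reverse implication requiring an explicit construction of a synchronizing deterministic presentation modeled on the Krieger cover.

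For the forward direction, suppose $X = \shift{G}$ for a synchronizing deterministic presentation $G$. By passing to its follower-separation $G/{\sim}$, I may assume $G$ is follower-separated (this preserves determinism, essentialness, and synchronization, and presents the same shift). Given $u \in \Bc(X)$, Proposition~\ref{prop:basic}(ii) provides a vertex $q$ with $u \in F_G(q)$, and since $G$ is synchronizing, there is a word $w$ synchronizing to $q$. Lemma~\ref{lem:sync-word-correspondence} then shows $w$ is intrinsically synchronizing for $X$, and Proposition~\ref{prop:basic}(iv) gives $F_X(w) = F_G(q) \ni u$, as required.

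For the reverse direction, I build a graph $G$ with vertex set $V \triangleq \{F_X(w) : w \text{ intrinsically synchronizing for } X\}$; for each vertex $F_X(w) \in V$ and each symbol $a$ with $wa \in \Bc(X)$, add an edge labeled $a$ from $F_X(w)$ to $F_X(wa)$. Two preliminary facts make this well-posed: (i) if $F_X(w) = F_X(w')$, then $F_X(wa) = F_X(w'a)$, since both equal $\{v : av \in F_X(w)\}$, so the edge does not depend on the chosen representative; and (ii) any extension $wa$ of an intrinsically synchronizing $w$ with $wa \in \Bc(X)$ is itself intrinsically synchronizing (apply the defining property of $w$ to a prefix $u(wa)$ and suffix $(wa)v$), so edges stay in $V$. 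By construction $G$ is deterministic.

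The three remaining properties to check are as follows. \emph{Essential:} outgoing edges at each vertex exist by prolongability of $\Bc(X)$, and incoming edges are obtained by applying the hypothesis to $u = w$ to find an intrinsically synchronizing $w^*$ with $w^* w \in \Bc(X)$; the path labeled $w$ from $F_X(w^*)$ then has its terminal edge landing at $F_X(w^* w) = F_X(w)$ by Proposition~\ref{prop:basic}(v), after choosing a representative $w$ of positive length (always available, as the only alternative forces $X$ to be a full shift, handled by a trivial one-vertex presentation). \emph{$\shift{G} = X$:} the inclusion $\Bc(\shift{G}) \subseteq \Bc(X)$ is immediate from reading off path labels and applying factoriality, and the reverse uses the hypothesis to realize any $v \in \Bc(X)$ as the label of a path starting at $F_X(w)$ for some intrinsically synchronizing $w$ with $v \in F_X(w)$. \emph{Synchronizing:} for each intrinsically synchronizing $\sigma$, the word $\sigma$ itself synchronizes $G$ to $F_X(\sigma)$; for any vertex $F_X(w') \in V$ at which $\sigma$ is defined (so $w'\sigma \in \Bc(X)$), the path labeled $\sigma$ from $F_X(w')$ ends at $F_X(w'\sigma)$, and Proposition~\ref{prop:basic}(v) applied to the intrinsically synchronizing $\sigma$ gives $F_X(w'\sigma) = F_X(\sigma)$, so all such paths funnel into the single vertex $F_X(\sigma)$.

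The main obstacle is organizing the reverse direction cleanly---verifying essentialness requires some care about which representative of a vertex one uses (with a small degenerate case when $\epsilon$ represents a vertex, which forces $X$ to be a full shift), and the synchronizing verification hinges on the conceptual insight that an intrinsically synchronizing $\sigma$ collapses the follower set $F_X(w'\sigma)$ to $F_X(\sigma)$ independently of the prefix $w'$, which is precisely Proposition~\ref{prop:basic}(v).
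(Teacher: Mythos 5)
Your proof is correct and follows essentially the same route as the paper's: the forward direction via Proposition~\ref{prop:basic}(ii),(iv) and Lemma~\ref{lem:sync-word-correspondence}, and the reverse direction via the same follower-set construction on intrinsically synchronizing words, with the same well-definedness facts and the same use of Proposition~\ref{prop:basic}(v) to verify synchronization. The only differences are cosmetic: you additionally verify essentialness explicitly (which the paper leaves implicit), while omitting the paper's remark that the vertex set is finite because a sofic shift has only finitely many follower sets.
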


\begin{proof}  
  Let $G$ be a synchronizing deterministic presentation for $X$,
  and let $w \in B(X)$. By Proposition~\ref{prop:basic}(ii), there is a vertex $q$
  such that $w \in F_G(q)$. As $G$ is synchronizing, there is
  a word $u$ that synchronizes to $q$. By Proposition~\ref{prop:basic}(iv), we have
  $F_X(u) = F_G(q)$, so $w \in F_X(u)$.
  By Lemma~\ref{lem:sync-word-correspondence}, we also have that $u$
  is intrinsically synchronizing for $X$.

  Conversely, suppose for every $u \in \Bc(X)$ there is an intrinsically
  synchronizing word $w$ for $X$ such that $u \in F_X(w)$.  Let
  $\mathcal{C}$ be the collection of the follower sets of intrinsically
  synchronizing words for $X$, i.e.
  \[\mathcal{C} \triangleq
    \{\,F_X(w) : w \text{ is intrinsically synchronizing for $X$} \,\}.\]
  This collection is finite since the collection of
  all follower sets of a sofic shift is finite \cite[Theorem 3.2.10]{lind2021introduction}.
  We will construct a
  synchronizing deterministic presentation $G$ whose vertex set is
  $\mathcal{C}$. For each $a \in \Sigma$ and $F_X(w) \in \mathcal{C}$,
  if $a \in F_X(w)$, add an edge labeled $a$ from $F_X(w)$ to
  $F_X(wa)$. This definition is well-defined, i.e., does not depend on the choice of $w$, by the following two facts, both
  assuming $a \in F_X(w)$: if $F_X(w) = F_X(w')$, then $F_X(wa) = F_X(w'a)$, and
  $wa$ is intrinsically synchronizing (so that
  $F_X(wa) \in \mathcal{C}$).  By construction, $G$ is deterministic.
  One can also establish the following properties of $G$:
  for $F_X(u) \in \mathcal{C}$, we have $F_G(F_X(u)) = F_X(u)$, and if
  $w \in F_X(u)$, then $F_X(u) \cdot w = F_X(uw)$.

  We next show that $G$ is synchronizing.
  Let
  $F_X(w) \in \mathcal{C}$, so that $w$ is intrinsicaly synchronizing
  for $X$. We will show that $w$ synchronizes to $F_X(w)$ in $G$. Let
  $F_X(u) \in \mathcal{C}$, and suppose $w \in F_G(F_X(u))$.
  As $w$ is intrinsically synchronizing and $w \in F_X(u)$,
  by Proposition \ref{prop:basic}, we have that $F_X(uw) = F_X(w)$.
  This implies that $F_X(u) \cdot w = F_X(uw) = F_X(w)$. Thus, for any
  $F_X(u) \in \mathcal{C}$ with $w \in F_G(F_X(u))$, we have $F_X(u) \cdot w = F_X(w)$, so
  $w$ synchronizes to $F_X(w)$ in $G$.

  It remains to show $X = \shift{G}$. By
  construction, the follower set of a vertex in $G$ is a follower set
  of a word in $X$, so $\Bc(\shift{G}) \subseteq \Bc(X)$. Conversely, let
  $u \in \Bc(X)$.  By our initial assumption, there is an
  intrinsically synchronizing word $w$ for $X$ with $u \in F_X(w)$. As
  $F_G(F_X(w)) = F_X(w)$, we have $u \in F_G(F_X(w))$, i.e., there is a
  vertex $q$ in $G$ such that $u \in F_G(q) \subseteq \Bc(\shift{G})$. Thus, we have $\Bc(X) \subseteq \Bc(\shift{G})$
  and consequently $\Bc(X) = \Bc(\shift{G})$.
\end{proof}

The next result says when the sofic shift presented by a (possibly
reducible) deterministic presentation is irreducible. The forward implication of
this result follows from Lemma 6.4 of \citet{jonoska1996sofic},
and the
reverse implication follows immediately from the irreducibility of
$H$.

\begin{theorem}\label{thm:irred-char}
  Let $G$ be follower-separated, deterministic presentation. Let
  $H$ be the subgraph induced by the synchronizing vertices of $G$. Then,
  $\shift{G}$ is irreducible if and only if $\shift{G} = \shift{H}$
  and $H$ is induced by a terminal irreducible component.
\end{theorem}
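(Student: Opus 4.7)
The plan is to dispatch the reverse direction immediately and spend the effort on the forward direction. For the reverse, $H$ induced by a terminal irreducible component is strongly connected, so $\shift{H}$ is irreducible; combined with the hypothesis $\shift{G} = \shift{H}$, this yields irreducibility of $\shift{G}$.

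For the forward direction, assume $\shift{G}$ is irreducible, and let $S$ be the set of synchronizing vertices so that $H$ is the subgraph induced by $S$. I would first show $S \neq \varnothing$. The idea is to exploit that irreducible sofic shifts admit intrinsically synchronizing words: pick such a $w \in \Bc(\shift{G})$ and use Proposition~\ref{prop:basic}(iii) together with the identity $F_{\shift{G}}(uw) = F_{\shift{G}}(w)$ from Proposition~\ref{prop:basic}(v) to argue that every state in $Q_G \cdot w$ has the same follower set; then follower-separation of $G$ collapses $Q_G \cdot w$ to a single vertex, making $w$ synchronizing for $G$. This is the step I expect to be the main obstacle, since the corresponding direction of Lemma~\ref{lem:sync-word-correspondence} is stated only once $G$ is already known synchronizing, so bootstrapping it here requires a careful chase through the various predecessors of $w$ in $G$.

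Next I would show $S$ is a single terminal strongly connected component. For mutual reachability, given $p, q \in S$ with synchronizing words $u_p, u_q$, pick any letter $a$ with $p \cdot a$ defined (exists by essentiality of $G$); by irreducibility of $\shift{G}$ there is $v$ with $u_p a v u_q \in \Bc(\shift{G})$, and since monotonicity of the transition action gives $Q_G \cdot u_p a v u_q \subseteq Q_G \cdot u_q = \{q\}$, which is nonempty by Proposition~\ref{prop:basic}(i), we conclude $p \cdot a v u_q = q$, exhibiting a path $p \to q$ in $G$; symmetry gives the reverse. Terminality is immediate: if $q \in S$ via $u_q$ and $v$ labels a path $q \to p$ in $G$, then $Q_G \cdot u_q v = \{q\} \cdot v = \{p\}$, so $u_q v$ synchronizes to $p$ and $p \in S$.

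Finally, for $\shift{G} = \shift{H}$ it suffices to show $\Bc(\shift{G}) \subseteq \Bc(\shift{H})$, since the reverse inclusion is trivial and languages determine shift spaces. Given $u \in \Bc(\shift{G})$ and any synchronizing word $u_r$ for $G$ with $Q_G \cdot u_r = \{r\} \subseteq Q_H$, irreducibility of $\shift{G}$ produces $w_1$ with $u_r w_1 u \in \Bc(\shift{G})$; any path in $G$ realizing this word sits at $r$ after its $u_r$ prefix and, by terminality of $S$, remains in $H$ thereafter, so $u$ labels a path in $H$ and hence lies in $\Bc(\shift{H})$.
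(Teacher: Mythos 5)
The paper itself does not write out a proof of this theorem---it cites Lemma 6.4 of Jonoska for the forward implication and notes that the reverse is immediate from the irreducibility of $H$---so your proposal is a self-contained argument rather than a variant of something in the text. Most of it is correct: the reverse direction, the proof that the set $S$ of synchronizing vertices is mutually reachable and forward-closed (hence a single terminal irreducible component), and the proof of $\Bc(\shift{G}) \subseteq \Bc(\shift{H})$ by prefixing with a synchronizing word and invoking terminality, all go through.

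The genuine gap is exactly the step you flag as the main obstacle: nonemptiness of $S$. The route you propose---take an intrinsically synchronizing word $w$ for the irreducible shift $\shift{G}$ and argue via Proposition~\ref{prop:basic}(iii) and (v) that all states of $Q_G \cdot w$ share a follower set---cannot be completed, because that intermediate claim is false under the theorem's hypotheses. Concretely, let $G$ have states $A,B,C$ with edges $A\xrightarrow{a}A$, $A\xrightarrow{b}A$, $B\xrightarrow{a}C$, $C\xrightarrow{b}B$. This is an essential, deterministic, follower-separated presentation of the full shift on $\{a,b\}$, which is irreducible, and every word is intrinsically synchronizing for the full shift; yet $Q_G\cdot a=\{A,C\}$ with $F_G(A)\neq F_G(C)$, so $a$ is not synchronizing. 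Proposition~\ref{prop:basic}(iii) only gives that each $F_G(q)$ for $q\in Q_G\cdot w$ is \emph{contained} in $F_{\shift{G}}(w)$, not that these sets coincide, and no amount of chasing predecessors repairs this. The fix is to drop intrinsically synchronizing words entirely: choose $w$ minimizing $|Q_G\cdot w|$ subject to $|Q_G\cdot w|\geq 1$ (the empty word qualifies, so the minimum exists). If $p,q\in Q_G\cdot w$ were distinct, follower-separation gives, say, $v\in F_G(p)\setminus F_G(q)$, and then $Q_G\cdot wv=(Q_G\cdot w)\cdot v$ is nonempty (it contains $p\cdot v$) but has strictly fewer elements (it omits any contribution from $q$), contradicting minimality; hence $|Q_G\cdot w|=1$. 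This argument needs neither irreducibility nor intrinsic synchronization, and with it the remainder of your proof stands.
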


Finally, we state some facts about SFTs. The first is a
characterization of when a shift space is an SFT, and the second is a
sufficient condition for when $\shift{G}$ is an SFT for a presentation
$G$.  Respectively, these correspond to Theorem 2.1.8 and Proposition
2.2.6 of \citet{lind2021introduction}.

\begin{theorem}\label{thm:sft-char}
  A shift space $X$ is an SFT if and only if there exists an
  integer $M \geq 0$ such that every word $w \in \Bc(X)$ with
  $|w| \geq M$ is intrinsically synchronizing for $X$.
\end{theorem}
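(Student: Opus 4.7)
The plan is to prove the two directions separately: the forward one by a gluing argument on bi-infinite points, and the reverse by exhibiting a finite set of forbidden patterns of length $M+1$.

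For the forward direction, fix finite $\Fc$ with $X = \shift{\Fc}$ and set $M \triangleq \max_{f \in \Fc} |f|$. Given $w \in \Bc(X)$ with $|w| \geq M$ and $u, v$ with $uw, wv \in \Bc(X)$, I would choose bi-infinite points $x, y \in X$ each containing the specified block, then translate $y$ so that its copy of $w$ lies at the same index range $[a, b]$ as the copy in $x$. Define $z \in \Sigma^\Zb$ by $z_i = x_i$ for $i \leq b$ and $z_i = y_i$ for $i \geq a$; the two definitions agree on $[a, b]$ because both spell out $w$. The key observation is that $|w| \geq M$ forces every length-at-most-$M$ window of $z$ to sit entirely inside $x$ or entirely inside $y$, so no pattern from $\Fc$ appears in $z$; hence $z \in X$, and since $uwv$ appears in $z$ we conclude $uwv \in \Bc(X)$, i.e., $w$ is intrinsically synchronizing for $X$.

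For the reverse direction, assume such $M$ exists and put $\Fc \triangleq \Sigma^{M+1} \setminus \Bc(X)$, which is finite. Then $X \subseteq \shift{\Fc}$ is immediate. For the other inclusion, take $z \in \shift{\Fc}$ and show by induction on $n \geq M+1$ that every length-$n$ subword of $z$ belongs to $\Bc(X)$. The base case $n = M+1$ is the definition of $\Fc$; for the step, write a length-$(n+1)$ subword as $awb$ with $|w| = n - 1 \geq M$, so $w$ is intrinsically synchronizing, and then $aw, wb \in \Bc(X)$ by the inductive hypothesis gives $awb \in \Bc(X)$. Factoriality of $\Bc(X)$ handles subwords of length less than $M+1$, so every subword of $z$ lies in $\Bc(X)$; that is, $z \in X$, proving $X = \shift{\Fc}$ is an SFT.

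The main obstacle is calibrating the gluing in the forward direction: the alignment of the translates of $x$ and $y$ must be precise enough that the two copies of $w$ coincide index-for-index, and one must verify that no length-at-most-$M$ window of the spliced point $z$ straddles the boundary between the pre-$w$ portion of $x$ and the post-$w$ portion of $y$. This is exactly what the hypothesis $|w| \geq M = \max_f |f|$ buys. The reverse direction is then a routine induction, once one notices the middle-block decomposition $awb$ whose overlap $w$ is long enough to be intrinsically synchronizing.
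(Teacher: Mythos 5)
Your proof is correct; note that the paper does not prove this statement itself but cites it as Theorem 2.1.8 of Lind and Marcus, and your argument is precisely the standard one given there: the splicing of two points along a shared occurrence of $w$ with $|w| \geq \max_{f\in\Fc}|f|$ for the forward direction, and the forbidden set $\Sigma^{M+1}\setminus\Bc(X)$ with the induction on block length via the decomposition $awb$ for the converse. The only cosmetic caveat is the degenerate case ($\Fc=\varnothing$, or $M=0$ where the middle block in the inductive step is empty), which is handled by replacing $M$ with $\max(M,1)$.
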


\begin{lemma}\label{lem:sft-edge}
  Let $G$ be a presentation. If every edge in $G$ is labeled uniquely,
  then $\shift{G}$ is an SFT.
\end{lemma}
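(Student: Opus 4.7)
The plan is to exhibit a finite forbidden set $\Fc$ of words such that $\shift{G} = \shift{\Fc}$, which by definition makes $\shift{G}$ an SFT. A natural choice is to take $\Fc \triangleq \Ac_G^2 \setminus \Bc(\shift{G})$, the set of length-2 words over the alphabet of $G$ that do not appear in $\shift{G}$. Since $\Ac_G$ is finite, $\Fc$ is automatically finite, so the only substantive step is proving $\shift{G} = \shift{\Fc}$.

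The forward inclusion $\shift{G} \subseteq \shift{\Fc}$ is immediate: any $x \in \shift{G}$ has all of its subwords in $\Bc(\shift{G})$, so in particular none of its length-2 subwords lie in $\Fc$. For the reverse inclusion, I would exploit unique labeling to reconstruct a bi-infinite path from any $x \in \shift{\Fc}$. The key observation is that because every edge of $G$ carries a distinct label, each symbol $a \in \Ac_G$ that appears in $\Bc(\shift{G})$ determines a unique edge $e_a \in E_G$ with $\Lc_G(e_a)=a$; moreover, a length-2 word $ab$ lies in $\Bc(\shift{G})$ precisely when $t_G(e_a) = i_G(e_b)$ (one direction is clear, and the other holds because any path labeled $ab$ must traverse $e_a$ then $e_b$).

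Given $x \in \shift{\Fc}$, each symbol $x_i$ must appear in $\Bc(\shift{G})$ (since $G$ is essential, singletons in $\Ac_G$ all lie in $\Bc(\shift{G})$, so the membership of $x_i$ is forced by, say, considering $x_i x_{i+1} \not\in \Fc \subseteq \Ac_G^2$). Hence each $x_i$ determines a unique edge $e_i \triangleq e_{x_i}$. Because $x_i x_{i+1} \notin \Fc$, the pair $x_i x_{i+1}$ lies in $\Bc(\shift{G})$, and so by the observation above, $t_G(e_i) = i_G(e_{i+1})$ for every $i \in \Zb$. Therefore $(e_i)_{i \in \Zb}$ is a bi-infinite path in $G$ whose label is $x$, giving $x \in \shift{G}$.

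The argument is short and the main point to be careful about is the step ``each $x_i$ lies in $\Bc(\shift{G})$''; this uses essentiality of $G$ (our standing convention for presentations) to guarantee that every symbol appearing in any putative bi-infinite sequence over $\Ac_G$ is actually realized by some edge-path. No other obstacle arises, since the uniqueness of edge labels turns the local consistency condition encoded by $\Fc$ directly into global path-concatenation in $G$.
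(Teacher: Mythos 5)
Your proof is correct. The paper does not actually prove this lemma; it simply cites Proposition 2.2.6 of Lind and Marcus, and your argument is essentially the standard one behind that citation: with uniquely labeled edges, $\shift{G}$ is determined by its allowed $2$-blocks, so forbidding $\Fc = \Ac_G^2 \setminus \Bc(\shift{G})$ (a finite set) recovers $\shift{G}$ exactly, and the reverse inclusion goes through because unique labeling plus essentiality turns the local condition $x_i x_{i+1} \in \Bc(\shift{G})$ into the path-concatenation condition $t_G(e_i) = i_G(e_{i+1})$. Both places where essentiality is needed (every symbol of $\Ac_G$ lies in $\Bc(\shift{G})$, and every finite path extends to a bi-infinite one) are correctly flagged. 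The only cosmetic point is that $\shift{\Fc}$ should be read inside $\Ac_G^\Zb$; if the ambient alphabet were larger one would also forbid the finitely many stray symbols, which does not affect finiteness of $\Fc$.
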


\section{Complexity Upper Bounds and Algorithms}
\label{sec:compl-upper-bounds}

In this section we detail polynomial-time algorithms for some problems in Table~\ref{tab:problems}.
In particular, we give polynomial-time algorithms for \sftp and \equalp for synchronizing deterministic presentations, and additionally for \syncp and \inclusionp for irreducible presentations.
We also give a polynomial-time algorithm to test whether a given deterministic presentation is synchronizing.
Some algorithms follow from known results, whereas others, to our knowledge, are novel to this work.

\subsection{Finding synchronizing words}
\label{sec:find-synchr-words}

\citet{eppstein1990reset} gives a polynomial-time algorithm for finding
synchronizing words in fully deterministic presentations.
Here, we show the algorithm can be extended to irreducible deterministic presentations, implying that \syncp is in \cc{P} for such presentations.
As we show in Theorem~\ref{thm:exist-sync-hard}, \syncp is \cc{PSPACE}-complete for general presentations.

\begin{theorem}\label{thm:sw-correct}
  Given an irreducible deterministic labeled graph $G$,
  Algorithm~\ref{alg:sw} returns a synchronizing word for $G$
  if one exists, and nil otherwise.  
\end{theorem}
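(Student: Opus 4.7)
I will present Algorithm~\ref{alg:sw} as an Eppstein-style pair-merging procedure adapted to partial transitions. The algorithm maintains a current active set $S \subseteq Q_G$ and an accumulated word $w \in \Ac_G^*$ with loop invariant $S = Q_G \cdot w$, initially $S = Q_G$ and $w = \epsilon$. At each iteration with $|S| \geq 2$, it performs a breadth-first search in the pair-product graph (whose vertices are the singletons and unordered pairs from $Q_G$, with edges induced by the set-valued transition action) to find a pair $\{p,q\} \subseteq S$ and a shortest word $w'$ with $|\{p,q\} \cdot w'| \leq 1$. If such $(\{p,q\}, w')$ is found, the algorithm sets $w \leftarrow w w'$ and $S \leftarrow S \cdot w'$; otherwise it returns nil. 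When $|S| = 1$ it returns $w$.

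For soundness, suppose the algorithm returns a word $w$. An easy induction on iterations, using that the set-transition action distributes over union (Section~2.1) together with the identity $(S\cdot u)\cdot v = S \cdot uv$, shows the invariant $S = Q_G \cdot w$ is maintained. Upon termination $|S| = 1$, say $S = \{r\}$, so $Q_G \cdot w = \{r\}$ and $w$ synchronizes $G$ to $r$ by definition.

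For completeness, I must show that whenever $G$ admits a synchronizing word, every iteration succeeds in finding a merging pair. The key lemma to establish is: if $G$ is an irreducible deterministic presentation with some synchronizing word $u$, and $S \subseteq Q_G$ with $S = Q_G \cdot v$ and $|S| \geq 2$, then there exist distinct $p, q \in S$ and a word $w'$ of length $O(|Q_G|^2)$ with $|\{p,q\} \cdot w'| \leq 1$. Using $u$ and irreducibility, I will argue that the pair-product graph has a path from some pair contained in $S$ to a singleton: if $u \in F_G(p) \cap F_G(q)$ for some $\{p,q\}\subseteq S$, then $u$ itself merges that pair; otherwise there is $p \in S$ with $u \notin F_G(p)$, and some prefix of $u$ sends a pair in $S$ to a singleton vacuously via an undefined transition. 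A BFS of the pair-product graph, which has $O(|Q_G|^2)$ vertices, then finds such $w'$ in polynomial time. Conversely, if $G$ has no synchronizing word, then some iteration fails to find any merging pair (since repeated success would eventually drive $|S|$ to $1$ and produce one), so nil is correctly returned.

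The main obstacle is the completeness lemma, specifically adapting Eppstein's pair-merging argument to the partial-transition setting. In a fully deterministic DFA every synchronizing word acts on every state, so it directly merges every pair in $S$; in a deterministic presentation the action is partial, and $|S \cdot u|$ may shrink through ``dropping'' (undefined transitions) rather than honest merging, and this interacts subtly with the pair-product search. Irreducibility is essential to rule out pathological cases where dropping prevents the pair-product BFS from ever reaching a singleton target from some pair in $S$; a careful case analysis on whether $u$ is defined at both elements of a candidate pair, combined with the essentiality convention on presentations, should close the argument.
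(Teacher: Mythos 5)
Your overall strategy (Eppstein-style pair merging with invariant $S = Q_G\cdot w$) matches the paper's, and you correctly flag the partial-transition issue as the crux --- but your sketch of how to resolve it does not work, and this is a genuine gap rather than a detail. First, the merging condition must be $|\{p,q\}\cdot w'| = 1$, not $\leq 1$: if the chosen word is undefined at both $p$ and $q$ (and possibly at all of $S$), the invariant $Q_G\cdot w \neq \varnothing$ can fail, the loop can exit with $S = \varnothing$, and the returned word is then not synchronizing, since a synchronizing word must satisfy $Q_G \cdot w = \{r\}$ for some vertex $r$. Your soundness paragraph silently assumes $|S|=1$ at termination, which the $\leq 1$ condition does not guarantee; the paper's definition of pair-synchronizing forces $|\{p,q\}\cdot w'| = 1$ exactly, and since $\{p,q\}\subseteq S$ this keeps $S\cdot w'$ nonempty while strictly shrinking it.

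Second, your completeness case analysis fails in exactly the problematic case. You argue: either some pair in $S$ has the synchronizing word $u$ in both follower sets (then $u$ merges it), or ``some prefix of $u$ sends a pair in $S$ to a singleton vacuously via an undefined transition.'' The latter is false in general: if $u$ begins with a letter undefined at both $p$ and $q$, then $\{p,q\}\cdot v$ has size $2$ or $0$ for every prefix $v$ of $u$, never $1$; and it can happen that $u$ is undefined at every state of $S$. The paper's fix is to not apply $u$ directly: pick any state $s$ with $u \in F_G(s)$, use irreducibility to obtain $v$ with $p\cdot v = s$, and consider $vu$. Then $vu \in F_G(p)$ is guaranteed, so $\{p,q\}\cdot vu$ is nonempty, and it is a singleton either because $vu\notin F_G(q)$ or because $u$ is synchronizing and hence $p\cdot vu = q\cdot vu$. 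Note also that this argument establishes that \emph{every} pair of distinct vertices admits a pair-synchronizing word, which is what the paper's algorithm requires, since it fixes an arbitrary pair before searching; your lemma asserting only that \emph{some} pair in $S$ is mergeable would justify only a variant that searches over all pairs.
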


To prove this result, we first introduce the notion of
pair-synchronizing words. Let $p$ and $q$ be vertices in a
deterministic graph $G$.  We say a word $w$ is
\term{pair-synchronizing} for $p$ and $q$ if $|\{p,q\}\cdot w| = 1$, i.e., if there exists a vertex $r\in Q_G$ such that $\{p,q\}\cdot w = \{r\}$.
This condition breaks into the following three cases:
\begin{enumerate}[(i)]
\item $w \in F_G(p)$ and $w \notin F_G(q)$;
\item $w \notin F_G(p)$ and $w \in F_G(q)$;
\item $w \in F_G(p) \cap F_G(q)$ and $p \cdot w = q \cdot w$.
\end{enumerate}
If $X \subseteq Q_G$ is a subset of vertices with $|X|\geq 2$ and $w$ is pair-synchronizing for distinct $p,q\in X$,
then we have $|X| > |X \cdot w| \geq 1$. This
property motivates Algorithm~\ref{alg:sw}. The algorithm operates by
iteratively building a word $u$ and tracking a subset $X$ of vertices,
maintaining the invariants that $Q_G \cdot u = X$ and $|X| \geq 1$. On
each iteration of the main loop, the algorithm searches for a
pair-synchronizing word $w$ for some pair of distinct vertices in $X$,
and if one is found, then updates $u$ to $uw$ and $X$ to
$X \cdot w$.
The property above ensures that the invariants of
$X$ and $u$ are maintained.
Since $|X|$ must decrease by at least 1 in each iteration, the algorithm returns after at most $|Q_G|$ iterations.

\begin{proof}[Proof of Theorem~\ref{thm:sw-correct}]
  If Algorithm~\ref{alg:sw} returns a non-nil value, it must have exited at
  line~\ref{alg:sw:line:ret-nonnil}, which implies $|X| \leq 1$.
  As the invariant that $|X| \geq 1$ was maintained
  throughout the algorithm, we must have $|Q_G \cdot u| = 1$, so
  the word $u$ that was returned is a synchronizing word for $G$.

  Conversely, if Algorithm~\ref{alg:sw} returned nil, it must have exited at
  line~\ref{alg:sw:line:ret-nil}, which implies that
  there are two distinct vertices such that there is no
  pair-synchronizing word for them.
  Yet, as we show next, if $G$
  has a synchronizing word, then every pair of distinct
  vertices has a pair-synchronizing word.
  Thus, $G$ must not have a synchronizing word.

  Let $p$ and $q$ be distinct vertices in $G$, and suppose $w$
  is a synchronizing word for $G$. As $w$ is synchronizing, there is
  some vertex $s$ with $w \in F_G(s)$. As $G$ is irreducible, there is
  a word $u$ such that $p \cdot u = s$, which implies that
  $uw \in F_G(p)$.  If $uw \notin F_G(q)$, then $uw$ is a
  pair-synchronizing for $p$ and $q$ under case (i) above.
  Otherwise, we have $uw \in F_G(q)$.
  As $w$ is synchronizing, then $p \cdot uw = q \cdot uw$, so
  $uw$ is still pair-synchronizing for $p$ and $q$, under case (iii).
\end{proof}

\begin{algorithm}
  \caption{Finding synchronizing words}
  \label{alg:sw}
  \begin{algorithmic}[1]
    \Require $G$ is a deterministic graph
    \Procedure{synchronizing-word}{$G$}
    \State $X \gets Q_G$;\quad $u \gets \epsilon$
    \While {$|X| \geq 2$}
    \State choose distinct $p, q \in X$
    \State find a word $w$ that is pair-synchronizing for $p$ and $q$
    \If{$w$ exists}
    \State $X \gets X \cdot w$;\quad $u \gets uw$
    \Else
    \State \Return nil \label{alg:sw:line:ret-nil}
    \EndIf
    \EndWhile
    \State \Return u \label{alg:sw:line:ret-nonnil}
    \EndProcedure
  \end{algorithmic}
\end{algorithm}

To implement this Algorithm~\ref{alg:sw} in polynomial time (with
respect to the size of its input $G$), we need a method to compute a
pair-synchronizing word for a given pair of vertices. We give such a method using two auxillary graphs, the first of which encodes what words
are not within a follower set of a vertex, and the second of which
encodes pairs of paths sharing the same label.

If $G$ is a labeled graph and $\Gamma$ is an alphabet, the
\term{sink vertex graph of $G$ with alphabet $\Gamma$} is the graph
$G^0$ constructed as follows.
Start with the graph $G$, and add a new vertex
$0$ to $G^0$.
For every vertex $q$ in $G^0$ and
$\ell \in \Gamma$, add an edge labeled $\ell$ from $q$ to $0$ if
there is no edge labeled $\ell$ starting at $q$.
One can show that for
$w \in \Gamma^*$ and $q \in Q_G$, we have $w \notin F_G(q)$ if and
only if there is a path labeled $w$ from $q$ to $0$ in $G^0$. 

If $G$ and $H$ are labeled graphs, then the \term{label product graph
  of $G$ and $H$} is the graph $G*H$ whose
vertices are $Q_G \times Q_H$ and with an edge between $(p_1, p_2)$
and $(q_1, q_2)$ labeled $\ell$ if and only if there is an edge
labeled $\ell$ from $p_1$ to $q_1$ in $G$ and an edge labeled $\ell$
from $p_2$ to $q_2$ in $H$.  One can show that for
$w \in (\Ac_G \cup \Ac_H)^*$, there is a path labeled $w$ from
$p_1$ to $q_1$ in $G$ and a path labeled $w$ from $p_2$ to $q_2$ in
$G$ if and only if there is a path labeled $w$ from $(p_1, p_2)$ to
$(q_1, q_2)$ in $G*H$.

Let $G$ be a labeled graph, let $G^0$ be the
sink vertex graph of $G$ with alphabet $\Ac_G$ and let
$G^0 * G^0$ be the label product graph of $G^0$ and $G^0$.
With the properties of the auxillary graphs, one can show that
the following conditions are equivalent to cases (i)-(iii) from the
pair-synchronizing definition.

\begin{enumerate}[(I)]
\item there is a path in $G^0 * G^0$ from $(p,q)$ to $(r, 0)$ for some
  vertex $r$ in $G$;
\item there is a path in $G^0 * G^0$ from $(p,q)$ to $(0, r)$ for some
  vertex $r$ in $G$;
\item there is a path in $G^0 * G^0$ from $(p,q)$ to $(r, r)$ for some
  vertex $r$ in $G$.
\end{enumerate}

Using, say, a depth-first search, one can determine if there is
a pair-synchronizing word for a given pair of vertices by testing for
the existence of a path satisfying one of (I)-(III).
The size of
$G^0 * G^0$ is $O(|Q_G|^2 \cdot |\Ac_G|)$, so one can
construct the graph and query the existence of such path in polynomial
time.
Each iteration of Algorithm
\ref{alg:sw} therefore takes polynomial time, and furthermore, as there are at most $|Q_G|$ iterations, in total, the algorithm will take
polynomial time.

\subsection{Testing for subshift}
\label{sec:testing-subshift}

We now turn to the \inclusionp problem for deterministic presentations $G$ and $H$ where $G$ is irreducible.
The key idea behind the algorithm is to try to find a word exhibiting the fact that $\shift{G} \nsubseteq \shift{H}$.
We say that $w$ \term{separates $G$ from $H$} if
$Q_G \cdot w \neq \varnothing$ while $Q_H \cdot w = \varnothing$.
When $G$ and $H$ are essential, the existence of such a word is
equivalent to $\shift{G} \nsubseteq \shift{H}$.
Algorithm~\ref{alg:subshift} adapts the algorithm for synchronizing words to find such separating words, thus showing \inclusionp for $G$ and $H$ is in \cc{P} when $G$ is irreducible.
We show in Theorem~\ref{thm:inclusion-hard} that the general problem is \cc{PSPACE}-complete.

We state the correctness of Algorithm~\ref{alg:subshift} for the more general case of labeled graphs, which need not be essential, since we rely on that case for Theorem~\ref{thm:sync-testing-alg}.
\begin{theorem}\label{thm:subshift-correct}
  Given deterministic labeled graphs $G$ and $H$, where $G$ is irreducible,
  Algorithm~\ref{alg:subshift}
  returns a word separating $G$ from $H$ if one exists, and returns
  nil otherwise.
\end{theorem}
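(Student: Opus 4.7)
The plan is to mirror the correctness argument for Algorithm~\ref{alg:sw}, introducing a notion of \emph{pair-separating} word: given $(p, q) \in Q_G \times Q_H$, a word $w$ is pair-separating for $(p, q)$ if $w \in F_G(p)$ and $w \notin F_H(q)$. By the properties of the sink vertex and label product constructions just established, existence of such a $w$ is equivalent to existence of a path in $G^0 * H^0$ (built over the common alphabet $\Ac_G \cup \Ac_H$) from $(p, q)$ to some vertex $(r, 0)$ with $r \in Q_G$ and $0$ the sink of $H^0$; this check runs in polynomial time, directly analogous to conditions (I)--(III) in the pair-synchronizing analysis.

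I expect Algorithm~\ref{alg:subshift} to maintain a word $u$, an ``anchor'' state $p \in Q_G \cdot u$, and the subset $Y = Q_H \cdot u$, initialized as $u = \epsilon$, an arbitrary $p \in Q_G$, and $Y = Q_H$. At each iteration it picks some $q \in Y$, searches for a pair-separating word $w$ for $(p, q)$, and upon success updates $u \gets uw$, $p \gets p \cdot w$, $Y \gets Y \cdot w$; if no such $w$ exists it returns nil. The invariants $p \in Q_G \cdot u$ and $Y = Q_H \cdot u$ are immediate from the update rule, and because $w \notin F_H(q)$ the state $q$ contributes nothing to $Y \cdot w$, so $|Y \cdot w| \leq |Y \setminus \{q\}| = |Y| - 1$. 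Hence the loop terminates within $|Q_H|$ iterations, each polynomial-time.

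For the positive direction, if the algorithm returns $u$, the preserved invariants give $Q_G \cdot u \supseteq \{p\} \neq \varnothing$ and $Q_H \cdot u = Y = \varnothing$, so $u$ separates $G$ from $H$ by definition. For the negative direction, suppose the algorithm returns nil at an iteration with data $(u, p, Y)$ and chosen $q$, so no pair-separating word exists for $(p, q)$. I will then assume for contradiction that a separating word $v$ exists, fix $p' \in Q_G$ with $v \in F_G(p')$, and exploit $Q_H \cdot v = \varnothing$. The key step---and the only place the hypothesis that $G$ is irreducible enters---is to produce a word $s$ with $p \cdot s = p'$, giving $sv \in F_G(p)$; then $sv \in F_H(q)$ would force $v \in F_H(q \cdot s)$, which is ruled out by $Q_H \cdot v = \varnothing$, so $sv$ is pair-separating for $(p, q)$, a contradiction. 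The main obstacle is precisely this irreducibility-based rerouting: showing that any ``global'' separating witness can be lifted to a ``local'' one usable at whichever pair the algorithm is currently examining.
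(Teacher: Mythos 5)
Your proposal is correct and follows essentially the same route as the paper's proof: the same invariants ($p \in Q_G \cdot u$ versus the paper's fixed $p_0$ with $p = p_0\cdot u$, and $Y = Q_H\cdot u$), the same termination count via $|Y\cdot w| < |Y|$, and the same irreducibility-based rerouting of a global separating word to a pair-separating word at the current $(p,q)$ (you phrase it as a contradiction where the paper uses contraposition, and you use $G^0 * H^0$ where the paper uses $G * H^0$, but these are cosmetic). No gaps.
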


Like in Algorithm~\ref{alg:sw}, Algorithm~\ref{alg:subshift} operates
by iteratively building a word $u$.
In addition, the algorithm fixes a vertex $p_0 \in Q_G$, and maintains a vertex $p\in Q_G$ and subset $X\subseteq Q_H$ satisfying the invariants
$u \in F_G(p_0)$, $p_0 \cdot u = p$, and $Q_H \cdot u = X$.
In each iteration of the main loop, the algorithm searches for a word $w$ such
that $w \in F_G(p)$ while $w \notin F_H(q)$ for some $q \in X$.
If one is found, the algorithm updates $u$ to $uw$, $p$ to $p \cdot w$, and
$X$ to $X \cdot w$, which maintains the invariants.
As
$w \notin F_H(q)$ and $q \in X$, we have $|X| > |X \cdot w|$, so
the algorithm again terminates in at most $|Q_H|$ iterations.

\begin{proof}[Proof of Theorem~\ref{thm:subshift-correct}]
  If Algorithm~\ref{alg:subshift} returns a non-nil value, it must have exited at line~\ref{alg:subshift:line:ret-nonnil}, so $|X| = 0$.
  The invariants give
  $Q_H \cdot u = X = \varnothing$ and $u \in F_G(p_0)$,
  meaning $Q_H \cdot u = \varnothing$ and $Q_G \cdot u \neq \varnothing$.
  Thus, $u$ separates $G$ from $H$.

  Conversely, if Algorithm~\ref{alg:subshift} returns nil, it must have exited at
  line~\ref{alg:subshift:line:ret-nil}, which implies there
  exist $p \in Q_G$ and $q \in Q_H$ such that there is no word
  $w$ with $w \in F_G(p)$ and $w \notin F_H(q)$.
  We show below that, if some word separates $G$ from $H$,
  then for every $p' \in Q_G$ and $q' \in Q_H$, there is a word
  $w$ with $w \in F_G(p')$ and $w \notin F_H(q')$.
  By contraposition, therefore, no word separates $G$ from $H$.

  Suppose there is a word $w$ separating $G$ from $H$, so
  that we have $Q_G \cdot w \neq \varnothing$ while
  $Q_H \cdot w = \varnothing$.
  Then, there is some vertex
  $p^* \in Q_G$ such that $w \in F_G(p^*)$ and $w \notin F_H(q')$ for
  every $q' \in Q_H$.
  Let $p \in Q_G$ and $q \in Q_H$. As $G$ is
  irreducible, there is some $u \in F_G(p)$ such that
  $p \cdot u = p^*$,
  giving $uw \in F_G(p)$.
  Let $q' \triangleq q \cdot u$.
  By the above, $w \notin F_H(q')$, and thus $uw \notin F_H(q)$.
\end{proof}

\begin{algorithm}
  \caption{Subshift testing}
  \label{alg:subshift}
  \begin{algorithmic}[1]
    \Require $G$ is an irreducible deterministic labeled graph
    \Require $H$ is a deterministic labeled graph
    \Procedure{separating-word}{$G$, $H$}
    \State  $p_0 \gets$ any element in $Q_G$;
    \quad $p \gets p_0$;\quad $X \gets Q_H$;\quad $u \gets \epsilon$
    \While{$|X| > 0$}
    \State $q \gets$ any element in $X$
    \State find a word $w$ such that $w \in F_G(p)$ and $w \notin F_H(q)$ 
    \If{$w$ exists}
    \State $p \gets p \cdot w$;\quad $X \gets X \cdot w$;\quad $u \gets uw$
    \Else
    \State \Return nil \label{alg:subshift:line:ret-nil}
    \EndIf
    \EndWhile
    \State \Return u \label{alg:subshift:line:ret-nonnil}
    \EndProcedure
  \end{algorithmic}
\end{algorithm}

Analagously to Algorithm~\ref{alg:sw}, we can implement Algorithm
\ref{alg:subshift} in polynomial time by noticing that the existence
of a word $w$ such that $w \in F_G(p)$ and $w \notin F_H(p)$ is
equivalent to the existence of a path in $G * H^0$ from $(p,q)$ to
$(r, 0)$ for some vertex $r$ in $G$, where $H^0$ is the sink vertex
graph of $H$ with alphabet $\Ac_G \cup \Ac_H$ and $G * H^0$ is
the label product graph of $G$ and $H^0$.

\subsection{Testing for synchronizing presentations}
\label{sec:test-synchr-pres}

With Algorithm~\ref{alg:sw} and Algorithm~\ref{alg:subshift}, we can now
establish a polynomial-time algorithm for checking if a given
deterministic graph is synchronizing, given by Algorithm~\ref{alg:sync-test}. The correctness of the algorithm is implied by
the following characterization of a synchronizing presentation.

\begin{theorem}
  \label{thm:sync-testing-alg}
  Let $G$ be a deterministic labeled graph with vertex set $Q$.
  Then, $G$ is synchronizing if and only if for each initial irreducible
  component $C$, there exists (i) a synchronizing word for the subgraph induced
  by $C$ and (ii) a word separating
  the subgraph induced by $C$ from
  the subgraph induced by $Q \setminus C$.
\end{theorem}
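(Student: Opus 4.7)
The plan is to prove the two implications separately. For the forward direction, suppose $G$ is synchronizing; fix an initial irreducible component $C$, pick any $r \in C$, and let $w$ be a word with $Q_G \cdot w = \{r\}$. I would argue that $w$ simultaneously witnesses (i) and (ii) via a single observation: since $C$ is initial, no path in $G$ enters $C$ from outside, so any path labeled $w$ ending at $r \in C$ lies entirely in $C$. This immediately gives (i), since every $q \in C$ with $w \in F_G(q)$ then satisfies $w \in F_{H_C}(q)$ and $q \cdot w = r$, and at least one such $q \in C$ exists because $Q_G \cdot w$ is nonempty and any source of the path must be in $C$ by initiality. For (ii), note that $F_G(q) = F_{H_{Q \setminus C}}(q)$ for $q \in Q \setminus C$ by initiality; if $w \in F_G(q)$ its path would end at $r \in C$, violating initiality, so $w \notin F_G(q)$ and $Q_{H_{Q \setminus C}} \cdot w = \varnothing$.

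For the reverse direction, assume (i) and (ii) hold for each initial component $C$, with $w_C$ synchronizing $H_C$ to some $r_C$ and $v_C$ separating $H_C$ from $H_{Q \setminus C}$. My plan is to reduce to showing each $r_C$ is a synchronizing vertex of $G$: the set of synchronizing vertices is closed under forward reachability in $G$ (if $W$ syncs $G$ to $r$ and $r \cdot u = q$, then $Wu$ syncs $G$ to $q$), irreducibility of $H_C$ makes every vertex of $C$ reachable from $r_C$, and every vertex of $G$ is reachable from some initial component by a standard fact about the DAG of irreducible components. Together these imply that $r_C$ being synchronizing for every initial $C$ forces $G$ to be synchronizing.

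To realize $r_C$ as a synchronizing vertex of $G$, I would first upgrade $v_C$ to a stronger separating word $v_C' = u_1 v_C u_2$ satisfying $r_C \cdot v_C' = r_C$ in $H_C$, taking $u_1$ to be a path in $H_C$ from $r_C$ to some $q^*$ with $v_C \in F_{H_C}(q^*)$ (which exists by the separating condition and irreducibility of $H_C$), and $u_2$ to be a path in $H_C$ from $q^* \cdot v_C$ back to $r_C$. The word $v_C'$ still separates: any path labeled $v_C'$ from a vertex in $Q \setminus C$ staying in $Q \setminus C$ would embed $v_C$ starting outside $C$, contradicting the separating property of $v_C$. I would then set $W = w_C v_C'$ and verify $Q_G \cdot W = \{r_C\}$. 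Applying $w_C$ yields $Q_G \cdot w_C \subseteq \{r_C\} \cup (Q \setminus C)$ (vertices of $C$ whose $w_C$-paths stay in $H_C$ land at $r_C$; other vertices of $C$ ``escape'' to $Q \setminus C$, and vertices of $Q \setminus C$ remain there by initiality); applying $v_C'$ afterwards annihilates every vertex in $Q \setminus C$ by separation and fixes $r_C$ by construction.

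The main obstacle is precisely that $w_C$, while synchronizing $H_C$ perfectly, can escape $C$ when applied to vertices of $C$ in the ambient graph $G$, along edges absent from $H_C$; naively the composition $v_C w_C$ or $w_C v_C$ need not collapse to a singleton, and iterating $v_C$ alone can empty the image entirely. The trick of padding $v_C$ into a cycle $v_C'$ at $r_C$ simultaneously preserves the separating property (so escapes get killed in the second phase) and pins $r_C$ in place, so that the composite $w_C v_C'$ cleanly collapses all of $G$ onto $\{r_C\}$.
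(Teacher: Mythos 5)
Your proof is correct and follows essentially the same route as the paper's: the forward direction is identical, and your word $w_C\,u_1\,v_C\,u_2$ (component-synchronizing word, connector, separating word, connector) is the same construction as the paper's $u_C\,x\,w_C\,y$, merely reorganized through the observation that synchronizing vertices are closed under forward reachability. If anything, you are slightly more careful than the paper's displayed calculation in flagging that vertices of $C$ may ``escape'' into $Q\setminus C$ under the component-synchronizing word and must be annihilated by the subsequent separating word.
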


\begin{proof}[]
  Suppose $G$ is synchronizing.
  Let $C$ be an initial component of $G$, and fix $r \in C$.
  As $G$ is synchronizing, let $w$ be a word that synchronizes to $r$ in $G$.
  As $w$ is synchronizing for $G$, there is some vertex $p \in Q$ such
  that $p \cdot w = r$.
  Since $r \in C$ and $C$ is initial, we must have $p \in C$.
  Thus $C \cdot w = \{r\}$, establishing (i).
  As $C$ is initial and $r\in C$, we cannot have $q \cdot w = r$ for any $q \notin C$.
  We conclude $(Q \setminus C) \cdot w = \varnothing$.
  As $C \cdot w = \{r\} \neq \varnothing$, we have (ii).

  Conversely, suppose for each initial irreducible
  component $C$, (i) there is a synchronizing word $u_C$ for the subgraph induced
  by $C$ and (ii) there is a word $w_C$ separating the subgraph induced by $C$ from the subgraph induced by
  $Q \setminus C$.
  Let $r$ be any vertex in $G$.
  Let $C$ be an initial irreducible component such that $r$ is reachable
  from every vertex in $C$.
  Condition (i) gives $C \cdot u_C = \{p\}$ for some $p \in C$.
  Condition (ii) gives some vertex $q \in C$ with
  $w_C \in F_G(q)$ such that $q \cdot w_C \in C$ and $(Q \setminus C) \cdot w_C = \varnothing$.
  As $C$ is an irreducible component and $p, q \in C$, then there is
  some word $x$ such that $p \cdot x = q$. As $q \cdot w_C \in C$ and $r$ is
  reachable from every vertex in $C$, there is
  some word $y$ such that $(q \cdot w_C) \cdot y = r$.
  Combining the above with a straightforward calculation for $Q\setminus C$, we have
  \begin{align*}
    C \cdot u_C x w_C y &= \{p\} \cdot x w_C y = \{q\} \cdot w_C y = \{q \cdot w_C\} \cdot y = \{r\},
    \\
    (Q \setminus C) \cdot u_C x w_C y &= ((Q \setminus C) \cdot u_C x) \cdot w_C y \subseteq (Q \setminus C) \cdot w_C y = \varnothing \cdot y = \varnothing.
  \end{align*}
  Thus, $Q \cdot u_C x w_C y = (C \cdot u_C x w_C y) \cup ((Q\setminus C) \cdot u_C x w_C y) = \{r\}$.
  As $r$ was arbitrary, $G$ is synchronizing.
\end{proof}

\begin{algorithm}
  \caption{Recognizing synchronizing presentations}
  \label{alg:sync-test}
  \begin{algorithmic}[1]
    \Require $G$ is a deterministic labeled graph
    \Procedure{is-synchronizing}{$G$}
    \State $\mathcal{C} \gets$ initial irreducible components of $G$
    \For{$C \in \mathcal{C}$}
    \State $G[C] \gets$ subgraph induced by $C$
    \State $G[\overline{C}] \gets$ subgraph induced by $Q_G \setminus C$
    \State $u \gets$ \Call{synchronizing-word}{$G[C]$}
    \State $v \gets$ \Call{separating-word}{$G[C]$, $G[\overline{C}]$}\label{alg:is-sync:line:essential}
    \If{$u$ is nil or $v$ is nil}
    \State \Return false
    \EndIf
    \EndFor
    \State \Return true
    \EndProcedure
  \end{algorithmic}
\end{algorithm}

\subsection{SFT testing for synchronizing deterministic presentations}
\label{sec:sft-test-synchr}

The proof of Theorem 3.4.17 of \citet{lind2021introduction} implicitly describes a polynomial-time algorithm to test whether an irreducible sofic shift, given as an irreducible deterministic presentation, is an SFT, and \citet{schrock2019complexity} gives a similar algorithm explicitly.
We extend these algorithms to synchronizing deterministic presentations.

For a deterministic labeled graph $G$, we define the labeled graph $\hat{G}$ as the label product graph $G * G$
(see Section~\ref{sec:find-synchr-words}) with the diagonal vertices removed, i.e., those of the form $(q,q)$.
Given a follower-separated synchronizing deterministic presentation $G$,
the algorithm to recognize if $\shift{G}$ is an SFT is to simply test if
the graph $\hat{G}$ is acyclic.
(A \term{cycle} is a nonempty path
that starts and ends at the same vertex, and we say a graph is \term{acyclic} if it has no cycle.)
This algorithm runs in polynomial time, as the size of $\hat{G}$ is quadratic with respect to the size of $G$, and it is well-known that one can test whether a directed graph is acyclic in linear time.

To show the correctness of this algorithm, we first show that
$\hat{G}$ characterizes the nonsynchronizing words of $G$.

\begin{lemma}\label{lem:ghat-char}
  Let $G$ be a deterministic presentation and let $w \in \Bc(\shift{G})$.
  Then, there is a path in $\hat{G}$ labeled $w$
  if and only if
  $w$ is not synchronizing for $G$.
\end{lemma}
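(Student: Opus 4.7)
The plan is to unpack both conditions in terms of paths in $G * G$ and use determinism of $G$ to handle the diagonal.

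First I would recall the standard fact about label product graphs stated in Section~\ref{sec:find-synchr-words}: a path labeled $w$ from $(p_1,p_2)$ to $(q_1,q_2)$ exists in $G * G$ if and only if there are paths labeled $w$ from $p_1$ to $q_1$ and from $p_2$ to $q_2$ in $G$. Since $\hat G$ is $G * G$ with diagonal vertices removed (and hence with all edges incident to the diagonal removed), a path labeled $w$ in $\hat G$ is exactly a path labeled $w$ in $G * G$ none of whose vertices have the form $(q,q)$. I would also note the equivalent form of the right-hand side: since $w \in \Bc(\shift{G})$ gives $Q_G \cdot w \neq \varnothing$ by Proposition~\ref{prop:basic}(i), the word $w$ fails to be synchronizing iff $|Q_G \cdot w| \geq 2$.

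For the forward direction, suppose there is a path in $\hat G$ labeled $w$, ending at some $(q_1, q_2)$ with $q_1 \neq q_2$ and starting at some $(p_1, p_2)$. Then $p_1 \cdot w = q_1$ and $p_2 \cdot w = q_2$, so $\{q_1, q_2\} \subseteq Q_G \cdot w$ and hence $|Q_G \cdot w| \geq 2$, so $w$ is not synchronizing.

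For the reverse direction, suppose $w$ is not synchronizing, so that there exist $p_1, p_2 \in Q_G$ with $p_1 \cdot w$ and $p_2 \cdot w$ both defined and $p_1 \cdot w \neq p_2 \cdot w$. Write $w = a_1 \cdots a_n$ and let $\pi_i = e_1^i \cdots e_n^i$ be the unique path in $G$ labeled $w$ starting at $p_i$, for $i=1,2$. The candidate path in $G*G$ visits the vertices $(p_1, p_2), (t_G(e_1^1), t_G(e_1^2)), \ldots, (t_G(e_n^1), t_G(e_n^2)) = (p_1 \cdot w, p_2 \cdot w)$. I need to check that none of these lie on the diagonal. The main step, and the only subtle point, is a determinism argument: if the two paths ever coincide at some intermediate vertex (say $i_G(e_j^1) = i_G(e_j^2)$), then because $G$ is deterministic and both $e_j^1$ and $e_j^2$ are labeled $a_j$, we would have $e_j^1 = e_j^2$, forcing equality of all subsequent vertices and, in particular, $p_1 \cdot w = p_2 \cdot w$, a contradiction. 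Hence all vertices along the path lie off the diagonal, giving a valid path in $\hat G$ labeled $w$.

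The result then follows. I do not anticipate any real obstacle; the only place where care is required is using right-resolvingness to exclude the diagonal in the intermediate vertices, which is the contrapositive observation above.
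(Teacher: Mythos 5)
Your proposal is correct and follows essentially the same route as the paper: the forward direction reads off two distinct endpoints of the path to get $|Q_G\cdot w|\geq 2$, and the reverse direction uses determinism to show that if the two lifted paths ever met at a common vertex they would agree from that point on, contradicting $p_1\cdot w\neq p_2\cdot w$ (the paper phrases this via a factoring $w=uv$ with $p\cdot u=q\cdot u$, which is the same observation at the level of the transition action). No gaps.
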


\begin{proof}
  Suppose there is a path $\pi$ in $\hat{G}$ labeled $w$, from $(p,q)$ to $(p',q')$.
  Then, we have $p \neq q$ and $p' \neq q'$, and $p \cdot w = p'$ and $q \cdot w = q'$.
  Thus, $Q_G \cdot w \supseteq \{p, q\} \cdot w = \{p', q'\}$.
  As $p' \neq q'$, we have $|Q_G \cdot w| \geq 2$ so $w$ is not synchronizing for $G$.
  Conversely, if $w$ was not synchronizing for $G$, then $|Q_G \cdot w| \geq 2$.
  Let $p',q' \in Q_G \cdot w$ be distinct.
  Let $p,q \in Q_G$ such that $p \cdot w = p'$ and $q \cdot w = q'$.
  If for some factoring $w = uv$ we had $p \cdot u = q \cdot u$,
  then $p' = p \cdot w = (p \cdot u) \cdot v = (q \cdot u) \cdot v = q \cdot w = q'$, a contradiction to $p'$ and $q'$ being distinct.
  Thus, there is a path labeled $w$ in $G*G$ from $(p,q)$ to $(p',q')$ which does not pass through any diagonal vertices, meaning it is a labeled path in $\hat G$.
\end{proof}

Because of the correspondence of synchronizing and intrinsically
synchronizing words in follower-separated synchronizing deterministic
presentations, we can use $\hat{G}$ to characterize when $\shift{G}$
is an SFT. 

\begin{theorem}
  \label{thm:sft-testing-sdp}
  Let $G$ be a follower-separated synchronizing deterministic
  presentation.  Then, $\shift{G}$ is an SFT if and only if
  $\hat{G}$ is acyclic.
\end{theorem}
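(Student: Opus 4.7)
The plan is to chain together the three preceding results to turn the SFT property into a statement about path lengths in $\hat{G}$. Concretely, Theorem~\ref{thm:sft-char} says $\shift{G}$ is an SFT if and only if there is some $M$ for which every $w \in \Bc(\shift{G})$ with $|w| \geq M$ is intrinsically synchronizing for $\shift{G}$. Since $G$ is follower-separated and synchronizing, Lemma~\ref{lem:sync-word-correspondence} lets us replace ``intrinsically synchronizing for $\shift{G}$'' by ``synchronizing for $G$'' without loss. Finally, Lemma~\ref{lem:ghat-char} translates the (non)synchronizing condition on $w \in \Bc(\shift{G})$ into the (non)existence of a path labeled $w$ in $\hat{G}$. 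So the SFT property for $\shift{G}$ is equivalent to: for some $M$, no word of length $\geq M$ in $\Bc(\shift{G})$ labels a path in $\hat{G}$.

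For the forward direction, I will argue contrapositively: if $\hat{G}$ contains a cycle, then walking around that cycle arbitrarily many times (and prefixing/suffixing by the finite path needed to reach/leave it if desired) produces, for every $M$, a path in $\hat{G}$ whose label $w$ has $|w| \geq M$. This label lies in $\Bc(\shift{G})$ because any path in $\hat{G}$ projects to a path in $G$ (via either coordinate of $G*G$), and $G$ is essential. By Lemma~\ref{lem:ghat-char}, $w$ is not synchronizing for $G$, hence by Lemma~\ref{lem:sync-word-correspondence} not intrinsically synchronizing for $\shift{G}$, contradicting Theorem~\ref{thm:sft-char}. Thus $\hat{G}$ is acyclic.

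For the reverse direction, assume $\hat{G}$ is acyclic, so there is a uniform bound $N$ on the number of edges in any path of $\hat{G}$. Set $M = N+1$. Any $w \in \Bc(\shift{G})$ with $|w| \geq M$ cannot label a path in $\hat{G}$, so by Lemma~\ref{lem:ghat-char} it is synchronizing for $G$, and by Lemma~\ref{lem:sync-word-correspondence} it is intrinsically synchronizing for $\shift{G}$. Theorem~\ref{thm:sft-char} then yields that $\shift{G}$ is an SFT.

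There is no genuine obstacle here: the work has already been done in Lemmas~\ref{lem:sync-word-correspondence} and~\ref{lem:ghat-char}, and this theorem just assembles them. The only point requiring a brief justification is that the label of a path in $\hat{G}$ actually belongs to $\Bc(\shift{G})$, which is immediate from the definition of $\hat{G}$ as a subgraph of $G*G$ together with $G$ being an (essential) presentation.
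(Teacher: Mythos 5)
Your proposal is correct and follows essentially the same route as the paper: both directions assemble Theorem~\ref{thm:sft-char}, Lemma~\ref{lem:sync-word-correspondence}, and Lemma~\ref{lem:ghat-char}, using a cycle in $\hat{G}$ to produce arbitrarily long non-intrinsically-synchronizing words for one direction and the bound on path lengths in an acyclic graph (the paper takes $M=|Q_{\hat G}|$ where you take $M=N+1$) for the other. Your brief justification that path labels in $\hat{G}$ lie in $\Bc(\shift{G})$ is a point the paper asserts without comment, and your argument for it is fine.
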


\begin{proof}
  Suppose $\hat{G}$ had a cycle.
  By Theorem~\ref{thm:sft-char},
  to show that $\shift{G}$ is not an SFT,
  it suffices to show that
  for every $M \geq 0$, there exists a word $w \in \Bc(\shift{G})$ with $|w| \geq M$ that is not intrinsically synchronizing for $\shift{G}$.
  Let $M \geq 0$.
  Since $\hat{G}$ has a cycle, in particular it has a path of any length.
  Let $\pi$ be a path in $\hat{G}$ of length at least $M$, and let $w$ be its label.
  We have $w \in \Bc(\shift{G})$ and $|w| \geq M$.
  By Lemma~\ref{lem:ghat-char}, $w$ is not synchronizing for $G$, and
  by Lemma~\ref{lem:sync-word-correspondence}, $w$ is therefore not
  intrinsically synchronizing for $\shift{G}$. 
  
  Conversely, suppose $\hat{G}$ is acyclic, and let $M \triangleq |Q_{\hat{G}}|$.
  By Theorem~\ref{thm:sft-char}, to show $\shift{G}$ is an SFT,
  it suffices to show that every word $w \in \Bc(\shift{G})$ with $|w| \geq M$ is intrinsically synchronizing for $\shift{G}$.
  Let $w \in \Bc(\shift{G})$ and suppose $|w| \geq M$.
  Suppose for a contradiction that $w$ is not intrinsically synchronizing for
  $\shift{G}$.
  By Lemmas~\ref{lem:sync-word-correspondence} and~\ref{lem:ghat-char} once again,
  there is a path in $\hat{G}$ labeled $w$, of length $|w| \geq M$.
  As $G$ is acyclic, however, every path in $\hat{G}$ must have length strictly less than $|Q_{\hat{G}}| = M$, a contradiction.
  Thus, $w$ must be intrinsically synchronizing for $\shift{G}$.
\end{proof}

\begin{remark}\label{re:sft-bound}
  Let $X$ be a shift space and $M \geq 0$. Say $X$ is
  \term{M-step} if every word $w \in \Bc(X)$ with $w \geq M$ is
  intrinsically synchronizing for $X$. With this definition and
  rephrasing Theorem~\ref{thm:sft-char}, a shift space is an SFT if and
  only if it is $M$-step for some $M \geq 0$. The converse direction
  then implies that if $\hat{G}$ is acyclic, then it must be
  $(|Q_G|^2 - |Q_G|)$-step, as $|Q_{\hat{G}}| = |Q_G|^2 - |Q_G|$.
  Thus, if $G$ is a follower-separated synchronizing deterministic
  presentation, then $\shift{G}$ is an SFT if and only if it is
  $(|Q_G|^2 - |Q_G|)$-step.
  (Cf.~\cite[Theorem 3.4.17]{lind2021introduction}.)
\end{remark}

\subsection{Isomorphism and equality}
\label{sec:iso-eq}

A \term{homomorphism} between deterministic labeled graphs $G$ and $H$
is a mapping $\varphi \colon Q_G \to Q_H$ that preserves the
transition action: we have $F_G(q) = F_H(\varphi(q))$
and $\varphi(q \cdot w) = \varphi(q) \cdot w$
for all $q \in Q_G$ and $w \in F_G(q)$. An
\term{isomorphism} is a bijective homomorphism.
In general, the problem of deciding isomorphism between deterministic labeled graphs is \cc{GI}-complete, meaning it has a polynomial-time many-one reduction to and from the graph isomorphism problem on unlabeled graphs \cite{booth1978isomorphism}.
For follower-separated graphs, however, the problem is in \cc{P}. 
To show this, we need the following lemma, which states that 
preserving the follower set of a vertex is sufficient for being
a homomorphism onto a follower-separated graph.

\begin{lemma}\label{lem:hom-follower-set}
  Let $G$ and $H$ be deterministic labeled graphs, and $\varphi \colon Q_G \to Q_H$
  a map between their vertices. If $H$ is follower-separated, then
  $\varphi$ is a homomorphism if and only if $F_G(q) = F_H(\varphi(q))$ for
  all $q \in Q_G$.
\end{lemma}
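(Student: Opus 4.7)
The forward direction is immediate: the definition of homomorphism explicitly requires $F_G(q) = F_H(\varphi(q))$, so the ``only if'' is trivial. All the real content is in the ``if'' direction, so my plan focuses there.

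For the reverse direction, assume $F_G(q) = F_H(\varphi(q))$ for every $q \in Q_G$, and fix $q \in Q_G$ and $w \in F_G(q)$. The hypothesis already implies $w \in F_H(\varphi(q))$, so $\varphi(q)\cdot w$ is well-defined in $H$. What I must show is the transition-preservation identity $\varphi(q\cdot w) = \varphi(q)\cdot w$. The key idea is to push this equality in $Q_H$ through the follower-separated hypothesis: it suffices to prove that the two vertices $\varphi(q\cdot w)$ and $\varphi(q)\cdot w$ of $H$ have equal follower sets in $H$.

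I would then compute both follower sets and show they equal $F_G(q\cdot w)$. On one hand, the hypothesis applied to the vertex $q\cdot w \in Q_G$ gives directly $F_H(\varphi(q\cdot w)) = F_G(q\cdot w)$. On the other hand, using the basic property (i) of the transition action stated in the preliminaries, for any word $v$ we have $v \in F_H(\varphi(q)\cdot w)$ iff $wv \in F_H(\varphi(q))$; by the hypothesis this is equivalent to $wv \in F_G(q)$, i.e., $v \in F_G(q\cdot w)$. Thus $F_H(\varphi(q)\cdot w) = F_G(q\cdot w)$ as well. Combining the two computations, $F_H(\varphi(q\cdot w)) = F_H(\varphi(q)\cdot w)$, and follower-separation of $H$ yields the desired equality $\varphi(q\cdot w) = \varphi(q)\cdot w$.

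There is no real obstacle here; the main subtlety is just remembering why $\varphi(q)\cdot w$ is defined in the first place (it requires $w \in F_H(\varphi(q))$, which is exactly where the hypothesis gets used before the follower-set computation), and being careful that the characterization $v \in F_H(p\cdot w) \iff wv \in F_H(p)$ is the right instance of the preliminary properties to invoke. Follower-separation of $H$ is used precisely once, at the very end, to convert equality of follower sets in $H$ into equality of vertices.
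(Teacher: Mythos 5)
Your proof is correct and follows essentially the same route as the paper: both reduce the transition-preservation identity to an equality of follower sets in $H$ via follower-separation, and both establish that equality through the chain $u \in F_H(\varphi(q\cdot w)) \iff u \in F_G(q\cdot w) \iff wu \in F_G(q) \iff wu \in F_H(\varphi(q)) \iff u \in F_H(\varphi(q)\cdot w)$. Your explicit remark about why $\varphi(q)\cdot w$ is defined is a nice touch the paper leaves implicit, but the argument is the same.
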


\begin{proof}
  That homomorphisms preserve follower sets follows directly from the definition.
  For the converse, suppose $F_G(q) = F_H(\varphi(q))$ for all $q \in Q_G$.
  Let $q \in Q_G$ and $w \in F_G(q)$. As $H$ is follower-separated, 
  it suffices to show that
  $F_H(\varphi(q \cdot w)) = F_H(\varphi(q) \cdot w)$ to show that
  $\varphi(q \cdot w) = \varphi(q) \cdot w$. For any $u$, we have 
  \begin{align*}
    &\phantom{{}\iff{}} u \in F_H(\varphi(q \cdot w)) \\
    &\iff u \in F_G(q \cdot w) \\
    &\iff wu \in F_G(q) \\
    &\iff wu \in F_H(\varphi(q)) \\
    &\iff u \in F_H(\varphi(q) \cdot w).
  \end{align*}
  Thus $F_H(\varphi(q \cdot w)) = F_H(\varphi(q) \cdot w)$.
\end{proof}

Now, given two follower-separated deterministic labeled graphs $G$ and $H$,
we can test if they are isomorphic by taking the disjoint union graph $G+H$, computing the follower-equivalences of $G+H$, and testing
if all the follower-equivalence classes are pairs (i.e. sets of size $2$).
As $G$ and $H$ are follower-separated, if two distinct vertices in $G+H$ are
follower-equivalent, then one of them must be a vertex from $G$
and the other from $H$. Thus, if all the follower-equivalence classes
are pairs, then a bijective map $\varphi \colon Q_G \to Q_H$ that preserves
the follower set of a vertex can be read off from the pairs. By
Lemma~\ref{lem:hom-follower-set}, this map is an isomorphism.
Conversely, if $\varphi \colon Q_G \to Q_H$ is an isomorphism,
then for $p \in Q_G$ and $q \in Q_{H}$ with $F_G(p) = F_H(q)$,
then $F_H(\varphi(p)) = F_H(q)$ and so $\varphi(p) = q$. In other words,
for any $p \in Q_G$ and any $q \in Q_H$ follower-equivalent
to $p$, then $\varphi(p) = q$. This implies that
all the follower-equivalence classes of $G+H$ are pairs.

Since the follower set of a vertex is preserved under an isomorphism,
if $G$ and $H$ are isomorphic deterministic presentations, then
$\shift{G} = \shift{H}$. However, even for follower-separated
presentations, the converse is not necessarily true. (See Figure \ref{fig:example}.)
But \citet[Corollary 5.4] {jonoska1996sofic}
proved that any two follower-separated synchronizing deterministic
presentations of the same sofic shift are isomorphic, which
implies that recognizing isomorphism is sufficient for
recognizing if $\shift{G} = \shift{H}$ when $G$ and $H$ are
follower-separated synchronizing deterministic presentations.
Furthermore, this implies that \equalp is in \cc{P} for
synchronizing deterministic presentations, as given
synchronizing deterministic presentations $G$ and $H$,
to determine if $\shift{G} = \shift{H}$,
one can test if $G/{\sim}$ and $H/{\sim}$, the
follower-separations of $G$ and $H$, are isomorphic.

\begin{figure}
  \centering
  \begin{tikzpicture}
    \node[vertex] (q1) at (0,0) {$q_1$};
    \node[vertex] (q2) at (2,0) {$q_2$};
    \node[vertex] (q3) at (4,0) {$q_3$};

    \draw (q1) to node[above] {$1$} (q2);
    \draw (q1.south west) to[left loop] node[left] {$0$} (q1.north west);
    \draw (q2) to[out=20, in=180-20] node[above] {$0$} (q3);
    \draw (q2.north west) to[left loop] node[above] {$1$} (q2.north east);
    \draw (q3) to[out=180+20, in=0-20] node[below] {$0$} (q2);
  \end{tikzpicture}
  \caption{A reducible, follower-separated, deterministic presentation $G$.
    Let $H$ be the subgraph induced by $q_2$ and $q_3$, which is irreducible and follower-separated.
    Then, $\shift{G} = \shift{H}$, but $G$ and $H$ are not isomorphic.} \label{fig:example}
\end{figure}
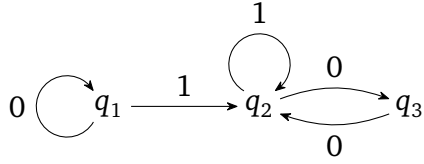

\section{Complexity Lower Bounds}
\label{sec:complexity-lower-bounds}

In the following sections, we show all the decision problems in Table
\ref{tab:problems} are \cc{PSPACE}-hard. In Appendix~\ref{apx:pspace}, we show all
those decision problems are in \cc{PSPACE}. As a result, we have the
following.

\begin{theorem}\label{thm:hardness-all}
  Every problem in Table~\ref{tab:problems} is \cc{PSPACE}-complete.
\end{theorem}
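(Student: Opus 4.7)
The plan is to decompose Theorem~\ref{thm:hardness-all} into two halves: membership in \PSPACE and \PSPACE-hardness for each of the seven problems. Membership I would defer to Appendix~\ref{apx:pspace}, where I expect each problem to admit a uniform \NPSPACE algorithm via the subset construction: one guesses a witness word symbol by symbol while tracking only a polynomial-size ``subset state'' $Q \cdot u$ (plus an auxiliary pair-of-states or pair-of-subsets when comparing two presentations), and then Savitch's theorem collapses \NPSPACE to \PSPACE. The five relevant invariants to track across the seven problems are: a single subset of $Q_G$ for \sftp, \irredp, \syncp, \sdpp; a pair of subsets in $Q_G \times Q_H$ for \equalp and \inclusionp; and a guess of a small deterministic presentation followed by an equality check for \minimalp.

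For hardness, \syncp is already known to be \PSPACE-hard by~\citet{berlinkov2014two}, which I would cite directly. For the remaining six problems I would give polynomial-time reductions from standard \PSPACE-hard automata-theoretic problems—most naturally from DFA intersection non-emptiness (\dfacapp) or, for some of the minimization-style problems, from the \PSPACE-hard mDFA minimization problem referenced in Section~\ref{sec:lit}. My hope is to build a single master gadget that, given DFAs $A_1,\ldots,A_k$, produces a deterministic presentation $G$ in which irreducibility of $\shift{G}$, being an SFT, and admitting a synchronizing deterministic presentation each hold if and only if $\bigcap_i L(A_i) = \varnothing$; this would yield hardness of \irredp, \sftp, and \sdpp in one stroke. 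For \equalp and \inclusionp, I would compare the same $G$ against a trivial presentation of the full shift on the ambient alphabet (or against a slightly modified twin of $G$), so that a word separating them corresponds to a word accepted by all the $A_i$. For \minimalp, I would encode the mDFA minimization problem into a deterministic presentation via the sink-vertex construction of Section~\ref{sec:find-synchr-words}.

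The main obstacle will be tuning the master gadget so that each targeted property holds \emph{if and only if} the source instance is a yes-instance, while simultaneously staying inside the class of essential, deterministic presentations. In particular, hardness of \sftp cannot be obtained by making $\shift{G}$ trivially finite; hardness of \irredp cannot be obtained by producing gratuitous unreachable initial components that have nothing to do with the $A_i$; and hardness of \sdpp must exploit Theorem~\ref{thm:sdp-char} (intrinsic synchronizability of every admissible word) rather than merely the failure of $G$ to be synchronizing. I anticipate handling these by layering the gadget with one or two carefully placed ``long-cycle'' or ``shared-label-tail'' components whose sole purpose is to enforce the complementary structure (non-SFT, reducibility, failure of synchronization) in the ``no'' case and whose effect disappears exactly when the source instance is a ``yes''. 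Having established hardness problem-by-problem in the following subsections and pointed to Appendix~\ref{apx:pspace} for the upper bound, Theorem~\ref{thm:hardness-all} would then follow immediately.
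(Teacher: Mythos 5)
Your high-level decomposition (hardness per problem plus a \PSPACE\ upper bound via nondeterministic guessing and Savitch) matches the paper, but two of your concrete choices would fail. First, the upper bound: for \sftp\ and \sdpp\ it is \emph{not} enough to track the subset $Q_G\cdot u$ while guessing a witness word symbol by symbol. Both problems require testing whether words are intrinsically synchronizing, i.e., whether $uwv\in\Bc(\shift{G})$ given $uw,wv\in\Bc(\shift{G})$, and $Q_G\cdot w$ records only where paths labeled $w$ can \emph{end}, not where they can \emph{start}; one cannot recover $Q_G\cdot uw$ from $Q_G\cdot w$ and $u$ alone. The paper's Appendix~\ref{apx:pspace} resolves this by tracking the full binary relation $\{(p,q): w\in F_G(p),\ p\cdot w=q\}$ (the ``action'' of $w$), which composes relationally and still has polynomial size; you would need this strengthened invariant for \sftp\ and \sdpp\ (and the paper's \irredp\ algorithm instead goes through the structural characterization of Theorem~\ref{thm:irred-char} rather than word-guessing). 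Second, for \equalp\ and \inclusionp\ your primary plan of comparing $G$ against a presentation of the full shift is exactly the \problem{Universality} problem, which the paper observes is in \cc{P} for deterministic presentations (an irreducible one-vertex presentation of the full shift lets Algorithm~\ref{alg:subshift} decide it); that reduction therefore cannot establish \PSPACE-hardness. Your parenthetical fallback --- comparing $G$ against a modified twin --- is what the paper actually does: $H$ is $G$ with one special state $p^*$ deleted, and $\shift{G}=\shift{H}$ iff the DFA languages cover $\Sigma^*$.

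On the hardness side, reducing from \dfacapp\ with target ``$\bigcap_i L(A_i)=\varnothing$'' is equivalent (by complementing acceptance) to the paper's reduction from \dfacupp\ with target ``$\bigcup_i L(M_i)=\Sigma^*$,'' so that framing is fine. However, your hope for a \emph{single} master gadget handling \irredp, \sftp, and \sdpp\ simultaneously runs into a real tension that the paper sidesteps by using two reductions: the gadget for \sftp\ needs $\ell$-self-loops that arbitrarily delay the DFAs' accept/reject decisions (so the shift fails to be $M$-step for any $M$ in the ``no'' case), yet the graph used for \irredp/\sdpp\ presents a non-SFT even in the ``yes'' case, and the graph used for \sftp\ presents a reducible shift even in the ``yes'' case. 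You would need to verify both biconditionals hold in one construction, which is not automatic. Finally, your \minimalp\ plan via mDFA minimization is speculative --- mDFA languages concern paths from initial to final states while presentations concern bi-infinite paths, so the correspondence of minimal sizes is not immediate; the paper instead shows directly that its \sftp\ gadget has a $2$-vertex presentation iff the union is $\Sigma^*$, by a case analysis forcing any $2$-vertex presentation to be isomorphic to a fixed graph $H$.
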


To establish the hardness of these decision problems, we will leverage hardness results from the automata theory literature.
To relate automata to sofic shifts, we will treat automata as a type of labeled graph.
Formally, we define a \term{deterministic finite automaton (DFA)} to be a fully deterministic labeled graph $M$ with a designated initial state $s \in Q_M$ and set of accepting states $F \subseteq Q_M$.
For DFAs, following convention from the automata literature, we
will write the transition action as a function $\delta(q, w) \triangleq q \cdot w$.
The \term{language} of $M$ is the set
$L(M) \triangleq \{\, w \in \Ac_M^* : \delta(s, w) \in F
\,\}$.
Note that $L(M)$ may differ from $\Bc(\shift{M})$, the language of the sofic shift presented by $M$.
In fact, as DFAs are fully deterministic, we always have $\Bc(\shift{M}) = \Ac_M^*$, meaning $\shift{M}$ is always the full shift.

We will reduce from the \term{DFA intersection nonemptiness problem}
(\dfacapp) and \term{DFA union universality problem} (\dfacupp), both
of which are \cc{PSPACE}-complete.
The \dfacapp problem asks whether, given $n$ DFAs $M_1, \dots, M_n$ over a common input alphabet $\Sigma$, is $\bigcap_{i=1}^n L(M_i) \neq \varnothing$?
Similarly, the \dfacupp problem asks whether, given $n$ DFAs $M_1, \dots, M_n$ over a common input alphabet $\Sigma$, is $\bigcup_{i=1}^n L(M_i) = \Sigma^*$?
\citet{kozen1977lower} showed that \dfacapp is \cc{PSPACE}-complete; one can see that \dfacupp is \cc{PSPACE}-complete from the following two facts: (i) the complement of \dfacapp is \cc{PSPACE}-complete, and (ii) $\bigcap_{i=1}^n L(M_i) = \varnothing$ if and only if $\bigcup_{i=1}^n L(\overline{M_i}) = \Sigma^*$, where $\overline{M_i}$ is $M_i$ with the accepting states being the complement of the accepting states of $M_i$.
Within our reductions, for an instance $M_1, \dots, M_n$ of \dfacupp or \dfacapp, we will let $Q_i$, $\delta_i$, $s_i$, and $F_i$ denote the set of states, transition function, initial state, and set of accepting states for $M_i$.
(The $Q_i$ are assumed to be pairwise disjoint.)

\subsection{Hardness of equality, containment, irreducibility, and SDP
  existence}
\label{sec:hardn-equal-cont}

In this section, we give a single polynomial-time reduction, which reduces \dfacupp simultaneously to \inclusionp, \equalp, \irredp, and \sdpp, giving the following.

\begin{theorem}\label{thm:hardness-four}
  \inclusionp, \equalp, \irredp, and \sdpp are \cc{PSPACE}-hard.
\end{theorem}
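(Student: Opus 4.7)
The plan is to reduce from \dfacupp to all four problems simultaneously. Given DFAs $M_1, \dots, M_n$ over alphabet $\Sigma$, the reduction would construct in polynomial time a pair of deterministic presentations from which all four decision questions return ``yes'' precisely when $\bigcup_i L(M_i) = \Sigma^*$. The construction uses an enlarged alphabet $\Sigma \cup \{\$, \#\}$ with fresh marker symbols $\$$ and $\#$.

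For the ``target'' presentation $H$, I would take, for each $i$, a copy of $M_i$'s $\Sigma$-transition graph; add a gateway vertex $g_i$ with $\Sigma$-self-loops and a single $\$$-edge $g_i \to s_i$; add a hub vertex $r$ with $\Sigma$-self-loops receiving $\#$-edges from every $f \in F_i$; and add an auxiliary vertex $r'$ with $\Sigma$-self-loops and a single $\#$-edge $r' \to r$, ensuring $\Sigma^* \# \Sigma^* \subseteq \Bc(\shift H)$. The ``probe'' presentation $G$ would be a compact follower-separated presentation whose shift's language is the factor-closed marker-pattern language consisting of words with at most one $\$$, at most one $\#$, and $\$$ preceding $\#$ if both appear---for instance, a three-vertex chain with $\Sigma$-self-loops linked by single $\$$ and $\#$ edges. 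Both presentations would be deterministic, essential, and polynomial-size in the input.

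The key structural lemma asserts that $\alpha \$ u \# \beta \in \Bc(\shift H)$ if and only if $u \in \bigcup_i L(M_i)$: any $\$$-edge in $H$ leaves some $g_i$ toward $s_i$, and any $\#$-edge into $r$ that is reachable from $s_i$ leaves some accepting state $f \in F_i$ of the same $M_i$, so the $u$-segment must trace a full computation $s_i \to f$ in $M_i$. Combined with a routine observation that $\Bc(\shift H) \subseteq \Bc(\shift G)$---since every bi-infinite label produced by $H$ conforms to $G$'s marker pattern---this immediately yields that $\shift G \subseteq \shift H$ and $\shift G = \shift H$ are each equivalent to universality, giving the reductions for \inclusionp and \equalp. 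For \irredp and \sdpp applied to $\shift G$, I would invoke Theorems~\ref{thm:irred-char} and~\ref{thm:sdp-char}: by engineering $G$ so that its synchronizing subgraph coincides with a terminal irreducible component presenting $\shift H$, the criteria for irreducibility and for SDP existence both collapse to the equality $\shift G = \shift H$, hence to universality.

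The main obstacle is simultaneously reconciling the roles of $G$ in all four problems: for \inclusionp and \equalp, $G$ should behave as a simple DFA-independent probe, whereas for \irredp and \sdpp, $G$'s synchronizing subgraph must capture precisely the DFA-dependent shift $\shift H$. I expect the most delicate step to be arranging $G$ so that its synchronizing vertices form a terminal irreducible component presenting exactly $\shift H$, and verifying the resulting characterizations hold in both the universal and non-universal cases. This will likely require introducing additional auxiliary structure in $G$ so that its non-synchronizing vertices contribute ``extra'' language precisely when the DFA union fails to be universal; identifying the correct such auxiliary structure is the crux of the construction.
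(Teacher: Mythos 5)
Your reduction for \inclusionp and \equalp is essentially sound and closely mirrors the paper's secondary reduction (Theorem~\ref{thm:fixed-h-subshift}): a small fixed ``marker-pattern'' presentation compared against a DFA-dependent presentation, with the structural lemma that $\alpha\$u\#\beta$ is in the language iff $u\in\bigcup_i L(M_i)$. That part would go through (after checking essentiality, which your self-loops handle).

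The genuine gap is in \irredp and \sdpp, and you have in fact flagged it yourself as ``the crux of the construction'' without resolving it. Concretely: your probe $G$ is DFA-independent, so $\shift{G}$ is a single fixed shift space and its irreducibility or SDP-existence cannot possibly encode universality; and your DFA-dependent $H$ is a one-way funnel $g_i \to M_i \to r$, so $\shift{H}$ is \emph{never} irreducible regardless of the DFAs (e.g.\ $\#$ and $\$$ both lie in $\Bc(\shift{H})$ but no word of the form $\#w\$$ does, since $r$ has no path back to any $g_i$). Neither of the two presentations you construct can serve as the \irredp or \sdpp instance. The missing idea, which the paper's Reduction~\ref{red:irred} supplies, is to make the DFA-dependent subgraph itself strongly connected and synchronizing --- by routing the $\rightm$-edges from accepting states back to a pre-initial state $p_1$, adding $\ell$-edges from every DFA state back to its initial state, and chaining the pre-initial states in a cycle --- and then to attach the ``universal'' gadget ($p^*,s^*$, whose follower set contains $\leftm w\rightm$ for all $w\in\Sigma^*$) as a separate \emph{initial} component feeding into that irreducible core $H$. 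With that topology, $\shift{G}=\shift{H}$ iff universality, and since $H$ is irreducible and synchronizing, both \irredp and \sdpp collapse to this equality in the positive case; in the negative case one still needs the explicit arguments that a witness $\leftm w\rightm\in\Bc(\shift{G})\setminus\Bc(\shift{H})$ can never recur (so $\shift{G}$ is not even nonwandering) and that the only words $u$ with $\leftm w\rightm\in F_{\shift{G}}(u)$ are $*^k$, which are not intrinsically synchronizing (so Theorem~\ref{thm:sdp-char} rules out an SDP). None of this auxiliary structure or case analysis is present in your proposal, so as written it establishes hardness only for \inclusionp and \equalp.
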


The idea behind the reduction is to create pre-initial states $p_i$ for each DFA $M_i$, and chain these together in a loop, with special symbols $\leftm$ into and $\rightm$ out of each DFA.
We then add a special state $p^*$ in its own initial irreducible component, whose follower set contains $\{\,\leftm w \rightm : w\in\Sigma^*\,\}$.
(See Figure~\ref{fig:irred-reduction} for a visualization.)
Letting $H$ be the whole graph minus the special state $p^*$, we can therefore test whether the DFA languages union to $\Sigma^*$ by asking whether $\shift{G} = \shift{H}$, i.e., whether $p^*$ was needed to cover all possible strings $w\in\Sigma^*$ between $\leftm$ and $\rightm$.
Equivalently, we could test $\shift{G} \subseteq \shift{H}$, since the reverse inclusion is immediate.
As $H$ is an irreducible presentation, we could also test whether $\shift{G}$ is irreducible.
Finally, the reduction to \sdpp follows for the following reasons:
first, $H$ is a synchronizing determinstic presentation (as
$\rightm \ell^{i-1}$ synchronizes to $p_i$ for all $i$), so when the
langauges of the DFAs union to $\Sigma^*$, $H$ is a synchronizing
deterministic presentation for $\shift{G}$; second, when there is a
word $w \in \Sigma^*$ not in the language of any of the DFAs, one can
show $\shift{G}$ does not have a synchronizing deterministic
presentation by invoking Theorem \ref{thm:sdp-char} and showing that any $u$ such that
$u \leftm w \rightm \in \Bc(\shift{G})$ is not intrinisically
synchronizing.

\begin{figure}[t]
  \centering
  {
    \begin{tikzpicture}[rounded corners=10pt]

      \path (1.5*\xgap , 2.25*\ygap) node[vertex] (dots) {$\vdots$};
      \path (0, 2*\ygap) node[vertex] (Pdots) {$\vdots$};
      \path (0, 4*\ygap) node[vertex] (P*) {$p^*$} ++(1.5*\xgap,0) node[vertex] (s*) {$s^*$};

      \fill (0, 0) node[vertex] (P1) {$p_1$} 
      ++(\xgap , 0) node[vertex] (s1) {$s_1$} 
      +(0.5*\xgap, 0.35*\ygap) node[vertex] (M1) {$M_1$}
      +(0.5*\xgap,0) node {$\dots$}
      ++(\xgap , 0) circle[radius=1pt] node[draw, circle, inner sep=2pt] (f1) {};

      \fill (0, \ygap) node[vertex] (P2) {$p_2$} 
      ++(\xgap , 0) node[vertex] (s2) {$s_2$}
      +(0.5*\xgap, 0.35*\ygap) node[vertex] (M2) {$M_2$}
      +(0.5*\xgap,0) node {$\dots$}
      ++(\xgap , 0) circle[radius=1pt] node[draw, circle, inner sep=2pt] (f2) {};

      \fill (0, 3*\ygap) node[vertex] (Pn) {$p_n$} 
      ++(\xgap , 0) node[vertex] (sn) {$s_n$}
      +(0.5*\xgap, 0.35*\ygap) node[vertex] (Mn) {$M_n$}
      +(0.5*\xgap,0) node {$\dots$}
      ++(\xgap , 0) circle[radius=1pt] node[draw, circle, inner sep=2pt] (fn) {};

      \def\ingap{\arrowgap}

      \begin{scope}[-{Stealth[round,width=3pt]}]
        \draw (f1) -| (3*\xgap, -\ygap) -| ($(P1.south)+(1.5*\ingap, 0)$);
        \draw (f2) -| (3*\xgap+\arrowgap, -\ygap-\arrowgap) -| ($(P1.south)+(.5*\ingap, 0)$);
        \draw (fn) -| (3*\xgap+2*\arrowgap, -\ygap-2*\arrowgap) -| ($(P1.south)-(.5*\ingap, 0)$);
        \draw (s*) -| (3*\xgap+3*\arrowgap, -\ygap-3*\arrowgap) -| ($(P1.south)-(1.5*\ingap, 0)$);
      \end{scope}

      \draw (P1) to node[left] {$\ell$} (P2);
      \draw (P2) to node[left] {$\ell$} (Pdots);
      \draw (Pdots) to node[left] {$\ell$} (Pn);

      \draw (P1) to node[above] {$\leftm$} (s1);
      \draw (P2) to node[above] {$\leftm$} (s2);
      \draw (Pn) to node[above] {$\leftm$} (sn);
      \draw (P*) to (s*);
      \path (P*) ++(\xgap, 0) node (phantom) {$\phantom{s_n}$};
      \path (P*) to node[above] {$\leftm$} (phantom);

      \path (f1) to node[above] {$\rightm$} +(\xgap, 0);
      \path (f2) to node[above] {$\rightm$} +(\xgap, 0);
      \path (fn) to node[above] {$\rightm$} +(\xgap, 0);
      \path (P*) ++(2*\xgap, 0) to node[above] {$\rightm$} +(\xgap, 0);

      \node[draw, dashed, rectangle, fit=(M1) (s1) (f1)] {};
      \node[draw, dashed, rectangle, fit=(M2) (s2) (f2)] {};
      \node[draw, dashed, rectangle, fit=(Mn) (sn) (fn)] {};

      \draw (P1.south west) to[left loop] node[left] {$*$} (P1.north west);
      \draw (P2.south west) to[left loop] node[left] {$*$} (P2.north west);
      \draw (Pn.south west) to[left loop] node[left] {$*$} (Pn.north west);
      \draw (P*.south west) to[left loop] node[left] {$*$} (P*.north west);
      \draw (s*.north west) to[left loop] node[above] {$\Sigma$} (s*.north east);

      \draw (Pn) to[out=10, in=180-10, relative] node[above left] {$\ell$} (s*);
    \end{tikzpicture}
  }
  \caption{Schematic of Reduction~\ref{red:irred}. The edges labeled $\ell$
  from each state $q \in Q_i$ to $s_i$ are not pictured.}
  \label{fig:irred-reduction}
\end{figure}
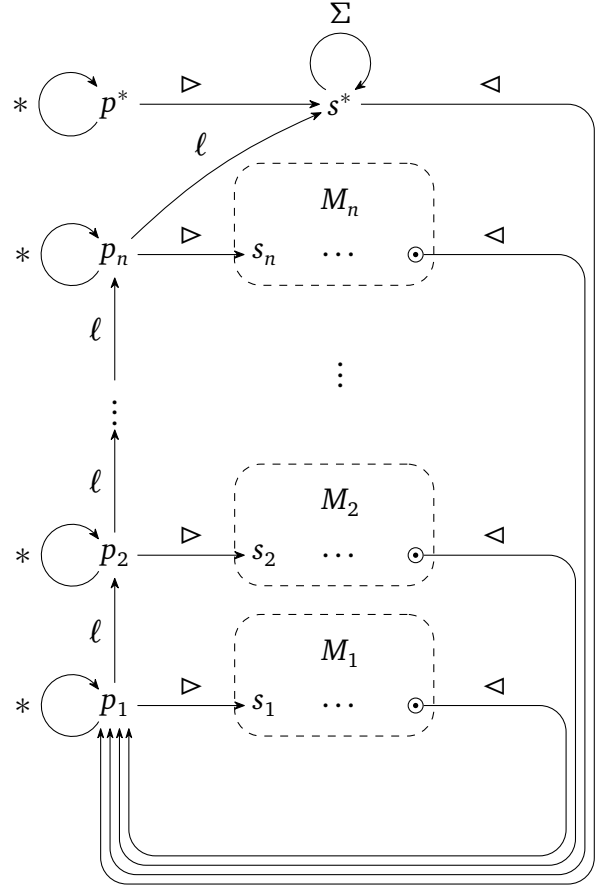

\begin{reduction}\label{red:irred}
  Let $M_1, \dots, M_n$ be an instance to the \dfacupp problem.
  Construct the deterministic presentation $G$ as follows.
  For each $i=1, \dots, n$,
  \begin{enumerate}
  \item add a state $p_i$ (the $i$th \term{pre-initial} state) to $G$;
  \item embed $M_i$ into $G$;
  \item add a self loop labeled $*$ on $p_i$;
  \item add an edge labeled $\leftm$ from $p_i$ to the corresponding
    initial state $s_i$;
  \item for each accepting state $q \in F_i$, add an edge labeled
    $\rightm$ from $q$ to $p_1$;
  \item for each state $q \in Q_i$, add an edge labeled $\ell$ from $q$
    to $s_i$.

  \end{enumerate}
  Then, add two states $p^*$ and $s^*$, add a self loop labeled $*$
  on $p^*$, add an edge labeled $\leftm$ from $p^*$ to $s^*$,
  and add an edge labeled $\rightm$ from $s^*$ to $p_1$.
  For each $a \in \Sigma$, add a self loop labeled $a$ on $s^*$.
  For each $i=1,\dots,n-1,$ add an edge labeled $\ell$ from $p_i$
  to $p_{i+1}$. Finally, add an edge labeled $\ell$ from $p_n$ to $s^*$.
  (See Figure~\ref{fig:irred-reduction}.)
\end{reduction}

Let $G$ be the deterministic presentation obtained from Reduction~\ref{red:irred}
on an instance $M_1, \dots, M_n$. 
Without loss of generality, for each $i$, we may assume that
(I) $F_i \neq \varnothing$, as otherwise, if $F_i = \varnothing$,
then $L(M_i) = \varnothing$ so thus $L(M_i)$ does not contribute to the union;
and (II) every state $q \in Q_i$ is reachable from $s_i$, as when
modifying $M_i$ to $M_i'$ by removing those unreachable states, we have
$L(M_i) = L(M_i')$.
Let $H$ be the subgraph in $G$ induced by every vertex but $p^*$.
The following lemma summarizes several useful properties of the reduction.

\begin{lemma}\label{lem:irred-reduction-properties}
  The following hold of Reduction~\ref{red:irred}.
  \begin{enumerate}[(i)]
  \item $H$ is synchronizing and irreducible, and $G$ is essential;
  \item $\Sigma^* \subseteq F_G(q)$ for all $q \in \bigcup_{i=1}^n Q_i$;
  \item $\leftm w \rightm \in F(p_i)$ if and only if $w \in L(M_i)$;
  \item $\leftm w \rightm \in \Bc(\shift{H})$ if and only if $w \in \bigcup_{i=1}^n L(M_i)$.
  \item If $\bigcup_{i=1}^n L(M_i) \neq \Sigma^*$, there exists $w\in\Sigma^*$ with $\leftm w \rightm \in \Bc(\shift{G}) \setminus \Bc(\shift{H})$.
  \end{enumerate}
\end{lemma}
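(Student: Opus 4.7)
The plan is to verify items (i)--(v) in order, exploiting the structure of the reduction together with the two WLOG assumptions that $F_i \neq \varnothing$ and every state of $M_i$ is reachable from $s_i$. The main tool throughout is determinism of $G$: the path from a fixed starting vertex with a fixed label is unique when it exists.

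For (i), essentialness of $G$ reduces to a direct check that every vertex has both an incoming and an outgoing edge, using the $*$-self loops on each $p_i$ and on $p^*$, the $\Sigma$-self loops on $s^*$, the fact that every state of $M_i$ has outgoing edges on each $a \in \Sigma$ (since $M_i$ is fully deterministic), and the various incoming edges supplied by the $\ell$-chain through the $p_i$'s and into $s^*$, the $\rightm$-edges from accepting states and from $s^*$ into $p_1$, and the $\ell$-edges from each $q \in Q_i$ into $s_i$. For irreducibility of $H$, note the cycle $p_1 \xrightarrow{\ell} p_2 \xrightarrow{\ell} \cdots \xrightarrow{\ell} p_n \xrightarrow{\ell} s^* \xrightarrow{\rightm} p_1$; then use the $\leftm$-edges together with reachability inside $M_i$ to reach any state of $Q_i$ from $p_i$, and use $\ell$ to $s_i$ combined with $F_i \neq \varnothing$ (to reach some accepting state) and $\rightm$ to return to $p_1$. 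For synchronization, the only $\rightm$-edges in $H$ point to $p_1$, so $Q_H \cdot \rightm = \{p_1\}$; hence $\rightm \ell^{i-1}$ synchronizes to $p_i$ for each $i$ and $\rightm \ell^n$ synchronizes to $s^*$, covering every vertex of $H$.

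Items (ii) and (iii) follow by tracing. For (ii), $M_i$ is fully deterministic, so every $w \in \Sigma^*$ labels a path starting at any $q \in Q_i$. For (iii), the edge $\leftm$ from $p_i$ must go to $s_i$, the word $w$ then traces the unique path in $M_i$ ending at $\delta_i(s_i, w)$, and $\rightm$ is defined at that endpoint iff $\delta_i(s_i, w) \in F_i$, i.e.\ iff $w \in L(M_i)$. For (iv), essentialness of $H$ with Proposition~\ref{prop:basic}(ii) gives $\Bc(\shift{H}) = \bigcup_{q \in Q_H} F_H(q)$, and the only vertices of $H$ with an outgoing $\leftm$-edge are the $p_i$; so $\leftm w \rightm \in \Bc(\shift{H})$ iff $\leftm w \rightm \in F_H(p_i)$ for some $i$, which by (iii) is iff $w \in L(M_i)$ for some $i$.

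For (v), pick any $w \in \Sigma^* \setminus \bigcup_i L(M_i)$; item (iv) immediately yields $\leftm w \rightm \notin \Bc(\shift{H})$. To show $\leftm w \rightm \in \Bc(\shift{G})$, trace $p^* \xrightarrow{\leftm} s^*$, loop on the $\Sigma$-self loops of $s^*$ to read off $w$ while staying at $s^*$, then $s^* \xrightarrow{\rightm} p_1$; this gives $\leftm w \rightm \in F_G(p^*) \subseteq \Bc(\shift{G})$ by Proposition~\ref{prop:basic}(ii). The part requiring the most care is (i), where the two WLOG assumptions are needed in concert: without $F_i \neq \varnothing$ the cycle back to $p_1$ cannot be closed through $M_i$, and without reachability from $s_i$ stranded components of $M_i$ could spoil both essentialness and mutual reachability. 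Once (i) is established, the remaining items are essentially routine bookkeeping against the definition of the reduction.
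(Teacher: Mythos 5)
Your overall route matches the paper's: establish (i) from the two WLOG assumptions, observe that the transition action of $G$ emulates each $M_i$, and derive (ii)--(v) by tracing paths. Items (ii)--(v) are handled correctly and essentially identically to the paper (for (iv), the paper phrases the key step as $Q_H \cdot \leftm = \{s_1,\dots,s_n\}$ where you phrase it as ``the only vertices with an outgoing $\leftm$-edge are the $p_i$''; these are the same observation).

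There is, however, one concrete flaw in your argument for synchronization in (i). You exhibit the words $\rightm\ell^{i-1}$ and $\rightm\ell^{n}$ and claim they cover ``every vertex of $H$,'' but they only synchronize to $p_1,\dots,p_n$ and $s^*$. The graph $H$ also contains every DFA state $q \in \bigcup_{i=1}^n Q_i$, and the definition of a synchronizing presentation requires \emph{every} vertex to be synchronizing, so as written the argument does not establish that $H$ is synchronizing. The repair is immediate from facts you already have: since $Q_H \cdot \rightm = \{p_1\}$ and the transition action is deterministic, $Q_H \cdot \rightm u = \{p_1 \cdot u\}$ for any $u \in F_H(p_1)$, so every vertex reachable from $p_1$ is synchronizing; your irreducibility argument already shows every vertex of $H$ (including each $q \in Q_i$, via $\leftm$ into $s_i$ and assumption (II)) is reachable from $p_1$. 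This is exactly how the paper closes the argument, so you should state that final step explicitly rather than listing only the $n+1$ words above.
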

\begin{proof}
  For (i), by assumption (I), there exists a state in $Q_i$ with
  an edge labeled $\rightm$ to $p_1$. By assumption (II), every
  state is reachable from $s_i$, so there exists a path from $s_i$
  to $p_1$. Thus for any state in $Q_i$, one can always find a
  way to $p_1$ by returning to $s_i$ via an $\ell$ edge, and then
  finding a way to $p_1$. As $p_1$ can reach any other vertex in $H$, any state in $Q_i$ can reach any other vertex in $H$.
  From this, we can see that $H$ is irreducible, and it follows that $G$ is essential.
  We have that $H$ is synchronizing as $Q_G \cdot \rightm = \{p_1\}$ and every vertex in $H$ is reachable from $p_1$.

  For the other statements, first note that each of the $M_i$ are emulated
  by the transition action of $G$ in the following way:
  for $q \in Q_i$ and $w \in \Sigma^*$,
  we have $w \in F_G(q)$ and $q \cdot w = \delta_i(q, w)$
  and $q \in F_i$ if and only if $\rightm \in F_G(q)$.
  Thus, (ii) follows.
  For (iii), note that $p_i \cdot \leftm = s_i$ and
  $w \rightm \in F_G(s_i)$ if and only if $w \in L(M_i)$;
  thus, $\leftm w \rightm \in F_G(p_i)$ if and only if $w \in L(M_i)$.
  For (iv), note that $Q_H \cdot \leftm = \{s_1, \dots, s_n\}$; thus,
  by the previous observations, $\leftm w \rightm \in \Bc(\shift{H})$
  if and only if $w \in \bigcup_{i=1}^n L(M_i)$.
  Finally (v) follows from (iv) and the fact that $\leftm w \rightm \in \Bc(\shift{G})$ for all $w\in\Sigma^*$.
\end{proof}

With these properties, we can establish the correctness of Reduction~\ref{red:irred}.
The first theorem shows that it reduces \dfacupp to \inclusionp.

\begin{theorem}\label{thm:inclusion-hard}
     $\bigcup_{i=1}^n L(M_i) = \Sigma^*$ if and only if $\shift{G} \subseteq \shift{H}$.
\end{theorem}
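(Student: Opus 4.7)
The plan splits the biconditional into its two implications. For ($\Leftarrow$), I use contraposition together with Lemma~\ref{lem:irred-reduction-properties}(v): if $\bigcup_i L(M_i)\neq\Sigma^*$, there exists $w\in\Sigma^*$ with $\leftm w\rightm\in\Bc(\shift{G})\setminus\Bc(\shift{H})$, and since shift inclusion is equivalent to language inclusion (Proposition~1.3.4 of Lind--Marcus, cited at the start of \S2.1), this gives $\shift{G}\not\subseteq\shift{H}$.

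For the forward direction, I would assume $\bigcup_i L(M_i)=\Sigma^*$ and show $\shift{G}\subseteq\shift{H}$ by exhibiting, for every bi-infinite path $x$ in $G$, a bi-infinite path in $H$ with the same label. The key structural observation is that $p^*$'s only incoming edge is its own self-loop labeled $*$: none of the edges added by Reduction~\ref{red:irred} terminate at $p^*$ besides this loop. Consequently, the set of times at which $x$ sits at $p^*$ is downward-closed in $\Zb$, and therefore equals $\varnothing$, all of $\Zb$, or a left-infinite interval $(-\infty,T]$.

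The first case is immediate ($x$ already lies in $H$); the second case has label $\ldots\ast\ast\ldots$, which is realized in $H$ by the $\ast$-self-loop at $p_1$. In the third case, $x$ takes the $\ast$-loop at $p^*$ for $t<T$ and the $\leftm$ edge to $s^*$ at time $T$; afterwards, $x$ either stays at $s^*$ forever or at some later time takes the $\rightm$ edge to $p_1$ and then continues in $H$. If $x$ stays at $s^*$, the post-$T$ label is $\leftm a_1 a_2\cdots$ with $a_j\in\Sigma$, which I realize in $H$ by looping at $p_1$ for $t<T$, taking $\leftm$ to $s_1$, and following the fully-deterministic DFA $M_1$ on $a_1 a_2\cdots$ (which never needs $\rightm$ and so stays inside $Q_1$). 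Otherwise the block of $\Sigma$-symbols between $\leftm$ and the first subsequent $\rightm$ is some $w\in\Sigma^*$, followed by the label of a one-sided ray in $H$ emanating from $p_1$; by hypothesis, I can pick $i$ with $w\in L(M_i)$, and simulate by looping at $p_i$ for $t<T$, taking $\leftm$ to $s_i$, running $M_i$ along $w$ to an accepting state $q\in F_i$, taking $\rightm$ from $q$ to $p_1$, and then matching $x$'s suffix from $p_1$ (which already lies in $H$).

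The main obstacle is the case analysis for paths that interact with $p^*$; everything else is routine once one observes that $p^*$ has itself as its sole predecessor, and the DFA-union-universality hypothesis is used exactly once, in the final sub-case, to reroute the $\leftm w\rightm$ block through the appropriate copy of $M_i$ inside $H$.
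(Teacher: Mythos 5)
Your proof is correct and follows essentially the same route as the paper's: the backward direction is identical (Lemma~\ref{lem:irred-reduction-properties}(v) plus the fact that shift inclusion is equivalent to language inclusion), and the forward direction rests on the same two observations the paper uses --- that $p^*$ has no incoming edge besides its own $*$-loop, and that the universality hypothesis lets every $\leftm w \rightm$ block be rerouted through some $M_i$ inside $H$. The only difference is presentational: you build bi-infinite paths in $H$ directly via the downward-closed set of times spent at $p^*$, whereas the paper reduces the claim to $F_G(p^*)\subseteq\Bc(\shift{H})$ and runs the corresponding case split on $p^*\cdot u$ for finite words $u$.
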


\begin{proof}
  Suppose $\bigcup_{i=1}^n L(M_i) = \Sigma^*$.
  To establish $\shift{G} \subseteq \shift{H}$, we only need to show
  $F_G(p^*) \subseteq \Bc(\shift{H})$.
  Let $u \in F_G(p^*)$.
  If $p^* \cdot u = p^*$, then by construction, we have $u = *^m$ for some $m \geq 0$,
  and as $u \in F_G(p_1)$, then $u \in \Bc(\shift{H})$.
  Otherwise, if $p^* \cdot u = s^*$, then we can 
  factor $u$ into $u=*^m \leftm w$, where $m \geq 0$ and $w \in \Sigma^*$.
  Similarly, by Lemma~\ref{lem:irred-reduction-properties}(ii),
  we can find $u \in F_G(p_1)$, so $u \in \Bc(\shift{H})$.
  Finally, if $p^* \cdot u \notin \{p^*,s^*\}$, then we can factor 
  $u$ into $u = u_1 u_2$, where $u_1 = *^m \leftm w \rightm$ for some 
  $m \geq 0$ and $w \in \Sigma^*$, $p^* \cdot u_1 = p_1$, and $u_2 \in F_G(p_1)$.
  As $\bigcup_{i=1}^n L(M_i) = \Sigma^*$, Lemma~\ref{lem:irred-reduction-properties}(iii) implies $\leftm w \rightm \in \Bc(\shift{H})$, and in particular, there is some $i$ such that $\leftm w \rightm \in F_G(p_i)$.
  As $p_i \cdot \leftm w \rightm = p_i \cdot *^m \leftm w \rightm = p_i \cdot u_1 = p_1$,
  we have $u_1 \in F_G(p_i)$ and $u_2 \in F_G(p_i \cdot u_1)$.
  Thus $u_1 u_2 = u \in F_G(p_i)$, and $u \in \Bc(\shift{H})$.

  Conversely, Lemma~\ref{lem:irred-reduction-properties}(v) gives some $w\in\Sigma^*$ with $\leftm w \rightm \in \Bc(\shift{G}) \setminus \Bc(\shift{H})$.
  Hence, $\Bc(\shift{G}) \nsubseteq \Bc(\shift{H})$, and thus $\shift{G} \nsubseteq \shift{H}$.
\end{proof}

Immediately, as $\shift{H} \subseteq \shift{G}$, we have that
\dfacupp reduces to \equalp.

\begin{corollary}\label{cor:equal-correctness}
  $\bigcup_{i=1}^n L(M_i) = \Sigma^*$ if and only if $\shift{G} = \shift{H}$.
\end{corollary}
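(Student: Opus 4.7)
The plan is to derive this as an immediate consequence of Theorem~\ref{thm:inclusion-hard}. Since $H$ is defined as the subgraph of $G$ induced by $Q_G \setminus \{p^*\}$, every edge of $H$ is an edge of $G$, so every bi-infinite path in $H$ is also a bi-infinite path in $G$. Consequently $\shift{H} \subseteq \shift{G}$ holds unconditionally, independently of the input instance $M_1,\dots,M_n$.

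Combining this with Theorem~\ref{thm:inclusion-hard}, which gives the equivalence $\shift{G} \subseteq \shift{H} \iff \bigcup_{i=1}^n L(M_i) = \Sigma^*$, the double inclusion characterization of equality yields
\[
\shift{G} = \shift{H} \;\iff\; \shift{G} \subseteq \shift{H} \text{ and } \shift{H} \subseteq \shift{G} \;\iff\; \shift{G} \subseteq \shift{H} \;\iff\; \bigcup_{i=1}^n L(M_i) = \Sigma^*,
\]
where the middle equivalence uses the automatic inclusion $\shift{H} \subseteq \shift{G}$.

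There is no real obstacle here: the corollary piggybacks entirely on the nontrivial direction already established in Theorem~\ref{thm:inclusion-hard}, and the only additional ingredient is the structural observation that $\shift{H} \subseteq \shift{G}$. In the write-up I would state this observation in one sentence and then apply Theorem~\ref{thm:inclusion-hard} directly.
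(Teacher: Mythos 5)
Your proposal is correct and matches the paper's own argument: the paper likewise derives the corollary immediately from Theorem~\ref{thm:inclusion-hard} together with the unconditional inclusion $\shift{H} \subseteq \shift{G}$, which holds because $H$ is an induced subgraph of $G$ and hence every bi-infinite path in $H$ is one in $G$. Nothing further is needed.
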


The reduction to \irredp now follows as well.

\begin{theorem}\label{thm:irred-hard}
  $\bigcup_{i=1}^n L(M_i) = \Sigma^*$ if and only if $\shift{G}$ is irreducible.
\end{theorem}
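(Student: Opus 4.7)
The plan is to reuse Corollary~\ref{cor:equal-correctness} for the forward direction and to exhibit an explicit obstruction to irreducibility for the converse.

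For the forward direction, assume $\bigcup_{i=1}^n L(M_i) = \Sigma^*$. Corollary~\ref{cor:equal-correctness} gives $\shift{G} = \shift{H}$. By Lemma~\ref{lem:irred-reduction-properties}(i), $H$ is irreducible as a labeled graph, and the background section notes that any shift space presented by an irreducible graph is itself irreducible. Hence $\shift{G}$ is irreducible.

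For the converse I would argue by contrapositive. Assuming $\bigcup_{i=1}^n L(M_i) \neq \Sigma^*$, Lemma~\ref{lem:irred-reduction-properties}(v) produces $w \in \Sigma^*$ such that $u = \leftm w \rightm$ lies in $\Bc(\shift{G}) \setminus \Bc(\shift{H})$. The plan is to show that no word $x$ satisfies $u x u \in \Bc(\shift{G})$; this furnishes two words in $\Bc(\shift{G})$ that cannot be connected, directly contradicting irreducibility of $\shift{G}$.

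The crux of the argument, and the step I expect to be the main obstacle, is pinning down that $u$ is readable in $G$ only starting from $p^*$. Inspecting Reduction~\ref{red:irred}, edges labeled $\leftm$ leave only the pre-initial states $p_i$ (to $s_i$) and $p^*$ (to $s^*$). If $p_i \cdot u$ were defined, then extending $\leftm$ by $w\rightm$ would force $w \in L(M_i)$ by Lemma~\ref{lem:irred-reduction-properties}(iii), contradicting the choice of $w$. So the only vertex $q$ with $u \in F_G(q)$ is $p^*$, and $p^* \cdot u = p_1$. Now, if $u x u \in \Bc(\shift{G})$, then some vertex $q \in Q_G$ admits $u x u$ as a follower, which forces $q = p^*$ and then $p_1 \cdot x = p^*$. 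However, the only edge entering $p^*$ in $G$ is its self-loop labeled $*$, so $p^*$ is unreachable along any path from $p_1$. No such $x$ can exist, and therefore $\shift{G}$ is not irreducible, completing the contrapositive.
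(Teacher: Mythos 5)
Your proof is correct and follows essentially the same route as the paper: the forward direction is identical (Corollary~\ref{cor:equal-correctness} plus irreducibility of $H$), and the converse exhibits the same obstruction, namely that $\leftm w \rightm \in \Bc(\shift{G})$ admits no word $x$ with $\leftm w \rightm\, x \,\leftm w \rightm \in \Bc(\shift{G})$. The only (minor) difference is bookkeeping: you establish this by noting that $p^*$ is the unique vertex with $\leftm w \rightm$ in its follower set and is unreachable from $p_1 = p^* \cdot \leftm w \rightm$, whereas the paper argues via the fact that $\rightm$ synchronizes to $p_1$ and that follower sets of vertices of $H$ lie in $\Bc(\shift{H})$; both are valid.
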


\begin{proof}
  If $\bigcup_{i=1}^n L(M_i) = \Sigma^*$, Corollary~\ref{cor:equal-correctness}
  implies $\shift{G} = \shift{H}$;
  as $\shift{H}$ is irreducible by Lemma~\ref{lem:irred-reduction-properties}(i), so is $\shift{G}$.
  Conversely, Lemma~\ref{lem:irred-reduction-properties}(v) gives some $w\in\Sigma^*$ with $\leftm w \rightm \in \Bc(\shift{G}) \setminus \Bc(\shift{H})$.
  As $\rightm$ synchronizes to $p_1$, for every $u \in F_{\shift{G}}(\leftm w \rightm)$, we have $F_{\shift{G}}(\leftm w \rightm u) = F_G(p_1 \cdot u)$.
  Furthermore, as $p_1 \cdot u \in Q_H$, we have $F_G(p_1\cdot u) \subseteq \Bc(\shift{H})$.
  Yet $\leftm w \rightm \notin \Bc(\shift{H})$, so we must have $\leftm w \rightm \notin F_G(p_1\cdot u)$.
  Thus, for every $u \in F_{\shift{G}}(\leftm w \rightm)$, we have
  $\leftm w \rightm \notin F_G(p_1\cdot u) = F_{\shift{G}}(\leftm w \rightm u)$.
  In other words, $\leftm w \rightm \in \Bc(\shift{G})$ but there is no word $u$
  such that $\leftm w \rightm u \leftm w \rightm \in \Bc(\shift{G})$.
\end{proof}

\begin{remark}
  Let $X$ be a shift space. Say $X$ is mixing if for every $u, v \in \Bc(X)$,
  there is an $N$ such that for every $n \geq N$, there is a word $w$
  with $|w| = n$ and $uwv \in \Bc(X)$. Mixing implies
  irreducibility, so if $\shift{G}$ is mixing, then $\shift{G}$ is
  irreducible. Note that $\shift{H}$ is mixing, as given $u, v \in \Bc(\shift{H})$,
  one can find words $w_1, w_2$ such that $uw_1$ synchronizes to $p_1$
  and $v \in F_G(p_1 \cdot w_2)$, and as $p_1 \cdot {*}^m = p_1$ for every $m$,
  we have that $u w_1 {*}^m w_2 v \in \Bc(\shift{H})$ for every $m$.
  Thus, $\shift{G}$ is irreducible if and only if it is mixing, so
  deciding if the sofic shift presented by a deterministic presentation
  is mixing is \cc{PSPACE}-hard.
  
  Similarly, say
  $X$ is nonwandering if for every $u \in \Bc(\shift{G})$, there
  is a word $w$ such that $uwu \in \Bc(X)$. Irreducibility implies
  nonwandering, so if $\shift{G}$ is irreducible, then $\shift{G}$
  is nonwandering. Note that the proof of Theorem \ref{thm:irred-hard} shows
  that if $\bigcup_{i=1}^n L(M_i) \neq \Sigma^*$, then
  $\shift{G}$ is not nonwandering. Thus, $\shift{G}$ is
  irreducible if and only if it is nonwandering, so
  deciding if the sofic shift presented by a deterministic presentation
  is nonwandering is \cc{PSPACE}-hard.
\end{remark}

Finally, we show that Reduction~\ref{red:irred} also reduces \dfacupp to \sdpp.
  
\begin{theorem}\label{thm:sdp-hard}
  $\bigcup_{i=1}^n L(M_i) = \Sigma^*$ if and only if $\shift{G}$ has a synchronizing
  deterministic presentation.
\end{theorem}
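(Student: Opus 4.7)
The forward direction follows from the preceding results. If $\bigcup_{i=1}^n L(M_i) = \Sigma^*$, then by Corollary~\ref{cor:equal-correctness} we have $\shift{G} = \shift{H}$, and by Lemma~\ref{lem:irred-reduction-properties}(i) $H$ is synchronizing; since $H$ is deterministic by construction, $H$ itself serves as a synchronizing deterministic presentation of $\shift{G}$.

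For the reverse direction, suppose $\bigcup_{i=1}^n L(M_i) \neq \Sigma^*$. The plan is to prove the contrapositive via Theorem~\ref{thm:sdp-char}: I will exhibit a single word $u \in \Bc(\shift{G})$ such that no intrinsically synchronizing word $v$ for $\shift{G}$ satisfies $u \in F_{\shift{G}}(v)$, i.e., no such $v$ has $vu \in \Bc(\shift{G})$. Using Lemma~\ref{lem:irred-reduction-properties}(v), I fix $w_0 \in \Sigma^*$ with $u := \leftm w_0 \rightm \in \Bc(\shift{G}) \setminus \Bc(\shift{H})$; by part (iv), $w_0 \notin L(M_i)$ for every $i$.

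The crux will be a structural characterization of every $v$ satisfying $vu \in \Bc(\shift{G})$. Since $G$ is deterministic, each such path passes through a definite state $q^{\dagger}$ at the boundary between $v$ and $u$, from which $\leftm w_0 \rightm$ must be readable. As $\leftm$ is only outgoing from $\{p_1, \ldots, p_n, p^*\}$, and Lemma~\ref{lem:irred-reduction-properties}(iii) together with $w_0 \notin L(M_i)$ rules out each $p_i$, we must have $q^{\dagger} = p^*$. The only incoming edge to $p^*$ in $G$ is its own $*$-self-loop, so the path reading $v$ must both begin and remain at $p^*$, forcing $v = *^k$ for some $k \geq 0$. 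I expect this structural step to be the main obstacle, as it hinges on the ``topological isolation'' of $p^*$ from the rest of $G$.

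It will then suffice to show no $*^k$ is intrinsically synchronizing for $\shift{G}$, which I plan to do with witnesses $a := \rightm$ and $b := \leftm w_0 \rightm$. Direct checks give $a *^k \in \Bc(\shift{G})$ (starting at $s^*$: $s^* \cdot \rightm = p_1$, then $p_1 \cdot *^k = p_1$) and $*^k b \in \Bc(\shift{G})$ (starting at $p^*$). However, $a *^k b = \rightm *^k \leftm w_0 \rightm$ cannot be read from any starting state, because $\rightm$ always lands at $p_1$, $*^k$ remains at $p_1$, and reading $\leftm w_0 \rightm$ from $p_1$ would force $w_0 \in L(M_1)$, contradicting our choice of $w_0$. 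Hence $*^k$ is not intrinsically synchronizing, completing the reverse direction.
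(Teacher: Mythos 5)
Your proposal is correct and follows essentially the same route as the paper: the forward direction via Corollary~\ref{cor:equal-correctness} and Lemma~\ref{lem:irred-reduction-properties}(i), and the reverse direction via Theorem~\ref{thm:sdp-char}, observing that $p^*$ is the unique vertex with $\leftm w_0 \rightm$ in its follower set, that its topological isolation forces any candidate intrinsically synchronizing word to be $*^k$, and that $\rightm *^k$ and $*^k \leftm w_0 \rightm$ witness that $*^k$ is not intrinsically synchronizing. The only difference is cosmetic (contrapositive versus contradiction), and your explicit handling of the boundary state $q^{\dagger}$ correctly fills in the step the paper states as ``by construction.''
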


\begin{proof}
  If $\bigcup_{i=1}^n L(M_i) = \Sigma^*$, then Corollary~\ref{cor:equal-correctness} implies
  $\shift{G} = \shift{H}$, and as Lemma~\ref{lem:irred-reduction-properties}(i)
  implies $H$ is synchronizing, 
  then $\shift{G}$ has a synchronizing deterministic presentation.
  Conversely, Lemma~\ref{lem:irred-reduction-properties}(v) gives some $w\in\Sigma^*$ with $\leftm w \rightm \in \Bc(\shift{G}) \setminus \Bc(\shift{H})$.
  Suppose for a contradiction that $\shift{G}$ has a synchronizing deterministic presentation.
  By Theorem~\ref{thm:sdp-char}, there must be some $u \in \Bc(\shift{G})$
  such that $u$ is intrinsically synchronizing for $\shift{G}$ and
  $\leftm w \rightm \in F_{\shift{G}}(u)$.
  As $\leftm w \rightm \notin \Bc(\shift{H})$, the only vertex in $G$
  with $\leftm w \rightm$ in its follower set is $p^*$.
  Therefore, $\leftm w \rightm \in F_{\shift{G}}(u)$ implies $p^* \in Q_G \cdot u$.
  By construction, the only such $u$ take the form $u = *^k$ for some $k \geq 0$.
  However, $*^k$ is not intrinsically synchronizing for $\shift{G}$: we have
  $\rightm *^k \in \Bc(\shift{G})$ and
  $*^k \leftm w \rightm \in \Bc(\shift{G})$ but as
  $\leftm w \rightm \notin \Bc(\shift{H})$ and $\rightm *^k$
  synchronizes to a vertex in $H$, it must be the case that
  $\rightm *^k \leftm w \rightm \notin \Bc(\shift{G})$.
  Thus, $u=*^k$ is not intrinsically synchronizing, a contradiction.
  We conclude that $\shift{G}$ does not have a
  synchronizing deterministic presentation.
\end{proof}

\subsection{Hardness of SFT Testing and Minimization}
\label{sec:sft-reduct}

We now give a similar polynomial-time reduction, which reduces \dfacupp simultaneously to \sftp and \minimalp, giving the following.

\begin{theorem}\label{thm:sft-minimal-hard}
  The problems \sftp and \minimalp are \cc{PSPACE}-hard.
\end{theorem}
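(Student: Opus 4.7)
The plan is to give a polynomial-time reduction from \dfacupp that works simultaneously for both \sftp and \minimalp, by adapting Reduction~\ref{red:irred}. The target is a deterministic presentation $G$ (and, for \minimalp, an integer $k$) such that $\shift{G}$ is an SFT --- equivalently, admits a $k$-vertex deterministic presentation --- if and only if $\bigcup_{i=1}^n L(M_i) = \Sigma^*$.

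For the ``not an SFT'' direction, I would lift the argument from the proof of Theorem~\ref{thm:sdp-hard}. If the DFAs fail to cover, Lemma~\ref{lem:irred-reduction-properties}(v) supplies $w$ with $\leftm w \rightm \in \Bc(\shift{G}) \setminus \Bc(\shift{H})$. The $*$-loop at $p^*$ then ensures $*^k \leftm w \rightm \in \Bc(\shift{G})$ for every $k \geq 0$, while $\rightm *^k \in \Bc(\shift{G})$ synchronizes to $p_1$. Since $\leftm w \rightm \notin \Bc(\shift{H})$, one gets $\rightm *^k \leftm w \rightm \notin \Bc(\shift{G})$, so $*^k$ is not intrinsically synchronizing. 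As $k$ is arbitrary, Theorem~\ref{thm:sft-char} rules out $\shift{G}$ being an SFT.

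The main obstacle is the converse direction: establishing that $\shift{G}$ is an SFT whenever the DFAs cover $\Sigma^*$. Reduction~\ref{red:irred} alone is not obviously sufficient, since even when $\shift{G} = \shift{H}$, words such as $*^k$ may fail to be intrinsically synchronizing for $\shift{H}$ (individual $L(M_i)$ can still be proper subsets of $\Sigma^*$, so the follower sets $F_H(p_i)$ may genuinely differ). To overcome this, I would modify the reduction, for example by collapsing $p_1,\dots,p_n$ into a single shared pre-initial state so that all branches share one $*$-loop and $\leftm$-edge, or by giving the ``frame'' edges ($\leftm$, $\rightm$, $\ell$, $*$) unique labels and invoking Lemma~\ref{lem:sft-edge} on a canonical equivalent presentation. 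With such a modification, in the covering case all sufficiently long words should be intrinsically synchronizing, yielding SFT-ness via Theorem~\ref{thm:sft-char}.

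For \minimalp, I would use the same modified $G$ together with $k$ equal to the vertex count of the follower-separation of the analogue of $H$ (the unique minimal synchronizing deterministic presentation of the would-be covering shift, per Section~\ref{sec:iso-eq}). In the covering case, this follower-separated graph provides a $k$-vertex presentation of $\shift{G} = \shift{H}$, by Corollary~\ref{cor:equal-correctness}. In the non-covering case, I would show that every deterministic presentation of $\shift{G}$ requires more than $k$ vertices by exhibiting a follower set realized in $\shift{G}$ that is absent among the follower sets of $\shift{H}$ --- concretely, one containing $*^k \leftm w \rightm$ for uncovered $w$ --- and arguing this forces an additional state. Pinning down this lower bound is the subtler piece of the \minimalp side and is where I would expect the remaining technical difficulty.
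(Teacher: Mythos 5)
You correctly identify the crux: Reduction~\ref{red:irred} cannot serve here, because even in the covering case $\shift{G}=\shift{H}$ need not be an SFT (the words $*^k$, and more generally long words read inside the $M_i$, land on states with genuinely different follower sets, so they are never intrinsically synchronizing). But neither of your proposed repairs closes this gap. Collapsing $p_1,\dots,p_n$ into one pre-initial state would require $n$ outgoing edges all labeled $\leftm$, destroying determinism (and merging the $M_i$ into a single DFA for the union is exponential). Relabeling only the ``frame'' edges does not meet the hypothesis of Lemma~\ref{lem:sft-edge}, which needs \emph{every} edge uniquely labeled, and it does nothing about long words inside the DFAs failing to be intrinsically synchronizing. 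The paper's Reduction~\ref{red:sft} is built differently precisely to solve this: it deletes the pre-initial states, adds a self loop labeled $\ell$ on every DFA state and a $\leftm$ edge from every DFA state back to its own initial state, and adds a separate universal component $s^*$ whose follower set contains $(\Sigma\cup\{\ell\})^*\rightm$. In the covering case the whole shift then collapses to the fixed two-vertex, uniquely-labeled graph $H$ of Figure~\ref{fig:sft-h-graph}, which is an SFT by Lemma~\ref{lem:sft-edge}; in the non-covering case the $\ell$ loops produce, for an uncovered $w$ and any $M$, words $\leftm w\ell^M$ and $w\ell^M\rightm$ in $\Bc(\shift{G})$ with $\leftm w\ell^M\rightm\notin\Bc(\shift{G})$, so no step size works.

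Your \minimalp plan has the same unfinished core. Taking $k$ to be the size of the follower-separation of the Reduction~\ref{red:irred} graph $H$ leaves you needing to show that in the non-covering case every deterministic presentation of $\shift{G}$ needs more than $k$ vertices; exhibiting one ``new'' follower set realized in $\shift{G}$ does not by itself bound the state count of a minimal presentation of $\shift{G}$ against that of $\shift{H}$, since minimal deterministic presentations of reducible sofic shifts are not unique and are not governed by follower sets of words alone. The paper avoids this entirely: with Reduction~\ref{red:sft} one takes $k=2$ and proves directly (Theorem~\ref{thm:minimal-hard}) that any two-vertex deterministic presentation of $\shift{G}$ is forced, edge by edge, to be isomorphic to $H$, whence $\shift{G}=\shift{H}$ and covering follows from Corollary~\ref{cor:sft-eq}. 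So the missing idea is the construction itself --- the $\ell$ delay loops plus the universal $s^*$ component collapsing everything to a fixed two-vertex target --- and without it both of the places you flag as ``the remaining technical difficulty'' are exactly the steps that fail.
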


The reduction is similar in spirit to Reduction~\ref{red:irred}.
We still add edges labeled $\rightm$ out of each DFA into a terminal state, but
instead of adding edges labeled $\leftm$ into the DFAs from new pre-initial states, we
instead add these edges from within the DFAs to their corresponding initial states.
We also add self loops on each DFA state labeled $\ell$.
We then add a special state $s^*$ in its own initial component, whose follower set contains $\{w \rightm : w\in(\Sigma\cup\{\ell\})^*\}$.
See Figure~\ref{fig:sft-reduction} for a visualization.
The first observation we make is that, if and only if the DFA languages union to $\Sigma^*$, the shift $\shift{G}$ is presented by the graph $H$ in Figure~\ref{fig:sft-h-graph}.
Since $H$ presents an SFT, to show that we reduce to the \sftp problem, we need only argue that $\shift{G}$ is not an SFT when there is some word $w$ not in the language of any DFA.
Because the $\ell$ self loops arbitrarily delay the DFA decision to accept or reject, they prevent $\shift{G}$ from having a finite list of forbidden words, or equivalently, from being $M$-step for any finite $M$.
Finally, to show we reduce to \minimalp, we show that $\shift{G}$ does not have a $2$-vertex presentation when the DFA languages do not union to $\Sigma^*$.

\begin{figure}[t]
  \centering
  {
    \begin{tikzpicture}[rounded corners=10pt]

      \path (0.5*\xgap, 2.2*\ygap) node[vertex] (dots) {$\vdots$};

      \fill (0,0) node[vertex] (s1) {$s_1$} 
      +(0.5*\xgap, 0.35*\ygap) node[vertex] (M1) {$M_1$}
      +(0.5*\xgap,0) node {$\dots$}
      ++(\xgap , 0) circle[radius=1pt] node[draw, circle, inner sep=2pt] (f1) {};
      
      \fill (0, 1.2*\ygap) node[vertex] (s2) {$s_2$}
      +(0.5*\xgap, 0.35*\ygap) node[vertex] (M2) {$M_2$}
      +(0.5*\xgap,0) node {$\dots$}
      ++(\xgap , 0) circle[radius=1pt] node[draw, circle, inner sep=2pt] (f2) {};
      
      \fill (0, 3*\ygap) node[vertex] (sn) {$s_n$}
      +(0.5*\xgap, 0.35*\ygap) node[vertex] (Mn) {$M_n$}
      +(0.5*\xgap,0) node {$\dots$}
      ++(\xgap , 0) circle[radius=1pt] node[draw, circle, inner sep=2pt] (fn) {};

      \fill (0.5*\xgap, 4*\ygap) node[vertex] (s*) {$s^*$};

      \path (2*\xgap, 2*\ygap) node[vertex] (T) {$t$};
      \draw (T.south east) to[right loop] node[right] {$\star$} (T.north east);

      \path (f1) to node[above] {$\rightm$} +(\xgap, 0);
      \path (f2) to node[above] {$\rightm$} +(\xgap, 0);
      \path (fn) to node[above] {$\rightm$} +(\xgap, 0);
      \path (1*\xgap, 4*\ygap) node[circle, inner sep=2pt] (pf*) {}
      (pf*) to node[above] {$\rightm$} +(\xgap, 0);

      \draw (f1) -| ($(T.south) + (0.5*\arrowgap, 0)$);
      \draw (f2) -| ($(T.south) - (0.5*\arrowgap, 0)$);
      \draw (s*) -| ($(T.north) + (0.5*\arrowgap, 0)$);
      \draw (fn) -| ($(T.north) - (0.5*\arrowgap, 0)$);

      \draw (f1) to[bend left=20] node[below] (l1) {$\leftm$} (s1);
      \draw (f2) to[bend left=20] node[below] (l2) {$\leftm$} (s2);
      \draw (fn) to[bend left=20] node[below] (l3) {$\leftm$} (sn);

      \node[draw, dashed, rectangle, inner xsep=10pt, fit=(M1) (s1) (f1) (l1)] {};
      \node[draw, dashed, rectangle, inner xsep=10pt, fit=(M2) (s2) (f2) (l2)] {};
      \node[draw, dashed, rectangle, inner xsep=10pt, fit=(Mn) (sn) (fn) (l3)] {};

      \draw (s*.north west) to[left loop] node[above] {$\Sigma, \ell$} (s*.north east);

      \begin{scope}[left loop, arc center offset=3pt]
        \draw (s1.north west) to node[above] {$\leftm, \ell$} (s1.north east);
        \draw (s2.north west) to node[above] {$\leftm, \ell$} (s2.north east);
        \draw (sn.north west) to node[above] {$\leftm, \ell$} (sn.north east);

        \path (f1) node[vertex] (f1) {$\phantom{s_1}$};
        \path (f2) node[vertex] (f2) {$\phantom{s_1}$};
        \path (fn) node[vertex] (fn) {$\phantom{s_1}$};

        \draw (f1.north west) to node[above] {$\ell$} (f1.north east);
        \draw (f2.north west) to node[above] {$\ell$} (f2.north east);
        \draw (fn.north west) to node[above] {$\ell$} (fn.north east);
      \end{scope}

    \end{tikzpicture}
  }
  \caption{Schematic of Reduction~\ref{red:sft}.}
  \label{fig:sft-reduction}
\end{figure}
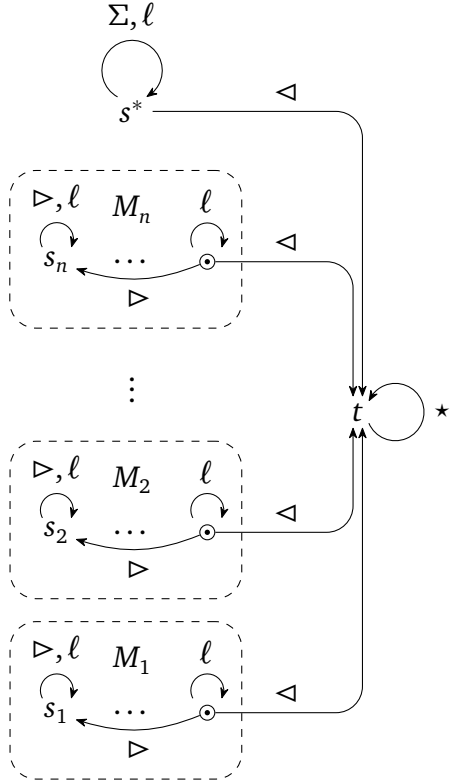

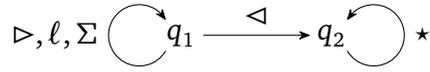
\begin{figure}[t]
  \centering
  \begin{tikzpicture}
    \node[vertex] (A) at (0,0) {$q_1$};
    \node[vertex] (B) at (2,0) {$q_2$};

    \draw (A.south west) to[left loop] node[left] {$\leftm, \ell, \Sigma$} (A.north west);
    \draw (B.south east) to[right loop] node[right] {$\star$} (B.north east);
    \draw (A) to node[above] {$\rightm$} (B);
  \end{tikzpicture}  
  \caption{The graph $H$.  As in Figure~\ref{fig:sft-reduction}, the $\Sigma$ above the self loop on $q_2$ represents a self loop labeled $a$ for each $a \in \Sigma$.}
  \label{fig:sft-h-graph}
\end{figure}

\begin{reduction}\label{red:sft}
  Let $M_1, \dots, M_n$ be an instance to the \dfacupp problem.
  Construct the deterministic presentation $G$ as follows.
  Add a state $t$ (the \term{terminal} state),
  and add a self loop labeled $\star$ on $t$.
  Add a state $s^*$, and add
  self loops on $s^*$ labeled by each symbol in $\Sigma\cup\{\ell\}$.
  Add an edge labeled $\rightm$ from $s^*$ to $t$.
  Finally, for each $i=1, \dots, n$,
  \begin{enumerate}
  \item embed $M_i$ into $G$;
  \item for each state $q \in Q_i$, add an edge labeled $\leftm$ from
    $q$ to $s_i$
  \item for each accepting state $q \in F_i$, add an edge labeled $\rightm$
    from $q$ to $t$;
  \item for each state $q \in Q_i$, add a self loop labeled $\ell$ on $q$.
  \end{enumerate}
  See Figure~\ref{fig:sft-reduction} for a visualization.
\end{reduction}

We once again summarize the salient properties of the reduction. First,
we define a notation that will be used multiple times: for a word
$w$, we let $h_\ell(w)$ denote $w$ with all the $\ell$'s removed.
(That is, $h_\ell$ is the  string homomorphism such that $h_\ell(\ell) = \epsilon$ and $h(a) = a$ for $a \neq \ell$.)

\begin{lemma}\label{lem:red-sft-lemma}
  The following hold of Reduction~\ref{red:sft} and the graph $H$ from Figure~\ref{fig:sft-h-graph}.
  \begin{enumerate}[(i)]
  \item $\shift{G} \subseteq \shift{H}$;
  \item $(\Sigma \cup \{\ell, \leftm\})^*,
    \subseteq F_G(q)$ for all $q \in \bigcup_{i=1}^n Q_i$;
  \item for $w\in\Sigma^*$,
    $w \rightm \in F(s_i)$ if and only if $w \in L(M_i)$;
  \item for $w\in\Sigma^*$,
    $\leftm w \rightm \in \Bc(\shift{G})$ if and only if $w \in \bigcup_{i=1}^n L(M_i)$.
  \item for $w \in (\Sigma \cup \{\ell\})^*$,
    $\leftm h_\ell(w) \rightm \in \Bc(\shift{G})$ if and only if $\leftm w \rightm \in \Bc(\shift{G})$;
  \end{enumerate}
\end{lemma}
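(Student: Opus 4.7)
The plan is to verify all five parts by direct structural inspection of Reduction~\ref{red:sft}. The main tool is a catalogue of outgoing edges at each class of vertex in $G$: at a state $q\in Q_i$, one has the $M_i$-edges (labels in $\Sigma$), an $\ell$-self-loop, a $\leftm$-edge to $s_i$, and, if $q\in F_i$, a $\rightm$-edge to $t$; at $s^*$, one has self-loops on $\Sigma\cup\{\ell\}$ and a $\rightm$-edge to $t$; at $t$, one has only a $\star$-self-loop.

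For (i), I would first read off the language of $\shift{H}$ directly from $H$: it equals $(\Sigma\cup\{\ell,\leftm\})^* \cup \{\star\}^* \cup (\Sigma\cup\{\ell,\leftm\})^*\{\rightm\}\{\star\}^*$. Using the catalogue, in $G$ the label $\rightm$ appears only on edges into $t$, the label $\star$ only on the self-loop at $t$, and every other edge carries a label in $\Sigma\cup\{\ell,\leftm\}$. Any finite factor of a bi-infinite path in $G$ therefore either avoids $t$, stays at $t$, or enters $t$ exactly once through a $\rightm$-edge, matching one of the three forms above.

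Parts (ii)--(iv) then follow routinely from the catalogue together with the emulation of $M_i$ in $G$. For (ii), an induction on $|w|$ shows that reading any symbol from $\Sigma\cup\{\ell,\leftm\}$ at a state of $Q_i$ stays within $Q_i$, so every such word is in $F_G(q)$. For (iii), reading $w\in\Sigma^*$ from $s_i$ retraces $M_i$ to the vertex $\delta_i(s_i,w)$, from which $\rightm$ is defined iff $\delta_i(s_i,w)\in F_i$, i.e., iff $w\in L(M_i)$. For (iv), the only outgoing $\leftm$-edges in $G$ are those from states of some $Q_i$ into $s_i$, so $\leftm w\rightm$ appears in a bi-infinite path iff $w\rightm\in F_G(s_i)$ for some $i$ (using essentialness of $G$ to extend any finite path into a bi-infinite one), which by (iii) is equivalent to $w\in\bigcup_i L(M_i)$.

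For (v), the $\ell$-self-loop at every $q\in Q_i$ gives $q\cdot w = q\cdot h_\ell(w)$ for all $w\in(\Sigma\cup\{\ell\})^*$, since reading an $\ell$ fixes the current state. Replaying the argument from (iv) with $w$ and with $h_\ell(w)$ then yields both directions of the equivalence. There is no genuine obstacle in the proof; the only thing to be careful about is being exhaustive in (i) about which labels appear on which edges, so that no stray label sneaks a word into $\Bc(\shift{G})\setminus\Bc(\shift{H})$.
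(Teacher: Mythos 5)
Your proposal is correct and follows essentially the same route as the paper: part (i) by a case analysis on whether the word contains $\rightm$ (or only $\star$'s), and parts (ii)--(v) from the observation that $G$ emulates each $M_i$ together with the facts that $\ell$ acts as the identity and $\leftm$ resets to $s_i$ on states of $Q_i$, so that $Q_G \cdot \leftm = \{s_1,\dots,s_n\}$ and $Q_G\cdot\leftm w = Q_G\cdot\leftm h_\ell(w)$. The only cosmetic difference is that you first write out $\Bc(\shift{H})$ as an explicit regular expression before matching the three cases, whereas the paper verifies membership in $F_H(q_1)$ or $F_H(q_2)$ directly.
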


\begin{proof}
  For (i), let $w \in \Bc(\shift{G})$. If $w$ does not contain
  $\rightm$, then either $w \in (\Sigma \cup \{\ell, \leftm\})^*$
  or $w = \star^m$ for some $m \geq 0$;
  if $w \in (\Sigma \cup \{\ell, \leftm\})^*$, then
  $w \in F_H(q_1)$;
  if $w = \star^m$ for some $m \geq 0$, then $w \in F_H(q_2)$.
  Otherwise, if $w$ contains $\rightm$, then we can factor
  $w$ into $w = u \rightm \star^m$ where $u \in (\Sigma \cup \{\ell, \leftm\})^*$
  and $m \geq 0$, for which it follows that $w \in F_H(q_1)$. Thus,
  for every $w \in \Bc(\shift{G})$, we have $w \in \Bc(\shift{H})$.  
  
  For the other statements, first note that each of the $M_i$ are emulated
  by the transition action of $G$ in the following way:
  for $q \in Q_i$ and $w \in \Sigma^*$,
  we have $w \in F_G(q)$ and $q \cdot w = \delta_i(q, w)$
  and $q \in F_i$ if and only if $\rightm \in F_G(q)$.
  Thus, (ii) follows from the emulation observation and
  the fact that $\ell \in F_G(q)$ and $q \cdot \ell = q$
  and $\leftm \in F_G(q)$ and $q \cdot \leftm = s_i$
  for all $q \in Q_i$.
  Statement (iii) follows immediately from the emulation observation as well.
  Note that $Q_G \cdot \leftm = \{s_1, \dots, s_n\}$, so
  (iv) follows from (iii). Finally, for (v), as $q \cdot \ell = q$
  for $q \in \bigcup_{i=1}^n Q_i$, by induction
  on the number of $\ell$'s in $w$, one can show that
  $Q_G \cdot \leftm w = Q_G \cdot \leftm h_\ell(w)$;
  thus, we have $Q_G \cdot \leftm h_\ell(w) \rightm = Q_G \cdot \leftm w \rightm$,
  which implies $Q_G \cdot \leftm h_\ell(w) \rightm \neq \varnothing$ if and only if
  $Q_G \cdot \leftm w \rightm \neq \varnothing$.
\end{proof}

To show the correctness of Reduction~\ref{red:irred}, we first give an alternate reduction to \inclusionp.

\begin{theorem}\label{thm:fixed-h-subshift}
  $\bigcup_{i=1}^n L(M_i) = \Sigma^*$ if and only if $\shift{H} \subseteq \shift{G}$.
\end{theorem}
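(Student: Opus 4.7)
My plan is to prove each direction of the biconditional separately, with the reverse direction following quickly from already-established lemmas and the forward direction requiring case analysis on words in $\Bc(\shift{H})$. For the reverse direction I will argue by contrapositive: if $\bigcup_{i=1}^n L(M_i) \neq \Sigma^*$, pick any $w \in \Sigma^* \setminus \bigcup_i L(M_i)$. The word $\leftm w \rightm$ lies in $\Bc(\shift{H})$, since it labels a path at $q_1$ using the self loop on $q_1$ (for $\leftm$ and the letters of $\Sigma$) followed by the $\rightm$ edge to $q_2$. By Lemma~\ref{lem:red-sft-lemma}(iv), however, $\leftm w \rightm \notin \Bc(\shift{G})$, hence $\shift{H} \nsubseteq \shift{G}$.

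For the forward direction, assume $\bigcup_i L(M_i) = \Sigma^*$. Inspecting the bi-infinite paths in $H$ shows that every word $w \in \Bc(\shift{H})$ falls into one of three classes: (a) $w \in (\Sigma \cup \{\ell, \leftm\})^*$, (b) $w = \star^m$ for some $m \geq 0$, or (c) $w = u \rightm \star^m$ for some $u \in (\Sigma \cup \{\ell, \leftm\})^*$ and $m \geq 0$. Case (a) follows immediately from Lemma~\ref{lem:red-sft-lemma}(ii), since any such $w$ lies in $F_G(q)$ for any $q \in Q_1$. Case (b) follows from the $\star$ self loop at $t$. Thus the real work lies in case (c).

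In case (c), I will split on whether $u$ contains $\leftm$. If $u = u_1 \leftm u_2$ with $u_2 \in (\Sigma \cup \{\ell\})^*$ obtained by cutting $u$ at its last $\leftm$, then $h_\ell(u_2) \in \Sigma^* = \bigcup_i L(M_i)$, so some $i$ satisfies $h_\ell(u_2) \in L(M_i)$; Lemma~\ref{lem:red-sft-lemma}(iii), combined with the $\ell$ self loops on the states of $Q_i$, yields $u_2 \rightm \in F_G(s_i)$. Starting from any state of $Q_i$ and using the DFA emulation, the $\ell$ self loops, and the $\leftm$-edges to $s_i$, I can read $u_1$ while remaining in $Q_i$, then $\leftm$ to reach $s_i$, then $u_2 \rightm$ to reach $t$, then $\star^m$ via the self loop at $t$. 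If $u$ contains no $\leftm$, so $u \in (\Sigma \cup \{\ell\})^*$, then starting at $s^*$ I read $u$ via its self loops, then $\rightm$ to $t$, then $\star^m$. The principal obstacle is the bookkeeping in case (c), especially ensuring the choice of starting state lines up with the appropriate $M_i$; everything else reduces to invoking Lemma~\ref{lem:red-sft-lemma}.
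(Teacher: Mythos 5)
Your proposal is correct and follows essentially the same route as the paper's proof: the necessity of $\bigcup_i L(M_i)=\Sigma^*$ via the word $\leftm w\rightm$ and Lemma~\ref{lem:red-sft-lemma}(iv) (you phrase it contrapositively, the paper directly), and the sufficiency via the same exhaustive case split on words of $\Bc(\shift{H})$, cutting at the last $\leftm$ and using the $\ell$-erasing homomorphism together with Lemma~\ref{lem:red-sft-lemma}(iii) and (v) to land in some $F_G(s_i)$. No substantive differences.
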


\begin{proof}
  First suppose $\Bc(\shift{H}) \subseteq \Bc(\shift{G})$.
  For every word $w \in \Sigma^*$, we have $\leftm w \rightm \in \Bc(\shift{H})$, giving $\leftm w \rightm \in \Bc(\shift{G})$.
  Lemma~\ref{lem:red-sft-lemma}(iv) now implies $\bigcup_{i=1}^n L(M_i) = \Sigma^*$.
  For the converse, suppose $\bigcup_{i=1}^n L(M_i) = \Sigma^*$, and let $u \in \Bc(\shift{H})$.
  There are two cases: either $u \in F_H(q_1)$ or $u \in F_H(q_2)$. If
  $u \in F_H(q_2)$, then $u = \star^m$ for some $m \geq 0$, which implies
  that $u \in F_G(t)$, and so $u \in \Bc(\shift{G})$. Thus, to complete
  the proof we need to show that if $u \in F_H(q_1)$, then $u \in \Bc(\shift{G})$.

  Suppose $u \in F_H(q_1)$. We further break this case into the possible values of $q_1 \cdot u$.
  If $q_1 \cdot u = q_1$, then $u \in (\Sigma \cup \{\ell, \leftm\})^*$, so $u \in F_G(s_1)$ and thus $u \in \Bc(\shift{G})$.
  If $q_1 \cdot u = q_2$, then $u = v \rightm \star^m$ for some $m \geq 0$ and $v \in (\Sigma \cup \{\ell, \leftm\})^*$.
  If $v$ does not contain the symbol $\leftm$, then $v \in (\Sigma \cup \{\ell\})^*$, which implies $v \rightm \star^m = u \in F_G(s^*)$ and thus $u \in \Bc(\shift{G})$.
  Otherwise, if $v$ contains the symbol $\leftm$, then we can factor $v$ into $v=x \leftm w$ where $w$ contains no $\leftm$; i.e. $w \in (\Sigma \cup \{\ell\})^*$.
  Then, we have that $h_\ell(w) \in \Sigma^*$,
  and as $\bigcup_{i=1}^n L(M_i) = \Sigma^*$,
  we have $\leftm h_\ell(w) \rightm \in \Bc(\shift{G})$.
  By Lemma~\ref{lem:red-sft-lemma}(v), we have $\leftm w \rightm \in \Bc(\shift{G})$,
  and by Lemma~\ref{lem:red-sft-lemma}(iii), we have $w \rightm \in F_G(s_i)$ for some $s_i$.
  Collecting facts, we have $x \leftm \in F_G(s_i)$ and $s_i \cdot x \leftm = s_i$
  and $s_i \cdot w \rightm = t$ and $*^m \in F_G(t)$.
  Combining those facts gives us that $x \leftm w \rightm \star^m = u \in F_G(s_i)$, so
  $u \in \Bc(\shift{G})$.
\end{proof}

\begin{remark}
  \label{remark:subshift-sdp-hard}
  Interestingly, Theorem~\ref{thm:fixed-h-subshift} gives us another proof that
  \inclusionp is \cc{PSPACE}-hard.
  However, we can
  easily extend Theorem~\ref{thm:fixed-h-subshift} to a stronger hardness result.\footnote{
    In fact, there is another hardness result proved by the previous
    theorem: \inclusionp is \cc{PSPACE}-hard even when the first argument is
    fixed. That is, for each deterministic presentation $H$, it is \cc{PSPACE}-hard
    to decide when given a deterministic presentation $G$ whether
    $\shift{H} \subseteq \shift{G}$.}
  Specifically,
  we can show that \inclusionp is \cc{PSPACE}-hard even when
  both input instances are synchronizing deterministic presentations.
  Note that $H$ is synchronizing while $G$ is not. Construct the
  presentation $G'$ as follows: construct $G$, and let
  $S \triangleq \{s_1, \dots, s_n, s^*\}$. For each
  $q \in S$, add a self loop labeled $\ell_q$ on $q$.

  For each vertex in $q \in S$, we have that $\ell_q$
  synchronizes to $q$ in $G'$. Note that every vertex
  is reachable from a vertex in $S$, so this implies
  that $G'$ is synchronizing. Here, we note that $\shift{H} \subseteq \shift{G}$ if and only if
  $\shift{H} \subseteq \shift{G'}$: the forward
  direction follows from the fact that $\shift{G} \subseteq \shift{G'}$,
  and the reverse direction follows from the fact that if $w \in \Bc(\shift{H})$
  and $w \in \Bc(\shift{G'})$,
  then $w$ does not contain the new labels $\{\ell_{s_1}, \dots, \ell_{s_n}, \ell_{s^*}\}$
  added in $G'$, so it must be the case that $w \in \Bc(\shift{G})$.
  This establishes the claim that \inclusionp is \cc{PSPACE}-hard
  even when both instances are synchronizing deterministic presentations.
\end{remark}

As $\shift{G} \subseteq \shift{H}$ by Lemma~\ref{lem:red-sft-lemma}(i), we have the following.

\begin{corollary}\label{cor:sft-eq}\label{cor:equal-hard}
  $\bigcup_{i=1}^n L(M_i) = \Sigma^*$ if and only if $\shift{H} = \shift{G}$.
\end{corollary}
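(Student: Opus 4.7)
The plan is to derive this corollary immediately by combining the two preceding results. Theorem~\ref{thm:fixed-h-subshift} already establishes the equivalence between $\bigcup_{i=1}^n L(M_i) = \Sigma^*$ and $\shift{H} \subseteq \shift{G}$, and Lemma~\ref{lem:red-sft-lemma}(i) gives the reverse inclusion $\shift{G} \subseteq \shift{H}$ unconditionally. So the whole content of the corollary is just upgrading a one-sided inclusion to a two-sided equality by adjoining the trivial half.

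For the forward direction, I would assume $\bigcup_{i=1}^n L(M_i) = \Sigma^*$, apply Theorem~\ref{thm:fixed-h-subshift} to get $\shift{H} \subseteq \shift{G}$, and combine this with $\shift{G} \subseteq \shift{H}$ from Lemma~\ref{lem:red-sft-lemma}(i) to conclude $\shift{G} = \shift{H}$. For the converse, if $\shift{G} = \shift{H}$, then in particular $\shift{H} \subseteq \shift{G}$, and Theorem~\ref{thm:fixed-h-subshift} again yields $\bigcup_{i=1}^n L(M_i) = \Sigma^*$.

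There is no real obstacle here; the corollary is essentially a packaging step. The only thing worth noting is that the direction $\shift{G} \subseteq \shift{H}$ from Lemma~\ref{lem:red-sft-lemma}(i) does the work of certifying that $\shift{G}$ never escapes the candidate SFT presentation $H$, so the nontrivial content of the equality is concentrated entirely in the subshift result of Theorem~\ref{thm:fixed-h-subshift}. Since this corollary will be used downstream (in particular to drive the reduction to \sftp), stating it separately just fixes a convenient handle on the characterization $\shift{G} = \shift{H}$.
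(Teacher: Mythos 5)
Your proposal is correct and matches the paper exactly: the paper derives this corollary in one line by combining Theorem~\ref{thm:fixed-h-subshift} with the unconditional inclusion $\shift{G} \subseteq \shift{H}$ from Lemma~\ref{lem:red-sft-lemma}(i), precisely as you do in both directions. No gaps.
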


We new show that Reduction~\ref{red:sft} reduces \dfacupp to \sftp.

\begin{theorem}\label{thm:sft-hard}
    $\bigcup_{i=1}^n L(M_i) = \Sigma^*$ if and only if $\shift{G}$ is an SFT.
\end{theorem}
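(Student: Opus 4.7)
The plan is to establish both directions of the biconditional by leveraging the machinery built up in Reduction~\ref{red:sft} and Lemma~\ref{lem:red-sft-lemma}, combined with the SFT characterization in Theorem~\ref{thm:sft-char} and the sufficient condition in Lemma~\ref{lem:sft-edge}.

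For the forward direction, suppose $\bigcup_{i=1}^n L(M_i) = \Sigma^*$. Corollary~\ref{cor:sft-eq} already gives $\shift{G} = \shift{H}$, where $H$ is the two-vertex graph of Figure~\ref{fig:sft-h-graph}. I would then simply observe that every edge of $H$ carries a distinct label: the single $\rightm$ edge from $q_1$ to $q_2$, the $\star$ self loop on $q_2$, and the self loops on $q_1$ labeled by $\leftm$, $\ell$, and one for each symbol $a \in \Sigma$, all of which are pairwise distinct. Applying Lemma~\ref{lem:sft-edge} to $H$ yields that $\shift{H} = \shift{G}$ is an SFT.

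For the reverse direction, I would argue the contrapositive: assuming there exists $w \in \Sigma^*$ with $w \notin \bigcup_{i=1}^n L(M_i)$, I would use Theorem~\ref{thm:sft-char} by producing, for every $M \geq 0$, a word of length at least $M$ in $\Bc(\shift{G})$ that is not intrinsically synchronizing for $\shift{G}$. The natural candidate is $u = \ell^k$ for $k \geq M$; this lies in $\Bc(\shift{G})$ since $\ell$ self-loops on every DFA state (and also on $s^*$). To show $u$ is not intrinsically synchronizing, I would exhibit the witnesses $x = \leftm$ and $y = w\rightm$. The prefix $xu = \leftm\ell^k$ lies in $\Bc(\shift{G})$ because from any state $q \in Q_i$, the edge $\leftm$ sends $q$ to $s_i$ and then each $\ell$ loops at $s_i$. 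The suffix $uy = \ell^k w \rightm$ lies in $\Bc(\shift{G})$ because at $s^*$ each symbol of $\Sigma \cup \{\ell\}$ self-loops and $\rightm$ leaves to $t$.

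The key step, and the one I expect to be the main obstacle to nail down cleanly, is showing $xuy = \leftm \ell^k w \rightm \notin \Bc(\shift{G})$. Here Lemma~\ref{lem:red-sft-lemma}(v) does the work: it strips the $\ell^k$ padding, reducing the question to whether $\leftm w \rightm \in \Bc(\shift{G})$, which by Lemma~\ref{lem:red-sft-lemma}(iv) is equivalent to $w \in \bigcup_{i=1}^n L(M_i)$, contradicting the choice of $w$. Intuitively, the $\ell$ self loops on DFA states allow us to delay arbitrarily the commitment to which $M_i$ is being simulated, so no fixed bound $M$ can make every long word intrinsically synchronizing, and $\shift{G}$ fails to be an SFT.
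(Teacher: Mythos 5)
Your proposal is correct and follows essentially the same route as the paper: the forward direction via Corollary~\ref{cor:sft-eq} and Lemma~\ref{lem:sft-edge} applied to the uniquely-labeled $H$, and the reverse direction via Theorem~\ref{thm:sft-char} together with Lemma~\ref{lem:red-sft-lemma}(iv)--(v) to strip the $\ell$-padding. The only cosmetic difference is that you argue the contrapositive with the factorization $\leftm \cdot \ell^k \cdot w\rightm$, whereas the paper argues the implication directly with $\leftm \cdot w\ell^M \cdot \rightm$; the witnessing word and the lemmas doing the work are the same.
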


\begin{proof}
  The edges in $H$ are labeled uniquely, so
  by Lemma~\ref{lem:sft-edge}, $\shift{H}$ is an SFT.
  Thus, if we have $\bigcup_{i=1}^n L(M_i) = \Sigma^*$,
  then by Corollary~\ref{cor:sft-eq}, $\shift{G} = \shift{H}$ is an SFT.

  Conversely, suppose $\shift{G}$ is an SFT. By Theorem~\ref{thm:sft-char}, there is an $M$
  such that whenever
  $uv, vw \in \Bc(\shift{G})$ and $|v| \geq M$, then $uvw \in \Bc(\shift{G})$.
  Let $w \in \Sigma^*$. We can find $\leftm w \ell^M \in F_G(s_1)$
  and $w \ell^M \rightm \in F_G(s^*)$, so we have
  $\leftm w \ell^M,\, w \ell^M \rightm \in \Bc(\shift{G})$ and
  and thus $\leftm w \ell^M \rightm \in \Bc(\shift{G})$.
  As $h_\ell(w\ell^M) = w$, Lemma~\ref{lem:red-sft-lemma}(v) implies that
  $\leftm w \rightm \in \Bc(\shift{G})$.
  It follows that $w \in \bigcup_{i=1}^n L(M_i)$ by Lemma~\ref{lem:red-sft-lemma}(iv).
\end{proof}

Along with Corollary~\ref{cor:sft-eq}, the following shows that Reduction~\ref{red:sft} also reduces \dfacupp to \minimalp.

\begin{theorem}\label{thm:minimal-hard}
  $\bigcup_{i=1}^n L(M_i) = \Sigma^*$ if and only if
  $\shift{G}$ has a deterministic presentation with $2$ vertices.
\end{theorem}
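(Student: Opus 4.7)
The plan is to use Corollary~\ref{cor:sft-eq} as a bridge. The forward direction is immediate: if $\bigcup_{i=1}^n L(M_i) = \Sigma^*$, then Corollary~\ref{cor:sft-eq} gives $\shift{G} = \shift{H}$, and the graph $H$ from Figure~\ref{fig:sft-h-graph} is a deterministic presentation of $\shift{G}$ with exactly two vertices.

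For the reverse direction, suppose $K$ is a 2-vertex (essential) deterministic presentation of $\shift{G}$. I will force $K \cong H$, so $\shift{G} = \shift{K} = \shift{H}$, and then invoke Corollary~\ref{cor:sft-eq} to conclude $\bigcup_{i=1}^n L(M_i) = \Sigma^*$. First I would rule out that $K$ is not follower-separated: otherwise $K/{\sim}$ has one vertex and $\shift{K} = \shift{K/{\sim}}$ is a full shift over some alphabet, contradicting $\star \leftm \notin \Bc(\shift{G})$ (in $G$ the label $\star$ occurs only on the self-loop at $t$, whose only outgoing edge is that same $\star$). Thus $K$ has two distinct follower-equivalence classes.

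Next I would identify one vertex $v_2$ by $F_K(v_2) = \star^*$. Since $F_{\shift{G}}(\star) = \star^*$ while $\Bc(\shift{G})$ contains words outside $\star^*$ (e.g., $\leftm$), Proposition~\ref{prop:basic}(iii) rules out $|Q_K \cdot \star| = 2$ (which would give $F_{\shift{G}}(\star) = \Bc(\shift{G})$), so $Q_K \cdot \star = \{v_2\}$ for a unique vertex $v_2$, and therefore $F_K(v_2) = \star^*$. Essentiality plus determinism then force $v_2$ to have exactly one outgoing edge, a $\star$-self-loop (the target must be $v_2$, else $v_1 \in Q_K \cdot \star$).

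The main work, and the step I expect to be the principal obstacle, is the case analysis pinning down every outgoing edge of the other vertex $v_1$ to exactly match those of $q_1$ in $H$. For each $a \in \Sigma \cup \{\ell, \leftm\}$, the word $a$ lies in $\Bc(\shift{G}) \setminus \star^*$, so $a \in F_K(v_1)$ and there is an $a$-edge out of $v_1$; that edge cannot terminate at $v_2$, for otherwise $a\star \in F_K(v_1) \subseteq \Bc(\shift{G})$, contradicting that $\star$ in $\shift{G}$ may only follow $\rightm$ or $\star$. Hence each such $a$-edge is a self-loop on $v_1$. Similarly, $\rightm \in \Bc(\shift{G})$ forces a $\rightm$-edge out of $v_1$, and a self-loop would yield $\rightm \leftm \in \Bc(\shift{G})$ (using the $\leftm$ self-loop just established), a contradiction; so $v_1 \xrightarrow{\rightm} v_2$. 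Finally, $v_1$ can carry no $\star$-edge at all, since reading the $\leftm$ self-loop followed by $\star$ would put $\leftm \star$ in $\Bc(\shift{G})$, which fails. Collecting these constraints yields $F_K(v_1) = (\Sigma \cup \{\ell, \leftm\})^* \cup (\Sigma \cup \{\ell, \leftm\})^* \rightm \star^* = F_H(q_1)$, so by Lemma~\ref{lem:hom-follower-set} the bijection $v_1 \mapsto q_1$, $v_2 \mapsto q_2$ is an isomorphism $K \cong H$. Once every candidate edge out of $v_1$ has been ruled in or out by a short forbidden word, the isomorphism is immediate.
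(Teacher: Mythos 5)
Your proof is correct and follows essentially the same route as the paper's: both directions hinge on Corollary~\ref{cor:sft-eq}, and the reverse direction forces any $2$-vertex deterministic presentation to be isomorphic to $H$ by ruling each candidate edge in or out via length-$2$ words in $\Bc(\shift{G})$. The only cosmetic difference is that you anchor the case analysis by first identifying $v_2$ through $Q_K \cdot \star$ and Proposition~\ref{prop:basic}, whereas the paper anchors it on the unique $\rightm$-labeled edge; the edge-by-edge argument is otherwise the same.
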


\begin{proof}
  If $\bigcup_{i=1}^n L(M_i) = \Sigma^*$,
  then $\shift{H} = \shift{G}$ by Corollary~\ref{cor:sft-eq},
  so $H$ is a $2$-vertex presentration of $\shift{G}$.
  Conversely, suppose $H'$ is a deterministic presentation of $\shift{G}$ with $2$ vertices.
  Here, we'll show that $H'$ is isomorphic to $H$, which implies
  $\shift{H'} = \shift{H}$ and thus $\shift{H} = \shift{G}$, so
  by Corollary~\ref{cor:sft-eq}, we have $\bigcup_{i=1}^n L(M_i) = \Sigma^*$.

  As $\rightm \in F_G(s^*)$, there must be an edge $e_\rightm$ labeled $\rightm$ in $H'$.
  If $e_\rightm$ were a self loop, then $\rightm \rightm \in \Bc(\shift{G})$, a contradiction.
  Thus, the vertices $q_1' \triangleq i(e_\rightm)$ and $q_2' \triangleq t(e_\rightm)$ must be distinct,
  and $Q_{H'} = \{q_1', q_2'\}$.
  Moreover, $e_\rightm$ must be the unique edge labeled $\rightm$, as in all other cases $H'$ either fails to be deterministic or we again have $\rightm \rightm \in \Bc(\shift{G})$, a contradiction.
  
  As $a \rightm \in \Bc(\shift{G})$,
  for each $a \in \Sigma \cup \{\ell\}$, we have
  an edge $e_a$ labeled $a$ ending at $q_1'$.
  For $a \in \Sigma \cup \{\ell\}$,
  any edge labeled $a$ must start at $q_1'$:
  if such an edge started at $q_2'$, then $\rightm a \in \Bc(\shift{G})$, a contradiction.
  Thus, $e_a$ is a self loop and by determinism,
  $e_a$ is the unique edge labeled $a$ in $H'$.

  As $\rightm \star \in \Bc(\shift{G})$, there must be
  an edge $e_\star$ labeled $\star$ starting at $q_2'$.
  If $e_\star$ ends at $q_1'$, then $\star \rightm \in \Bc(\shift{G})$,
  a contradiction, so $e_\star$ is a self loop.
  Any edge labeled $\star$ must start at $q_2'$:
  if such an edge started at $q_1'$, then $\ell \star \in \Bc(\shift{G})$, a contradiction.
  Thus, by determinism, $e_\star$ is the unique edge labeled $\star$ in $H'$.

  Finally, as $\leftm \ell \in \Bc(\shift{G})$, there must be some
  edge $e_\leftm$ labeled $\leftm$ ending at $q_1'$. Any edge labeled
  $\leftm$ must start at $q_1'$: if such an edge start at $q_2'$, then
  $\rightm \leftm \in \Bc(\shift{G})$, a contradiction. Thus,
  $e_\leftm$ is a self loop and is the unique edge labeled $\leftm$ in
  $H'$.  All of the above implies that the map
  $q_i' \mapsto q_i$ is an isomorphism between $H'$ and $H$.
\end{proof}

\subsection{Hardness of Existence of Synchronizing Words}
\label{sec:hard-sync-word}

\citet{berlinkov2014two} showed \syncp was \cc{PSPACE}-hard via reduction
from the \cc{PSPACE}-complete problem of \term{subset
synchronizability}: given a DFA $M$ and a subset $S\subseteq Q_M$, is
there a word $w$ such that $|S \cdot w|=1$?
For completeness, we show the hardness of $\syncp$ via Reduction~\ref{red:sync} from the ``complement'' of \dfacupp, \dfacapp.

\begin{theorem}\label{thm:exist-sync-hard}
  \syncp is \cc{PSPACE}-hard.
\end{theorem}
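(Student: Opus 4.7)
The plan is to prove $\cc{PSPACE}$-hardness via a polynomial-time reduction from $\dfacapp$, in the spirit of Reductions~\ref{red:irred} and~\ref{red:sft}. Given DFAs $M_1, \ldots, M_n$ over alphabet $\Sigma$, I would construct a deterministic presentation $G$ with states $\bigcup_i Q_i \cup \{t, t', s^*\}$ and alphabet $\Sigma \cup \{\leftm, \rightm\}$. Each $M_i$ retains its $\Sigma$-transitions, and I would further add a $\leftm$-edge from every $q \in Q_i$ to $s_i$; a $\rightm$-edge from every accepting state $q \in F_i$ to $t$ and from every non-accepting state $q \in Q_i \setminus F_i$ to $t'$; a $\rightm$-self-loop at both $t$ and $t'$, with all other outgoing transitions at $t, t'$ undefined; and, at $s^*$, self-loops on $\Sigma \cup \{\leftm\}$ together with $s^* \cdot \rightm = t$. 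Assuming WLOG (after polynomial-time preprocessing) that each $M_i$ has both an accepting and a non-accepting state and is trim, the resulting presentation is essential.

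The forward direction is immediate: if $v \in \bigcap_i L(M_i)$, then $w = \leftm v \rightm$ synchronizes $G$ to $t$, since each $q \in Q_i$ traces $q \to s_i \to \delta_i(s_i, v) \to t$ (the last step because $\delta_i(s_i,v) \in F_i$), the state $s^*$ traces through itself on $\leftm v$ and exits to $t$ on the final $\rightm$, and both $t$ and $t'$ drop at the initial $\leftm$. For the reverse direction, I would analyze an arbitrary synchronizing word $w$, factoring at its first $\rightm$ as $w = w_1 \rightm w_2$ with $w_1 \in (\Sigma \cup \{\leftm\})^*$. Since $s^*$ self-loops on $\Sigma \cup \{\leftm\}$ and exits only via $\rightm$ (to $t$), $w$ must contain a $\rightm$, and since $t$'s only outgoing edge is its $\rightm$-self-loop, $t \cdot w_2$ is defined only when $w_2 \in \{\rightm\}^*$; otherwise every trajectory drops and $Q_G \cdot w = \varnothing$. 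Thus the sync target is $t$, and a short check shows that $w_1 \ne \epsilon$ (to force the starting states $t, t'$ to drop) and $Q_i \cdot w_1 \subseteq F_i$ for every $i$ (to force each DFA trajectory to land on $t$ rather than $t'$). Taking $v$ to be the suffix of $w_1$ after its last $\leftm$ (or $w_1$ itself if no $\leftm$ occurs) yields $v \in \Sigma^*$ with $\delta_i(s_i, v) \in F_i$ for every $i$, so $v \in \bigcap_i L(M_i)$.

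The main obstacle is the design of the auxiliary state $s^*$. A naive construction without it would admit trivial synchronizing words such as $\rightm$ alone, which would collapse accepting states to $t$ while simply dropping non-accepting ones and thus synchronize independently of the intersection, as well as short words like $\leftm \rightm$ in cases where $\epsilon$ happens to be rejected by every $M_i$ (which would otherwise yield a sync word $\leftm v \rightm$ to $t'$ with $v \in \bigcap_i \overline{L_i}$). The state $s^*$ blocks such spurious synchronization by contributing a persistent element of $Q_G \cdot w$ that forces the sync target to be $t$ and the word structure to be $w_1 \rightm w_2$ with $w_2 \in \{\rightm\}^*$, while still being compatible with the valid sync word $\leftm v \rightm$ since $s^*$ funnels cleanly into $t$ via its $\rightm$-edge. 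Combined with $t'$, which ensures that non-accepting DFA states contribute a distinct target under $\rightm$ rather than being silently dropped, this construction forces every synchronizing word to embed a genuine common accepting word.
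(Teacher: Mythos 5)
Your proposal is correct and follows essentially the same route as the paper: a polynomial-time reduction from \dfacapp{} that wraps each DFA run in $\leftm \cdots \rightm$, funnels accepting runs into a shared success state $t$, and then characterizes the synchronizing words as exactly those embedding a word of $\bigcap_i L(M_i)$ between a $\leftm$ and a $\rightm$. The only differences are gadget-level: the paper uses per-automaton pre-initial states $p_i$ (with $\Sigma,\rightm$ self-loops) and per-automaton fail states $r_i$ together with the assumption $n \geq 2$ to rule out spurious synchronization, whereas you use a single sentinel $s^*$ plus $\leftm$ back-edges inside each $M_i$ and one shared fail state $t'$ — which works equally well (and even for $n=1$), since $s^*$ alone forces the synchronization target to be $t$.
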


For the reduction, we again create pre-initial states $p_i$ for each DFA $M_i$, with special symbols $\leftm$ into and $\rightm$ out of each DFA, and include them in parallel as in Reduction~\ref{red:sft}.
The edges out of accepting states all go to the same shared succes state $t$.
We also add edges labeled $\rightm$ from each nonaccepting state in $M_i$ to an individual fail state $r_i$.
By completing this construction appropriately, we ensure that a word is synchronizing if and only if it is synchronizing to $t$, i.e., if and only if every DFA accepts the subword between $\leftm$ and $\rightm$.

\begin{reduction}\label{red:sync}
  Let $M_1, \dots, M_n$ be an instance to the \dfacapp problem, and
  without loss of generality, assume $n \geq 2$.
  We 
  will construct a essential, deterministic presentation $G$ as follows. 
  First, add a state $t$ (the \term{success} state), and add 
  a self loop labeled $\rightm$ on $t$. Then, for each $i=1, \dots, n$,
  \begin{enumerate}
  \item add a state $p_i$ (the $i$th \term{pre-initial} state);
  \item add a state $r_i$ (the $i$th \term{fail} state);
  \item add self loops labeled $\rightm$ on $p_i$ and $r_i$;
  \item for each $a \in \Sigma$, add a self loop labeled $a$ on $p_i$;
  \item embed $M_i$ into $G$;
  \item add an edge from $p_i$ labeled $\leftm$ to the corresponding initial state $s_i$
    of $M_i$;
  \item for each accepting state $q$ in $M_i$, add an edge labeled $\rightm$
    from $q$ to $t$;
  \item for each nonaccepting state $q$ in $M_i$, add an edge labeled $\rightm$
    from $q$ to $r_i$.
  \end{enumerate}
  See Figure~\ref{fig:sync-red} for a visualization.
\end{reduction}

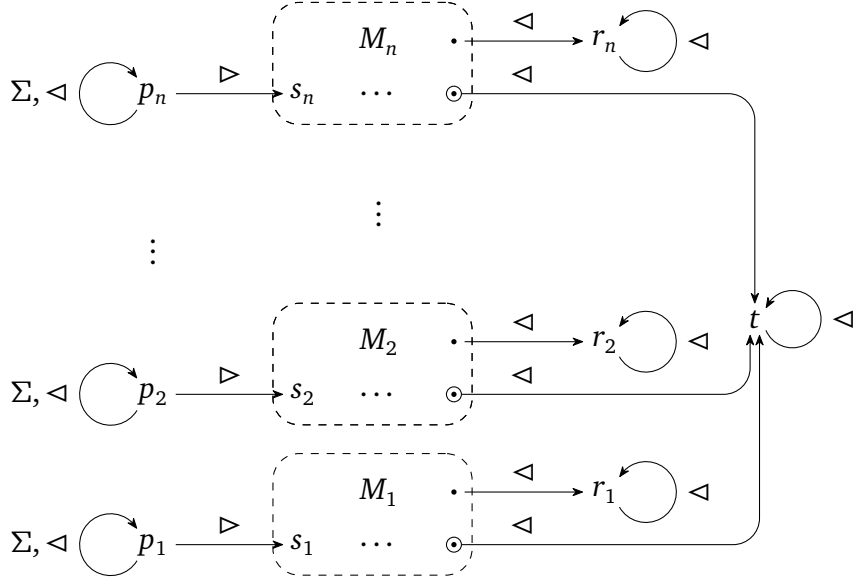
\begin{figure}[t]
  \centering
  {
    \begin{tikzpicture}[rounded corners=10pt]
      \path (0, 0) node[vertex] (P1) {$p_1$};
      
      \fill (P1) ++(\xgap, 0) node[vertex] (s1) {$s_1$}
      +(0.5*\xgap, 0.35*\ygap) node[vertex] (M1) {$M_1$}
      +(0.5*\xgap, 0) node {$\dots$}
      +(\xgap, 0) circle[radius=1pt] node[draw, circle, inner sep=2pt] (f1) {}
      +(\xgap, .35*\ygap) circle[radius=1pt] node (nf1) {};
      \node[draw, dashed, rectangle, fit=(M1) (s1) (f1)] {};

      \path (nf1) ++(\xgap, 0) node[vertex] (F1) {$r_1$};

      \path (0, \ygap) node[vertex] (P2) {$p_2$};

      \fill (P2) ++(\xgap, 0) node[vertex] (s2) {$s_2$}
      +(0.5*\xgap, 0.35*\ygap) node[vertex] (M2) {$M_2$}
      +(0.5*\xgap, 0) node {$\dots$}
      +(\xgap, 0) circle[radius=1pt] node[draw, circle, inner sep=2pt] (f2) {}
      +(\xgap, .35*\ygap) circle[radius=1pt] node (nf2) {};
      \node[draw, dashed, rectangle, fit=(M2) (s2) (f2)] {};

      \path (nf2) ++(\xgap, 0) node[vertex] (F2) {$r_2$};

      \path (1.5*\xgap , 2.25*\ygap) node[vertex] (dots) {$\vdots$};
      \path (0, 2*\ygap) node[vertex] (Pdots) {$\vdots$};

      \path (0, 3*\ygap) node[vertex] (Pn) {$p_n$};
      
      \fill (Pn) ++(\xgap, 0) node[vertex] (sn) {$s_n$}
      +(0.5*\xgap, 0.35*\ygap) node[vertex] (Mn) {$M_n$}
      +(0.5*\xgap, 0) node {$\dots$}
      +(\xgap, 0) circle[radius=1pt] node[draw, circle, inner sep=2pt] (fn) {}
      +(\xgap, .35*\ygap) circle[radius=1pt] node (nfn) {};
      \node[draw, dashed, rectangle, fit=(Mn) (sn) (fn)] {};

      \path (nfn) ++(\xgap, 0) node[vertex] (Fn) {$r_n$};

      \path (fn) ++(2*\xgap, -1.5*\ygap) node[vertex] (S) {$t$};

      \draw (f1) -| ($(S.south) + (0.5*\arrowgap, 0)$);
      \draw (f2) -| ($(S.south) - (0.5*\arrowgap, 0)$);
      \draw (fn) -| (S.north);
      
      \draw (nf1) to node[above] {$\rightm$} (F1);
      \draw (nf2) to node[above] {$\rightm$} (F2);
      \draw (nfn) to node[above] {$\rightm$} (Fn);
      
      \path (f1) +(\xgap, 0) node[vertex, invisible] (iF1) {$r_1$}
      (f1) to node[above] {$\rightm$} (iF1);
      \path (f2) +(\xgap, 0) node[vertex, invisible] (iF2) {$r_2$}
      (f2) to node[above] {$\rightm$} (iF2);
      \path (fn) +(\xgap, 0) node[vertex, invisible] (iFn) {$r_n$}
      (fn) to node[above] {$\rightm$} (iFn);

      \draw (F1.south east) to[right loop] node [right] {$\rightm$} (F1.north east);
      \draw (F2.south east) to[right loop] node [right] {$\rightm$} (F2.north east);
      \draw (Fn.south east) to[right loop] node [right] {$\rightm$} (Fn.north east);
      \draw (S.south east) to[right loop] node [right] {$\rightm$} (S.north east);

      \draw (P1) to node[above] {$\leftm$} (s1);
      \draw (P2) to node[above] {$\leftm$} (s2);
      \draw (Pn) to node[above] {$\leftm$} (sn);

      \node[draw, dashed, rectangle, fit=(M2) (s2) (f2)] {};
      \node[draw, dashed, rectangle, fit=(Mn) (sn) (fn)] {};

      \draw (P1.south west) to[left loop] node[left] {$\Sigma, \rightm$} (P1.north west);
      \draw (P2.south west) to[left loop] node[left] {$\Sigma, \rightm$} (P2.north west);
      \draw (Pn.south west) to[left loop] node[left] {$\Sigma, \rightm$} (Pn.north west);
    \end{tikzpicture}
  }
  \caption{Schematic of Reduction~\ref{red:sync}.}
  \label{fig:sync-red}
\end{figure}

To show the correctness of the reduction, we characterize the synchronizing words of $G$.

\begin{theorem}\label{thm:sync-word-characterization}
  Let $G$ be the deterministic presentation obtained from Reduction
  \ref{red:sync} on an instance $M_1, \dots, M_n$.
  A word $u\in(\Sigma \cup \{\leftm,\rightm\})^*$ is synchronizing for $G$ if and only if there is some $v \in (\Sigma \cup \{\rightm\})^*$, $k \geq 1$, and $w \in \bigcap_{i=1}^n L(M_i)$ such that $u=v \leftm w \rightm^k$.
\end{theorem}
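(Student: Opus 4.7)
The plan is to prove both directions by carefully tracking the subset $Q_G \cdot u'$ as a prefix $u'$ of $u$ is extended one letter at a time, exploiting the very restricted follower-set structure of the non-DFA states. For the reverse direction, I would assume $u = v\leftm w\rightm^k$ satisfies the stated conditions and compute $Q_G \cdot u$ stage by stage. Since each $p_i$ carries self loops on every letter of $\Sigma \cup \{\rightm\}$, reading $v$ fixes every $p_i$, giving $\{p_1,\dots,p_n\} \subseteq Q_G \cdot v$. Since $\leftm$ is defined only at the $p_i$ (with $p_i\cdot\leftm=s_i$), we then have $Q_G \cdot v\leftm = \{s_1,\dots,s_n\}$ exactly. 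Reading $w \in \Sigma^*$ emulates each DFA $M_i$, so $Q_G \cdot v\leftm w = \{\delta_1(s_1,w),\dots,\delta_n(s_n,w)\}$. Because $w \in L(M_i)$ for every $i$, each $\delta_i(s_i,w)$ is accepting and hence has a $\rightm$-edge to $t$, so $Q_G \cdot v\leftm w\rightm = \{t\}$; the remaining $\rightm^{k-1}$ is absorbed by the self loop at $t$.

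For the forward direction, I would first classify the follower sets of all non-DFA states: $F_G(t) = F_G(r_i) = \{\rightm\}^*$; for $q \in Q_i$ one has $F_G(q) = \Sigma^* \cup \Sigma^*\rightm^+$; and every word in $F_G(p_i)$ lies in $(\Sigma \cup \{\rightm\})^* \cup (\Sigma \cup \{\rightm\})^*\leftm\Sigma^* \cup (\Sigma \cup \{\rightm\})^*\leftm\Sigma^*\rightm^+$. In particular, every word in any $F_G(q)$ contains at most one occurrence of $\leftm$. Next, $u$ must contain at least one $\leftm$: otherwise $u \in (\Sigma \cup \{\rightm\})^*$ fixes each $p_i$, forcing $\{p_1,\dots,p_n\} \subseteq Q_G \cdot u$ and contradicting $|Q_G \cdot u| = 1$ since $n \geq 2$. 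Combining these observations, $u$ must contain exactly one $\leftm$ and take one of the shapes $v\leftm w$ or $v\leftm w\rightm^k$ with $v \in (\Sigma \cup \{\rightm\})^*$, $w \in \Sigma^*$, and $k \geq 1$.

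To rule out the shape $u = v\leftm w$, I would rerun the reverse-direction stage computation up through $v\leftm w$ to get $Q_G \cdot u \supseteq \{\delta_1(s_1,w),\dots,\delta_n(s_n,w)\}$, which lies in the pairwise disjoint sets $Q_1,\dots,Q_n$ and so has at least $n \geq 2$ elements, contradicting synchronization. Thus $u = v\leftm w\rightm^k$ with $k \geq 1$. Repeating the computation through the full word gives $p_i \cdot u \in \{t, r_i\}$ for every $i$; since the $r_i$'s are pairwise distinct and $n \geq 2$, the singleton $Q_G \cdot u$ cannot contain two different $r_j$'s, so $p_i \cdot u = t$ for every $i$. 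This forces $\delta_i(s_i, w) \in F_i$ for every $i$, i.e., $w \in \bigcap_{i=1}^n L(M_i)$. The main obstacle is the forward-direction bookkeeping: cleanly classifying the follower sets so as to pin down that $u$ has exactly one $\leftm$ and must end in a nonempty block of $\rightm$'s; once that shape is extracted, the same stage-by-stage transition computation used in the reverse direction handles both the impossibility of a trailing $w$ and the acceptance condition on $w$.
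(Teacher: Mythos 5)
Your proposal is correct and follows essentially the same route as the paper: force at least one and (via the follower-set structure) at most one occurrence of $\leftm$, extract the shape $v\leftm w\rightm^k$ with $k\ge 1$, and track $Q_G\cdot u$ stage by stage to conclude $w\in\bigcap_{i=1}^n L(M_i)$. The only micro-slip is in the final step: the case to exclude is a single $r_j$ appearing alongside $t$, not only two distinct $r_j$'s, but since $n\ge 2$ either configuration already forces $|Q_G\cdot u|\ge 2$, so the conclusion stands.
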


\begin{proof}
  As usual, the transition action of $G$ emulates the behavior of the $M_i$:
  for $w \in \Sigma^*$, we have
  $w \in L(M_i)$ if and only if $p_i \cdot \leftm w \rightm = t$,
  and $w \notin L(M_i)$ if and only if $p_i \cdot \leftm w \rightm = r_i$.

  Suppose $u$ is a synchronizing word for $G$.
  Then, $u$ must contain at least one $\leftm$;
  otherwise $u \in (\Sigma \cup \{\rightm\})^*$,
  and thus $p_i \cdot u = p_i$ for each $i$, giving $|Q_G \cdot u | \geq n \geq 2$.
  We can therefore write $u=u_1 \leftm u_2$.
  By construction, $u$ contains at most one $\leftm$, so $u_1,u_2 \in (\Sigma \cup \{\rightm\})^*$.
  Moreover, we must have $u_2 = w \rightm^k$ for some $w \in \Sigma^*$ and $k \geq 0$.
  Since $Q_G \cdot u_1 \leftm w = \{p_1 \cdot \leftm w, \dots, p_n \cdot \leftm w\}$
  and the $Q_i$ are pairwise disjoint, we have $|Q_G \cdot u_1 \leftm w| = n \geq 2$.
  Since $u$ is synchronizing, we therefore must have $k \geq 1$. 
  Now since $\rightm \in F_G(q)$ for $q \in \bigcup_{i=1}^n Q_i$,
  if there are $i \neq j$ such that $p_i \cdot \leftm w \rightm$
  and $p_j \cdot \leftm w \rightm$ are not both $t$, then $|Q_G \cdot u| \geq 2$.
  As $u$ is synchronizing, we conclude $p_i \cdot \leftm w \rightm = t$ for all $i$.
  By the above, $w \in \bigcap_{i=1}^n L(M_i)$.
  Hence, we have $u = u_1 \leftm w \rightm^k$ with
  $u_1 \in (\Sigma \cup \{\rightm\})^*$, $k \geq 1$,
  and $w \in \bigcap_{i=1}^n L(M_i)$.

  Conversely, let $v \in (\Sigma \cup \{\rightm\})^*$, $k \geq 1$,
  and $w \in \bigcap_{i=1}^n L(M_i)$.
  Thus,
  $p_i \cdot \leftm w \rightm = t$ for all $i$, which implies
  $Q_G \cdot v \leftm w \rightm^k = \{t\}$. 
\end{proof}

\begin{corollary}
   $\bigcap_{i=1}^n L(M_i) \neq \varnothing$ if and only if $G$ has a synchronizing word.
\end{corollary}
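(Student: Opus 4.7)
The plan is to derive this corollary as an essentially immediate consequence of Theorem~\ref{thm:sync-word-characterization}, which gives an exact characterization of the synchronizing words of $G$ in terms of intersections of the DFA languages. Since that theorem already does the heavy lifting, the proof should be just a few lines unpacking the biconditional in both directions.

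For the forward direction, I would suppose $\bigcap_{i=1}^n L(M_i) \neq \varnothing$ and pick any $w \in \bigcap_{i=1}^n L(M_i)$. Setting $v = \epsilon$ and $k = 1$, the word $u = \leftm w \rightm$ has the form required by Theorem~\ref{thm:sync-word-characterization}, so it is synchronizing for $G$. Hence $G$ has a synchronizing word. (One can check directly that $Q_G \cdot \leftm w \rightm = \{t\}$, since every $p_i \cdot \leftm w \rightm = t$ by the DFA-emulation property used in the proof of the characterization theorem, and the loop states $p_j$, $r_j$, $t$ all absorb $\leftm w \rightm$ appropriately.)

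For the reverse direction, I would suppose $G$ has a synchronizing word $u$. Applying Theorem~\ref{thm:sync-word-characterization}, we may write $u = v \leftm w \rightm^k$ with $v \in (\Sigma \cup \{\rightm\})^*$, $k \geq 1$, and crucially $w \in \bigcap_{i=1}^n L(M_i)$. In particular, $w$ witnesses that $\bigcap_{i=1}^n L(M_i) \neq \varnothing$.

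There is no real obstacle here: the characterization theorem already encodes the correspondence between synchronizing words and elements of the intersection, so the corollary is essentially just the statement of existence on each side. The only thing to be careful about is making sure the forward direction produces a concrete synchronizing word (which $\leftm w \rightm$ does), rather than merely invoking the ``if'' of the characterization in the abstract.
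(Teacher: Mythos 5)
Your proof is correct and matches the paper's approach: the paper states this corollary without proof as an immediate consequence of Theorem~\ref{thm:sync-word-characterization}, and your two-direction unpacking (taking $u = \leftm w \rightm$ with $v=\epsilon$, $k=1$ in one direction, and extracting $w \in \bigcap_{i=1}^n L(M_i)$ from the characterization in the other) is exactly the intended argument.
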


As Reduction~\ref{red:sync} therefore reduces \dfacapp to \syncp, Theorem~\ref{thm:exist-sync-hard} follows.

\section{Size of Synchronizing Words and SDPs}
\label{sec:size-synchr-words-sdps}

Our reductions also shed light on the size of synchronizing words and presentations.
In particular, given a presentation with $n$ vertices, the size of its smallest synchronizing word can be exponentially large in $n$.
Similarly, the size of the smallest synchronizing deterministic presentation can also be exponentially large.

\subsection{Shortest synchronizing word size}
\label{sec:short-synchr-word}

If $\cc{NP} \neq \cc{PSPACE}$, there cannot be a polynomial
upper bound with respect to the number of vertices for the length
of the shortest synchronizing word, as \syncp is \cc{PSPACE}-hard.
\citet{berlinkov2014two} show an unconditional exponential lower bound on maximum length of the shortest synchronizing word, which implies there cannot be a
polynomial upper bound for the length of the shortest synchronizing word.
Here, we give a simpler construction that achieves roughly the same bound.

First we observe the following property of Reduction~\ref{red:sft}.

\begin{lemma}\label{lem:sw-length}
  Let $G$ be a presentation obtained from Reduction~\ref{red:sft} on some input $M_1, \dots, M_n$.
  If $\bigcap_{i=1}^n L(M_i) \neq \varnothing$, then the minimum length of a
  synchronizing word for $G$ is 2 more than the minimum
  length of a word in $\bigcap_{i=1}^n L(M_i)$.
\end{lemma}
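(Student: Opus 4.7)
The plan is to invoke Theorem~\ref{thm:sync-word-characterization}, which completely characterizes the synchronizing words for $G$: every such word has the form $v \leftm w \rightm^k$ with $v \in (\Sigma \cup \{\rightm\})^*$, $k \geq 1$, and $w \in \bigcap_{i=1}^n L(M_i)$. (I note that the lemma statement refers to Reduction~\ref{red:sft}, but the hypothesis $\bigcap_{i=1}^n L(M_i) \neq \varnothing$ and the mention of synchronizing words are consistent with Reduction~\ref{red:sync}; I proceed with that reading, as Theorem~\ref{thm:sync-word-characterization} is exactly the characterization for the graph built by Reduction~\ref{red:sync}.)

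Let $m$ denote the minimum length of a word in $\bigcap_{i=1}^n L(M_i)$, which is finite by the nonemptiness hypothesis. I would prove the equality ``minimum length of a synchronizing word $= m + 2$'' by two matching bounds. For the lower bound, let $u$ be any synchronizing word for $G$; write $u = v \leftm w \rightm^k$ as above. Then $|u| = |v| + 1 + |w| + k \geq 0 + 1 + m + 1 = m + 2$, using $|v| \geq 0$, $|w| \geq m$, and $k \geq 1$. For the matching upper bound, choose $w^* \in \bigcap_{i=1}^n L(M_i)$ with $|w^*| = m$, and consider $\leftm w^* \rightm$. Taking $v = \epsilon$ and $k = 1$, this word fits the characterization, so by Theorem~\ref{thm:sync-word-characterization} it is synchronizing for $G$, and it has length $m + 2$.

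I do not foresee a real obstacle: Theorem~\ref{thm:sync-word-characterization} does all the nontrivial combinatorial work, and what remains is purely a length-minimization argument over the parameters $v, w, k$ in the characterization. The only routine checks are that the extreme choices $v = \epsilon$ and $k = 1$ are permitted by the characterization (they are, since $\epsilon \in (\Sigma \cup \{\rightm\})^*$ and $1 \geq 1$), which ensures that the lower bound $m + 2$ is actually attained.
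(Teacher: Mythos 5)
Your proof is correct and follows essentially the same route as the paper: both invoke Theorem~\ref{thm:sync-word-characterization} and then minimize the length over the parameters $v$, $w$, $k$, taking $v=\epsilon$, $k=1$, and $w$ of minimum length in the intersection. Your observation that the lemma's reference to Reduction~\ref{red:sft} should read Reduction~\ref{red:sync} is also right, since the paper's own proof relies on the characterization of synchronizing words for the graph built by Reduction~\ref{red:sync}.
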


\begin{proof}
  From Theorem~\ref{thm:sync-word-characterization}, a word $u$ is synchronizing for $G$ if and only if it has the form $u = v \leftm w \rightm^k$ for any $v \in (\Sigma \cup \{\rightm\})^*$, $w \in \bigcap_{i=1}^n L(M_i)$, and $k \geq 1$.
  A minimum-length synchronizing word $u^*$ for $G$ therefore has $v=\epsilon$ and $k=1$, and takes $w=w^*$ to be a word of minimum length in $\bigcap_{i=1}^n L(M_i)$.
  Thus $|u^*| = |\leftm w^* \rightm| = 2 + |w^*|$.
\end{proof}

Therefore, to find a presentation of a sofic shift with a large shortest synchronizing word, it suffices to apply Reduction~\ref{red:sft} to DFAs that have a large shortest word in the intersection of their languages.
In Appendix~\ref{apx:cap-cons}, we adapt a construction from
\citet{ang2010problems} of a family of DFAs $M_{i,k}$ such that each
$M_{i,k}$ has 3 states and and the shortest word in
$\bigcap_{i=0}^k L(M_{i,k})$ is $2^k$.
Using this family of DFAs, if
we let $G_k$ denote Reduction~\ref{red:sft} applied to
$M_{0,k}, M_{1,k}, \dots M_{k, k}$, then by Lemma~\ref{lem:sw-length},
the shortest synchronizing word for $G_k$ has length $2^k+2$.
The number of vertices in $G_k$ is $2(k+1)+1$ auxillary vertices plus $3(k+1)$
from the DFAs, giving use $5k+6$ total vertices.
We may then define a family of graphs $G^{(n)}$ on $n$ vertices, which take $G_k$ where $k \triangleq \left\lfloor \frac{n-6}{5} \right\rfloor$ and add $n-k$ vertices without affecting the shortest synchronizing word (e.g., by adding a self loops labeled with every
$a \in \Sigma \cup \{\leftm\}$ and adding an edge
labeled $\rightm$ to $t$).
As $k = \Omega(n)$, this family exhibits the following lower bound.

\begin{theorem}
  \label{thm:min-sync-word-lower-bound}
  There is a family of deterministic presentaions $\{G^{(n)}\}$ such that
  for sufficiently large $n$, each $G^{(n)}$ has $n$ vertices and the minimum length
  of a synchronizing word for $G^{(n)}$ is $2^{\Omega(n)}$.
\end{theorem}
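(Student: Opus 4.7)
The plan is to apply Lemma~\ref{lem:sw-length} to a carefully designed family of DFAs whose intersection of languages requires an exponentially long witness word. For this, I would exhibit (in Appendix~\ref{apx:cap-cons}, adapting a construction of \citet{ang2010problems}) for each $k \geq 1$ a family of $k+1$ DFAs $M_{0,k}, \ldots, M_{k,k}$ over a common alphabet, each having exactly $3$ states, such that the shortest word in $\bigcap_{i=0}^{k} L(M_{i,k})$ has length exactly $2^k$. The main obstacle is this appendix construction; the rest of the proof is a short combination of prior results with a padding argument.

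Letting $G_k$ denote the presentation obtained by applying Reduction~\ref{red:sync} to these DFAs, Lemma~\ref{lem:sw-length} immediately yields that the minimum-length synchronizing word for $G_k$ has length $2^k + 2$. Counting vertices is routine: the reduction contributes a success state $t$, plus a pre-initial state $p_i$ and a fail state $r_i$ per DFA, giving $1 + 2(k+1) + 3(k+1) = 5k + 6$ vertices in total.

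To obtain a family indexed by $n$, I would set $k = \lfloor (n-6)/5 \rfloor$ and form $G^{(n)}$ from $G_k$ by adding $n - (5k+6)$ dummy vertices, each equipped with a self loop labeled $a$ for every $a \in \Sigma \cup \{\leftm\}$ and an edge labeled $\rightm$ to $t$. This preserves determinism and essentiality. Crucially, every new edge originates at a dummy vertex, so the transition action of $G^{(n)}$ restricted to $Q_{G_k}$ coincides with that of $G_k$. Consequently, any synchronizing word for $G^{(n)}$ must send $Q_{G_k}$ to a single vertex and is therefore a synchronizing word for $G_k$, so the minimum synchronizing word length of $G^{(n)}$ is at least $2^k + 2$. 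Since $k = \Omega(n)$, this length is $2^{\Omega(n)}$.
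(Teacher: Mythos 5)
Your route is the same as the paper's: apply Reduction~\ref{red:sync} to the DFAs of Appendix~\ref{apx:cap-cons}, invoke Lemma~\ref{lem:sw-length} to get a shortest synchronizing word of length $2^k+2$ for the $(5k+6)$-vertex graph $G_k$, and pad up to $n$ vertices. (You correctly read Lemma~\ref{lem:sw-length} as being about Reduction~\ref{red:sync}, consistent with its proof and with the vertex count.) The padding step, however, contains a genuine gap, and your explicit justification is exactly where it fails. You assert that any synchronizing word for $G^{(n)}$ ``must send $Q_{G_k}$ to a single vertex and is therefore a synchronizing word for $G_k$.'' What is actually forced is only $|Q_{G_k}\cdot w|\le 1$: the image can be \emph{empty}, in which case $w$ need not synchronize $G_k$. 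With the dummy vertices as you describe them, this really happens. Consider $w=\leftm\leftm\rightm$. In Reduction~\ref{red:sync} the only edges labeled $\leftm$ leave the pre-initial states $p_i$ and land on the $s_i$, which have no outgoing $\leftm$-edge; hence $Q_{G_k}\cdot\leftm\leftm=\varnothing$. Each dummy vertex, on the other hand, survives both $\leftm$'s via its self loop and then moves to $t$ on $\rightm$. Therefore $Q_{G^{(n)}}\cdot\leftm\leftm\rightm=\{t\}$, i.e., $\leftm\leftm\rightm$ is a synchronizing word of length $3$ for $G^{(n)}$ whenever at least one dummy vertex is present (all $n>5k+6$), which destroys the claimed lower bound. (To be fair, the paper's own sketch suggests exactly the same padding and shares this defect; but your proposal commits to the false step in writing.)

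The repair is small but necessary: the dummy vertices must not be able to outlive every original vertex. For example, give each dummy only the self loops labeled by the symbols of $\Sigma$ together with a single edge labeled $\rightm$ to $t$ (no $\leftm$ loop). Then a dummy $d$ satisfies $d\cdot w\neq\varnothing$ only for $w\in\Sigma^*\cup\Sigma^*\rightm^+$, and for every such $w$ we also have $p_1\cdot w=p_1$, which differs from $d\cdot w\in\{d,t\}$; so no new synchronizing words arise, and for every other $w$ the dummies contribute nothing, giving $Q_{G^{(n)}}\cdot w=Q_{G_k}\cdot w$. In particular the words $v\leftm w\rightm^k$ of Theorem~\ref{thm:sync-word-characterization} remain synchronizing and the minimum length is preserved. (Alternatively, make each dummy a verbatim copy of $p_1$, with its $\leftm$-edge routed to $s_1$.) With one of these paddings and the accompanying case analysis spelled out, the rest of your argument goes through.
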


\begin{remark}
  \Cerny's conjecture states that if a $n$-state DFA has a
  synchronizing word, then there is one of length at most $(n-1)^2$
  \cite{volkov2008synchronizing}.
  The previous theorem is a counterexample to the generalization of
  the \Cerny's conjecture to deterministic presentations of sofic shifts.
\end{remark}

\subsection{Minimal Synchronizing Deterministic Presentation Size}
\label{sec:minim-synchr-determ}

Throughout this subsection, we will abbreviate ``synchronizing deterministic presentation'' to SDP.
Let $X$ be a sofic shift.
A \term{minimal SDP} of $X$ is a SDP of $X$ possessing the fewest number of vertices among all SDPs of $X$.
Similarly, a \term{minimal deterministic presentation} of $X$ is a deterministic presentation of $X$ possessing the fewest number of vertices among all deterministic presentations of $X$.
For a given sofic shift $X$, minimal SDPs of $X$ are unique up to isomorphism, while minimal deterministic presentations are not neccessarily unique if $X$ is reducible \cite{jonoska1996sofic}.
For irreducible sofic shifts, minimal SDPs are minimal derterministic presentations and vice versa.
For reducible sofic shifts, minimal SDPs are minimal deterministic presentations are not necessarily the same.

In fact, we show that the minimal SDP can be exponential larger than a minimal deterministic presentation.
The proof relies on multiple-entry DFAs, which are FAs whose only nondeterminism is the fact that there are multiple possible initial states.
Formally, a \term{$k$-entry DFA} $N$ is a fully deterministic labeled
graph along with $k$ states $s_1, \dots, s_k \in Q_N$ and a set of
final states $F \subseteq Q_N$.
The language of $N$ is defined as
$L(N) \triangleq \bigcup_{i=1}^k \{w \in \Ac_N^* : \delta(s_i, w) \in
F\}$.
By \citet[Lemma 3]{holzer2001state}, there exists a family $\{C_k\}$ of multiple-entry DFAs, where $C_k$ is a $k$-entry DFA with $k$ states, and and the minimal DFA for $L(C_k)$ has $\sum_{i=1}^k {k \choose i} = 2^k-1$ states.
In other words, even if automata have deterministic transition relations, passing from multiple start states to a single start state can incur an exponential increase in size.
We emulate this construction when passing from a nonsynchronizing presentation to a synchronizing presentation.

Given a $k$-entry DFA $N$, we construct a sofic shift
$\shift{G}$ with deterministic presentation $G$ as follows.
First,
embed $N$ into $G$, add a state $t$ (the \term{terminal} state),
and add a self loop labeled $\star$ on $t$.  Then, for each
$i=1, \dots, k$,
\begin{enumerate}
    \item add a state $p_i$ (the $i$th \term{pre-initial} state);
    \item add a self loop labeled $*$ on $p_i$;
    \item add an edge from $p_i$ labeled $\leftm$ to the
      corresponding state $s_i$ of $N$;
    \item for each accepting state $q \in F$, add an edge labeled $\rightm$
      from $q$ to $t$
\end{enumerate}
See Figure~\ref{fig:sdp-construction} for a visualization.

\begin{figure}[t]
  \centering
  {
    \begin{tikzpicture}[rounded corners=10pt]

      \path (\xgap , 2*\ygap) node[vertex] {$\vdots$};
      \path (0, 2*\ygap) node[vertex] {$\vdots$};

      \path (0, 0) node[vertex] (P1) {$p_1$}
      ++(\xgap , 0) node[vertex] (s1) {$s_1$}
      ++(.5*\xgap, 0) node[vertex] {$\dots$};
      
      \path (0, \ygap) node[vertex] (P2) {$p_2$} 
      ++(\xgap , 0) node[vertex] (s2) {$s_2$}
      ++(.5*\xgap, 0) node[vertex] {$\dots$};
    
      \path (0, 3*\ygap) node[vertex] (Pn) {$p_k$} 
      ++(\xgap , 0) node[vertex] (sn) {$s_k$}
      ++(.5*\xgap, 0) node[vertex] {$\dots$}
      ++(0, 0.35*\ygap) node[vertex] (M) {$N$};

      \fill (2*\xgap, 1.5*\ygap) circle[radius=1pt] node[draw, circle, inner sep=2pt]
      (f) {};

      \path (3*\xgap, 1.5*\ygap) node[vertex] (T) {$t$};
      \draw (T.south east) to[right loop] node[right] {$\star$} (T.north east);

      \draw (f) to node[above] {$\rightm$} (T);
      
      \draw (P1) to node[above] {$\leftm$} (s1);
      \draw (P2) to node[above] {$\leftm$} (s2);
      \draw (Pn) to node[above] {$\leftm$} (sn);

      \node[draw, dashed, rectangle, inner xsep=10pt, fit=(M) (f) (s1)] {};

      \draw (P1.south west) to[left loop] node[left] {$*$} (P1.north west);
      \draw (P2.south west) to[left loop] node[left] {$*$} (P2.north west);
      \draw (Pn.south west) to[left loop] node[left] {$*$} (Pn.north west);
    \end{tikzpicture}
  }
  \caption{Schematic of the construction from Theorem~\ref{thm:k-entry-construction}.}
  \label{fig:sdp-construction}
\end{figure}
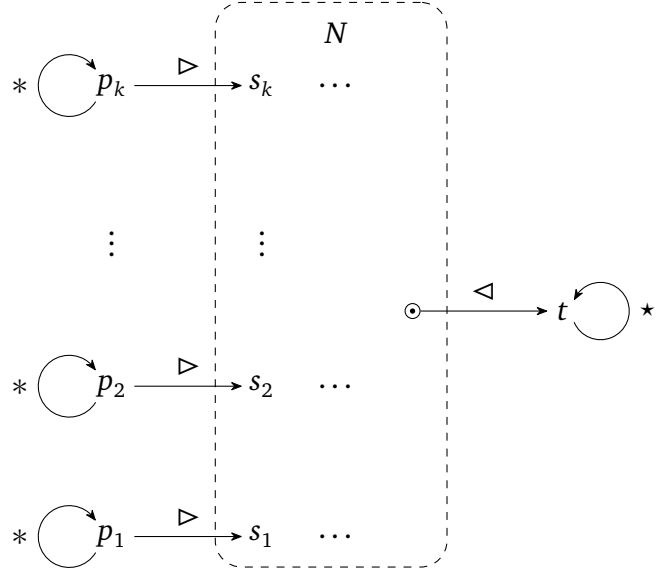

\begin{theorem}\label{thm:k-entry-construction}
  Let $N$ be a $k$-entry DFA, and let $G$ be the deterministic
  presentation obtained from the above construction applied to $N$.
  Let $M$ be the minimal DFA of $L(N)$.
  Interpreting $M$ as a $1$-entry DFA, let $H$ be the deterministic presentation
  obtained from the construction applied to $M$.
  Then, $H$ is the minimal SDP for $\shift{G}$.
\end{theorem}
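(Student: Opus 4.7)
My plan is to verify three properties of $H$: (a) $\shift{H} = \shift{G}$; (b) $H$ is an SDP; and (c) $H$ is follower-separated. Together these force $H$ to be the minimal SDP of $\shift{G}$ (unique up to isomorphism), via \citet[Corollary~5.4]{jonoska1996sofic} (any two follower-separated SDPs of a given sofic shift are isomorphic), combined with the observation that the follower-separation of any SDP is itself an SDP of the same shift with no more vertices, so that every minimal SDP is in particular follower-separated.

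For (a), I would show $\Bc(\shift{G}) = \Bc(\shift{H})$ by case analysis on the shape of a finite factor. Every factor of a biinfinite label in either graph, together with its prefixes and suffixes, takes one of the forms $*^n$, $\star^n$, $w \in \Sigma^*$, $*^m \leftm w$, $w \rightm \star^k$, or $*^m \leftm w \rightm \star^k$, for some $w \in \Sigma^*$. For all shapes other than the full throughword $*^m \leftm w \rightm \star^k$, membership in either language is unconditional: $N$ and $M$ are fully deterministic with all states reachable from entries, so every $w \in \Sigma^*$ is realizable from every state, and such factors are always extendable to biinfinite labels via prepended $*$s and appended $\star$s. For the full throughword, membership in $\Bc(\shift{G})$ reduces to $w \in L(N)$ and membership in $\Bc(\shift{H})$ to $w \in L(M)$, and these coincide since $L(N) = L(M)$.

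For (b), determinism is immediate from the construction. For synchronization: $*$ synchronizes $Q_H$ to $p_1$, since $p_1$ is the unique state with an outgoing $*$-edge; similarly $\star$ synchronizes $Q_H$ to $t$; and for each $q \in Q_M$, since $M$ is minimal (hence all its states are reachable from the initial state $s$), choosing $v \in \Sigma^*$ with $\delta_M(s, v) = q$ gives $\leftm v$ as a word synchronizing $Q_H$ to $q$, because $Q_H \cdot \leftm = \{s\}$. For (c), the follower sets $F_H(p_1)$ and $F_H(t)$ are distinguished from all others by containing $*$ and $\star$ respectively, which no other state's follower set contains. For distinct $q, q' \in Q_M$, Myhill--Nerode distinguishability (from $M$ being the minimal DFA of $L(N)$) yields some $v \in \Sigma^*$ accepted from exactly one of $q, q'$; then $v \rightm$ lies in exactly one of $F_H(q), F_H(q')$, so they differ.

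The main obstacle I anticipate is step (c): translating the DFA-theoretic minimality of $M$ into shift-theoretic follower-separation of $H$. The subtlety is that $F_H(q)$, for $q \in Q_M$, is much larger than the DFA language accepted from $q$: it contains all of $\Sigma^*$ (since $M$'s transitions on $\Sigma$ are total), together with additional suffixes beginning with $\rightm$. The key resolution is that the words $v \rightm$ serve as a clean ``probe'' recovering the DFA-accepted language from $F_H(q)$, so Myhill--Nerode distinguishability in $M$ transports directly to follower-distinguishability in $H$, completing the argument.
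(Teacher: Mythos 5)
Your overall strategy---establish $\shift{G}=\shift{H}$, verify that $H$ is a synchronizing deterministic presentation, show $H$ is follower-separated, and invoke Jonoska's uniqueness of follower-separated SDPs---is the same as the paper's, which cites \citet[Theorem 5.5]{jonoska1996sofic} to reduce the minimality claim to follower-separation and then proves follower-separation exactly as you do, distinguishing states of $M$ by Myhill--Nerode words probed via $v\rightm$ and handling the pre-initial and terminal states via $*$ and $\star$. Your explicit verification that $H$ is synchronizing (using $*$, $\star$, and $\leftm v$) is a detail the paper leaves implicit, and it is correct.

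There is, however, one flaw in your step (a): for factors of the shape $w\rightm\star^k$ with $w\in\Sigma^*$, membership in $\Bc(\shift{G})$ is \emph{not} unconditional. Such a word lies in $\Bc(\shift{G})$ only if some state $q\in Q_N$ satisfies $\delta(q,w)\in F$, i.e., only if $w$ is a suffix of a word in $L(N)$; full determinism guarantees that $w$ is readable from every state, but not that the state reached is accepting. (For instance, if $N$ accepts exactly $\{a\}$ and then falls into a rejecting sink, then $aa\rightm\notin\Bc(\shift{G})$.) The conclusion you need---that $G$ and $H$ agree on which such factors they admit---is still true, but it requires an extra sentence: since every state of $N$ is reachable from some entry state and every state of the minimal DFA $M$ is reachable from its initial state, the condition ``some state maps into the accepting set under $w$'' is equivalent, in both automata, to ``$uw\in L(N)=L(M)$ for some $u\in\Sigma^*$.'' With that patch your argument for (a) goes through, and the rest of the proof is sound.
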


\begin{proof}
  The proof that $\shift{G} = \shift{H}$ is similar to the proof of Theorem~\ref{thm:fixed-h-subshift}.
  To show $H$ is the minimal SDP
  for $\shift{G}$, by Jonoska \cite[Theorem 5.5]{jonoska1996sofic},
  it suffices to show $H$ is follower-separated.
  For distinct $p, q \in Q_M$, as $M$ is a minimal DFA,
  either
  there is some word $\delta(p, w) \in F$ with
  $\delta(q, w) \notin F$ or
  there is some word $\delta(p, w) \notin F$ with
  $\delta(q, w) \in F$.
  Without loss of generality, we may assume the former, as
  we can swap the roles of $p$ and $q$ for the latter case.
  Then, $w \rightm \in F_H(p)$ while
  $w \rightm \notin F_H(q)$, so $p$ and $q$ have distinct follower sets.
  For distinct $p, q \in Q_H$ where one of $p$ or $q$ is the pre-initial state or
  the terminal state, follower-separation follows from the presence of
  $*$ or $\star$.
\end{proof}

Thus, the size of the minimal SDP
for $\shift{G}$ is determined by the size of the minimal DFA for
$L(N)$. Applying this construction to $C_k$ gives us a deterministic
presentation with $2k+1$ vertices whose minimal synchronizing deterministic
presentation with $(2^k-1)+2 = 2^k+1$ vertices. The following theorem follows
easily from this observation. 

\begin{theorem}\label{thm:sdp-lower-bound}
  There is a family of sofic shifts $\{X_n\}$ such that for
  sufficiently large $n$,  the minimal deterministic presentation of
  $X_n$ has at most $n$ vertices and the minimal synchronizing deterministic
  presentation of $X_n$ has $2^{\Omega(n)}$ vertices.
\end{theorem}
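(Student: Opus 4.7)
The plan is to instantiate Theorem~\ref{thm:k-entry-construction} on the multiple-entry DFA family $\{C_k\}$ of \citet{holzer2001state} already cited in the excerpt, and then pad the resulting family to achieve any number of vertices $n$. Recall that $C_k$ is a $k$-entry DFA with $k$ states whose equivalent minimal (ordinary) DFA has $2^k-1$ states. First, I would apply the construction preceding Theorem~\ref{thm:k-entry-construction} to $C_k$, producing a deterministic presentation $G_k$. Counting vertices in Figure~\ref{fig:sdp-construction}, $G_k$ has the $k$ states of $C_k$, the $k$ pre-initial states $p_1,\ldots,p_k$, and the terminal state $t$, for a total of $2k+1$ vertices. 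By Theorem~\ref{thm:k-entry-construction}, the minimal SDP of $\shift{G_k}$ is the presentation obtained by running the same construction on the minimal DFA of $L(C_k)$ viewed as a $1$-entry DFA; this presentation contains $2^k-1$ states from the minimal DFA, one pre-initial state $p_1$, and the terminal state $t$, for a total of $2^k+1$ vertices.

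To produce the claimed family indexed by $n$, set $k(n)\triangleq \lfloor (n-1)/2\rfloor$ and define $X_n\triangleq \shift{G_{k(n)}}$ for $n$ large enough that $k(n)\geq 1$. Then $G_{k(n)}$ is a deterministic presentation of $X_n$ with at most $2k(n)+1\leq n$ vertices, so any minimal deterministic presentation of $X_n$ has at most $n$ vertices. On the other hand, Theorem~\ref{thm:k-entry-construction} pins down the minimal SDP of $X_n$ to have exactly $2^{k(n)}+1$ vertices, and since $k(n)=\Omega(n)$, this is $2^{\Omega(n)}$, establishing the theorem.

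I do not expect any substantive obstacle: all the genuine work has already been done in Theorem~\ref{thm:k-entry-construction} and in the cited state-complexity result for $\{C_k\}$, and this proof is simply the bookkeeping that combines them. The only minor points to check are (i) that one can take $X_n$ to be defined for every sufficiently large $n$, which is immediate from the padding by $k(n)$, and (ii) that we are using Theorem~\ref{thm:k-entry-construction} with its hypotheses met, namely that $C_k$ is a $k$-entry DFA and that its minimal DFA is well-defined, both of which hold by construction.
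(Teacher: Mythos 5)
Your proposal is correct and follows essentially the same route as the paper's proof: apply the construction to $C_{k}$ with $k=\lfloor (n-1)/2\rfloor$, count $2k+1$ vertices for the deterministic presentation and $2^{k}+1$ for the minimal SDP via Theorem~\ref{thm:k-entry-construction}, and conclude since $k=\Omega(n)$. The only cosmetic difference is that you mention padding, which (as you in effect observe) is unnecessary because the statement only requires the minimal deterministic presentation to have \emph{at most} $n$ vertices.
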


\begin{proof}
  Let $n$ be sufficiently large. If we apply the construction to
  $C_{n'}$ where
  $n' \triangleq \left\lfloor \frac{n-1}{2} \right\rfloor$, we get a
  presentation $G$ with at most $n$ vertices such that the minimal
  SDP for $\shift{G}$ has
  $2^{n'}+1$ vertices. Thus, since $G$ has at most $n$ vertices, then
  a minimal deterministic presentation for $\shift{G}$ must have at
  most $n$ vertices, and as $n' = \Omega(n)$, the minimal
  SDP for $\shift{G}$ has
  $2^{\Omega(n)}$ vertices.
\end{proof}

\section{Discussion}

We first overview our results, together with a discussion of related problems and a comparison to results from automata theory.
We conclude with open problems.

\subsection{Overview of results}

We summarize our results in Table~\ref{tab:results-overview}.
The table includes complexity results from the automata theory literature for analagous problems for DFAs and FAs.
One conclusion from this table is that, from a computational complexity standpoint, irreducible presentations behave like DFAs, as do synchronizing deterministic presentations with the exception of \inclusionp.
On the other hand, non-deterministic presentations behave like NFAs, as do general deterministic presentations, with the exception of \problem{Universality}.

\newcommand{\tabP}{\cc{P}}
\newcommand{\tabH}{\cc{PSP}-c}

\begin{table}[t]
  \centering
  \begin{tabular}{l@{\hskip 20pt}ll@{\hskip 40pt}llll}
    \toprule
    & \multicolumn{2}{c}{Automata~~~~~~~~~~~~~}
    & \multicolumn{4}{c}{Sofic Shifts~~~~~~}
    \\
    Problem       & DFA   &  FA   & IDP   & SDP   & DP & P   \\
    \midrule                      
    \problem{Universality}  & \tabP & \tabH & \tabP & \tabP & \tabP & \tabH~\cite{czeizler2006testing} \\
    \equalp       & \tabP & \tabH & \hyperref[thm:subshift-correct]{\tabP} & \hyperref[sec:iso-eq]\tabP & \hyperref[thm:hardness-four]{\tabH} & \tabH \\
    \inclusionp   & \tabP & \tabH & \hyperref[thm:subshift-correct]{\tabP} & \hyperref[remark:subshift-sdp-hard]{\tabH} & \hyperref[thm:hardness-four]{\tabH} & \tabH \\
    \minimalp     & \tabP & \tabH & \tabP~\cite{lind2021introduction} & \tabP~\cite{jonoska1996sofic} & \hyperref[thm:sft-minimal-hard]{\tabH} & \tabH \\
    \syncp        & \tabP~\cite{eppstein1990reset}      &    & \hyperref[thm:sw-correct]{\tabP} &       & \hyperref[thm:exist-sync-hard]{\tabH}~\cite{berlinkov2014two} & \tabH \\
    \midrule                      
    \irredp       &       &       &       & \tabP & \hyperref[thm:hardness-four]{\tabH} & \tabH \\
    \sftp         &       &       & \hyperref[thm:sft-testing-sdp]{\tabP}~\cite{lind2021introduction} & \hyperref[thm:sft-testing-sdp]{\tabP} & \hyperref[thm:sft-minimal-hard]{\tabH} & \tabH \\
    \sdpp         &       &       &       &       & \hyperref[thm:hardness-four]{\tabH} & \tabH\\
    \bottomrule
  \end{tabular}
  \caption{An overview of our results and comparison to related automata theory results. The classes sofic shifts are as follows:
    IDP = irreducible deterministic presentation,
    SDP = synchronizing deterministic presentation,
    DP = (general) deterministic presentation,
    and
    P = (general) presentation.
    The complexity classes are
    \tabP\ = solvable in polynomial time
    and \tabH\ = \cc{PSPACE}-complete.
    For FAs, \inclusionp means ``is $L(M) \subseteq L(N)$?''
    Entries of the table corresponding to results we prove (or re-prove) have hyperlinks to their respective proofs.
  }\label{tab:results-overview}
\end{table}

The remainder of this subsection is devoted to entries of the table which were not discussed in the previous sections.
To begin, the problem of \problem{Universality} asks whether a given deterministic presentation $G$ satisfies $\shift{G} = \Ac_G^\Zb$.
Clearly, we have $\shift{G}
\subseteq \Ac_G^\Zb$ for any $G$, so the problem reduces to deciding whether
$\Ac_G^\Zb \subseteq \shift{G}$. This condition can be decided in polynomial time,
as there is an irreducible deterministic presentation of $\Ac_G^\Zb$
which is a single vertex and a self loop on that vertex labeled $a$
for each $a \in \Ac_G$, so one may use Algorithm \ref{alg:subshift} to
decide whether $\Ac_G^\Zb \subseteq \shift{G}$. However, for nondeterministic
presentations, universality is \cc{PSPACE}-complete
\cite{czeizler2006testing}.
\problem{Universality} is equivalent to \minimalp for $k=1$, i.e.\ deciding if the sofic shift presented by a determinstic presentation has a $1$-vertex presentation,
as $\shift{G} = \Ac_G^\Zb$ exactly when $\shift{G}$ has a $1$-vertex presentation. Thus, \minimalp for $k=1$ is in \cc{P}, and our results show that \minimalp for $k\geq 2$ is \cc{PSPACE}-complete.
(Our reduction shows hardness for $k=2$; simple modifications give $k > 2$.)

Given a synchronizing determinstic
presentation, \syncp is trivial, as a synchronizing word always exists.
To actually find a synchronizing word in this case, however, Algorithm \ref{alg:sw} is not sufficient:
for reducible presentations, the algorithm can fail when there are two vertices with the same follower set but no word sending them to the same vertex (e.g.\ $r_1$ and $r_2$ in Reduction \ref{red:sync}).
Fortunately,
the proof of Theorem \ref{thm:sync-testing-alg} gives a method of
constructing a word that synchronizes to any vertex: find a synchronizing word for an initial irreducible component, a word that separates it from the rest of the graph, and finally a word leading to the desired vertex.
This procedure can be
implemented using Algorithm \ref{alg:sw} and Algorithm
\ref{alg:subshift} using only polynomial time.

Similarly, \irredp is trivial for irreducible deterministic presentations, as the shift is guaranteed to be irreducible.
For a synchronizing deterministic presentation $G$, \irredp can be decided by testing whether $G$ is irreducible. In particular, if $\shift{G}$ is irreducible, then by
Theorem \ref{thm:irred-char}, the subgraph induced by all the
synchronizing vertices is irreducible; as every vertex is
synchronizing, $G$ is therefore irreducible.

The problem \sdpp is also trivial for irreducible
deterministic presentations,
since every irreducible sofic shift has a synchronizing deterministic presentation by Theorem \ref{thm:irred-char}.
One can compute this synchronizing deterministic
presentation in polynomial time by simply computing the follower-separation $G/{\sim}$.
The presentation $G/{\sim}$ is irreducible as $G$ is, so every vertex in $G/{\sim}$ is reachable
from every other vertex.
As every follower-separated deterministic
presentation has a synchronizing word, by irreducibility, one can extend this word to one that synchronizes to any other vertex.

For deterministic presentations in general, all the problems in Table~\ref{tab:results-overview}, with the
exception of \syncp, remain \cc{PSPACE}-complete when restricted to
follower-separated instances. The reason is those problems ask a
question about the sofic shift a given input presents, and
follower-separation of an input is a polynomial-time operation that
preserves the sofic shift it presents.  For example, given
presentations $G$ and $H$, deciding if $\shift{G} = \shift{H}$ is
equivalent to deciding if $\shift{G/{\sim}} = \shift{H/{\sim}}$, as
$\shift{G} = \shift{G/{\sim}}$ and $\shift{H} = \shift{H/{\sim}}$.

\subsection{Open problems}

Aside from long-standing open problems like the decidability of conjugacy for sofic shifts, our work suggests several interesting open questions pertaining to the size of various objects.
For deterministic presentations in general, the shortest
synchronizing word in a presentation can be exponentially large.
However, for follower-separated determinsitic presentations,
the shortest synchronizing word has at most cubic length with respect
to the number of vertices; for a
follower-separated input to Algorithm \ref{alg:sw}, the
algorithm always finds a synchronizing word, and one
can easily see that the word returned must be at most cubic length.
Actually, by Exercise 3.4.10 of \citet{lind2021introduction}, one
can see that upper bound can be improved to $n(n-1)$, where
$n$ is the number of vertices in the presentation.
To our knowledge, it is open whether this bound is tight.

For a shift space $X$, define the \term{minimum step} of $X$ to be the
minimum $M$ such that $X$ is $M$-step. By \citet{jonoska1996sofic},
every SFT $X$ has a synchronizing deterministic presentation.  Let
$s(X)$ denote the number of vertices the minimal synchronizing
deterministic presentation of $X$. What is the relationship between
$s(X)$ and the minimum step of $X$?  By Remark \ref{re:sft-bound}, we
know that the minimum step of an SFT $X$ is $O(s(X)^2)$.
To our knowledge, it is also open whether this bound it tight.
One lower bound arises from the family of
\term{run-length limited shifts} $\{X_n\}$, which have minimum step $\Omega(s(X_n))$.
We can repeat the same question for the size of a minimal deterministic
presentation. Let $s_d(X)$ denote the number of vertices in a
minimal deterministic presentation of $X$. What is the
relationship between $s_d(X)$ and the minimum step of $X$?  In
Appendix \ref{apx:pspace}, we generalize Remark \ref{re:sft-bound} to
deterministic presentations in general as Proposition
\ref{prop:sft-bound-general}: for a determinstic presentation $G$,
$\shift{G}$ is an SFT if and only if it is $2^{2|Q_G|}$-step, which
implies that the minimum step of an SFT $X$ is $2^{O(s_d(X))}$.
Again, it is open whether this bound is tight.

\bibliographystyle{plainnat}
\bibliography{refs}

\begin{appendices}

\section{Problems in \PSPACE}
\label{apx:pspace}

Here, we show that all the problems in Table~\ref{tab:problems} are in \cc{PSPACE}.
We will rely heavily on Savitch's theorem: if there is a nondeterministic polynomial-space algorithm for a decision problem, then there is a (deterministic) polynomial-space algorithm for it as well \cite{arora2009computational}.

\newcommand{\act}[2]{\llbracket {#1} \rrbracket_{#2}} \newcommand{\then}{\mathbin{;}}

Let $G$ be a deterministic presentation. 
In general, for the decision problems we work with,
given a word $w$, we usually want to know the
value of $Q_G \cdot w$.
In particular, we have the correspondence
$Q_G \cdot w \neq \varnothing$ if and only if $w \in \Bc(\shift{G})$.
In fact, $q \in Q_G \cdot w$ if and only if there is
path labeled $w$ ending at that $q$.
Note the asymmetry of information here: the set of
vertices in $G$ such that there is a path labeled $w$ \emph{starting}
at that vertex is not encoded in $Q_G \cdot w$.
A situation arises because of this asymmetry when designing polynomial-space algorithms
for \sdpp and \sftp when only using the transition action:
one must deduce $Q_G \cdot uw$ given only $Q_G \cdot w$ and $u$, but not $w$, a problem which is generally ill-posed.

To fix this asymmetry, we introduce the \term{action} of a word. The action of
a word $w$ in $G$ is a binary relation $\act{w}{G}$ on the vertices of $G$
such that $(p,q) \in \act{w}{G}$ if and only if there is a path
labeled $w$ from $p$ to $q$. In other words,
\[
  \act{w}{G} \triangleq \{\, (p, q) \in Q^2 : w \in F_G(p) \text{ and } p \cdot w = q \,\}.
\]
Note that $Q_G \cdot w = \{ q \in Q : \exists p \in Q,\, (p,q) \in \act{w}{G}\}$, so the action of a word still encodes the transition action, but also includes more information.
In particular, there is a path labled $w$ ending at $q$ if and only if there is
a vertex $p$ with $(p,q) \in \act{w}{G}$, and
there is a path labeled $w$ starting at $p$ if and only if there is
a vertex $q$ with $(p,q) \in \act{w}{G}$. In fact, we have
$\act{w}{G} \neq \varnothing$ if and only if $w \in \Bc(\shift{G})$.
Observe that $\act{\epsilon}{G} = \{(q, q) : q \in Q_G\}$.

Just as the transition action had a nice algebraic behvaior via
the equation $S \cdot uv = (S \cdot u) \cdot v$, there is an
analagous equation for actions involving the \term{relational composition}
operation. For binary relations $R, S \subseteq Q^2$, define the
relational composition $R \then S$ (pronounced \emph{R then S}) as
\[
  R \then S \triangleq
  \{\, (p, r) \in Q^2 : \exists q \in Q,\, (p, q) \in R \text{ and } (q, r) \in S \,\}.
\]
One can verify that $(p,q) \in \act{u}{G} \then \act{v}{G}$ if and only if
there is a path labeled $uv$ from $p$ to $q$. This implies that
for all $u, v \in \Ac_G^*$, we have $\act{uv}{G} = \act{u}{G} \then \act{v}{G}$.
For example, we can deduce that $\act{\epsilon}{G}$ acts as an identity:
$\act{w}{G} \then \act{\epsilon}{G} = \act{w \epsilon}{G} = \act{w}{G}$.
Algebraically, we can summarize that $\act{\cdot}{G}$ is a semigroup morphism
from $\Ac_G^*$ that recognizes $\Bc(\shift{G})$ \cite{pin2010mathematical}.

We first show that \inclusionp is \cc{PSPACE} by giving a
nondeterministic polynomial-time algorithm for the complement. Given a
deterministic presentations $G$ and $H$, deciding
$\shift{G} \nsubseteq \shift{H}$ is equivalent to
deciding if there exists a word $w$ with $w \in \Bc(\shift{G})$ and $w \notin \Bc(\shift{H})$.
Using $\act{\cdot}{G}$, this is equivalent to deciding if there exists a word $w$ with $\act{w}{G} \neq \varnothing$ and $\act{w}{H} = \varnothing$.
The following nondeterministic algorithm decides the latter predicate: initialize a relation $R$ to $\act{\epsilon}{G}$ and a relation $S$ to $\act{\epsilon}{H}$, and then repeat the following forever: if $R \neq \varnothing$ and $S = \varnothing$, then return true; otherwise, nondeterministically choose some $a \in \Ac_G \cup \Ac_H$ and update $R$ to $R \then \act{a}{G}$ and $S$ to $S \then \act{a}{H}$.

The size of $R$ and $S$ are polynomial with respect to the size
of $G$, so the algorithm is a nondeterminstic polynomial-space algorithm.
\footnote{
  This algorithm does not halt, but in principle, any
  space-bounded algorithm can be modified to halt with a
  logarithmic overhead.
}
Thus, by Savitch's theorem, \inclusionp is in \cc{PSPACE}.
Membership of \equalp in \cc{PSPACE} follows, by testing both $\shift{G} \subseteq \shift{H}$ and $\shift{H} \subseteq \shift{G}$, using the polynomial-space algorithm for \inclusionp twice.

Next, we have that \minimalp is in \cc{PSPACE}: given a deterministic
presentation $G$ and a positive integer $k$,
when $|Q_G| \leq k$, we can always admit a presentation of $\shift{G}$
with $k$ vertices by adding superfluous vertices to $G$. In the case of
$|Q_G| > k$, we can nondeterministically guess a presentation with $k$
vertices (whose size is polynomially bounded as $|Q_G| > k$) and use
the polynomial-space algorithm for \equalp to determine if our guess is a
presentation of $\shift{G}$.

We also have that \irredp is in \cc{PSPACE}: given a
follower-separated deterministic presentation $G$, using the
polynomial-space algorithm for \equalp, we can test if any terminal irreducible
component presents $\shift{G}$.  By Theorem~\ref{thm:irred-char}, such
a terminal irreducible component exists if and only if $\shift{G}$ is
irreducible.

The argument for \sdpp is more complex.
We will break the algorithm into nondeterministic subprocedures, which each perform a particular test.
We can determinize all three with Savitch's theorem, allowing us to use them in further subprocedures.
Let $G$ be a deterministic presentation, and let $R \subseteq Q_G^2$ be
a binary relation.
\begin{itemize}
\item We say $R$ is an action if there is a word $w$ with $\act{w}{G}=R$.
  We denote the set of actions as $\act{\Ac^*_G}{}$.  A simple
  nondeterministic polynomial-space procedure to test if $R$ is an
  action can be implemented by initializing a relation $S$ as
  $\act{\epsilon}{G}$, and in a loop forever: if $S = R$, then return
  true; otherwise, nondeterminically choose $a \in \Ac_G$ and update
  $S$ to $S \then \act{a}{G}$.
\item We say $R$ is intrinsically synchronizing if for every
  $S, T \in \act{\Ac^*_G}{}$, $S \then R \neq \varnothing$ and
  $R \then T \neq \varnothing$ imply
  $S \then R \then T \neq \varnothing$. One can verify that for a word
  $w \in \Bc(\shift{G})$, $w$ is intrinsically synchronizing for
  $\shift{G}$ if and only if $\act{w}{G}$ is intrinsically
  synchronizing. A nondeterministic polynomial-space procedure to test
  if $R$ is not intrinsically synchronizing can be implemented by
  nondeterministically choosing $S, T \subseteq Q^2$, and testing the
  following four predicates: (i) $S$ and $T$ are actions, (ii)
  $S \then R \neq \varnothing$, (iii) $R \then T \neq \varnothing$, and (iv) $S \then R \then T = \varnothing$. If all the tests were true,
  then return true.
\item We say $R$ is preceded by an intrinsically synchronizing word if
  there is some $S \in \act{\Ac^*_G}{}$ that is intrinsically
  synchronizing and $S \then R \neq \varnothing$.
  One can verify that for a word $w$,
  there is a word $u$ that is intrinsically
  synchronizing for $\shift{G}$ with $w \in F_{\shift{G}}(u)$ if
  and only if $\act{w}{G}$ is preceded by an intrinsically synchronizing word.
  A nondeterministic polynomial-space procedure to test if
  $R$ is preceded by an intrinsically synchronizing word can be
  implemented by nondeterministically choose $S \subseteq Q^2$, and testing three predicates: (i) $S$ is an action, (ii) $S$ is intrinsically synchronizing, and (iii) $S \then R \neq \varnothing$. If all the tests were true, then
  return true.
\end{itemize}
With these definitions and by Theorem~\ref{thm:sdp-char}, one
can verify that $\shift{G}$ has
a synchronizing determinsitic presentation if and only if for every
$R \in \act{\Ac^*_G}{}$ with $R \neq \varnothing$, $R$ is preceded
by an intrinsically synchronizing word. Using our subprocedures,
a nondeterministic polynomial-space procedure to test if
$\shift{G}$ does not have a synchronizing deterministic presentation can
be implemented by nondeterministically choosing $R \subseteq Q^2$ and testing whether (i) $R$ is an action, (ii) $R \neq \varnothing$, and (iii) $R$ is not preceded by an intrinsically synchronizing word.

Finally, we show \sftp is in \cc{PSPACE}.
Recall from Remark \ref{re:sft-bound} that
for a follower-separated synchronizing deterministic presentation $G$, $\shift{G}$
is an SFT if and only if it is $(|Q_G|^2-|Q_G|)$-step.
To decide \sftp in polynomial space, we first generalize this characterization
to when $G$ is not necessarily follower-separated and synchronizing.

\begin{proposition}\label{prop:sft-bound-general}
  Let $G$ be a deterministic presentation.
  Then, $\shift{G}$ is an SFT if and only if it is $2^{2|Q_G|}$-step.
\end{proposition}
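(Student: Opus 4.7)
The ``if'' direction will be immediate: if $\shift{G}$ is $2^{2|Q_G|}$-step, then by Theorem~\ref{thm:sft-char} it is an SFT. The substance is the ``only if'' direction. My plan is, assuming $X \triangleq \shift{G}$ is an SFT, to produce a follower-separated synchronizing deterministic presentation $H$ of $X$ with $|Q_H| \leq 2^{|Q_G|}$ vertices, at which point Remark~\ref{re:sft-bound} finishes the job since $|Q_H|^2 - |Q_H| \leq (2^{|Q_G|})^2 = 2^{2|Q_G|}$.

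First I would verify that $X$ even admits an SDP. By Theorem~\ref{thm:sft-char} some $M$ satisfies that every $w \in \Bc(X)$ with $|w| \geq M$ is intrinsically synchronizing, and because each $u \in \Bc(X)$ is prolongable to the left inside $\Bc(X)$, we can find $w$ with $|w| \geq M$ and $wu \in \Bc(X)$, i.e.\ $u \in F_X(w)$. Theorem~\ref{thm:sdp-char} then yields an SDP for $X$. Moreover, the construction in the proof of Theorem~\ref{thm:sdp-char} actually hands us a follower-separated SDP $H$ whose vertex set is precisely the collection of distinct follower sets $F_X(w)$ as $w$ ranges over intrinsically synchronizing words of $X$. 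By Proposition~\ref{prop:basic}(iii), $F_X(w) = \bigcup_{q \in Q_G \cdot w} F_G(q)$ is determined by the subset $Q_G \cdot w \subseteq Q_G$; since there are at most $2^{|Q_G|}$ such subsets, we get $|Q_H| \leq 2^{|Q_G|}$. Applying Remark~\ref{re:sft-bound} to the follower-separated SDP $H$ then shows $X = \shift{H}$ is $(|Q_H|^2 - |Q_H|)$-step, hence $2^{2|Q_G|}$-step.

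The main obstacle I anticipate is ensuring that the SDP we produce is simultaneously follower-separated (so Remark~\ref{re:sft-bound} applies) and quantitatively controlled by $|Q_G|$ rather than by some intrinsic invariant of $X$. Both properties fall out of the explicit construction in the proof of Theorem~\ref{thm:sdp-char}, which in essence performs a subset-construction-style determinization of the follower-set correspondence on $G$; the small amount of bookkeeping left over is to identify the vertex set of that construction with a subset of the powerset of $Q_G$ via Proposition~\ref{prop:basic}(iii).
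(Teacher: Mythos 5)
Your proof is correct and follows essentially the same route as the paper's: both reduce the ``only if'' direction to exhibiting a follower-separated synchronizing deterministic presentation of $\shift{G}$ with at most $2^{|Q_G|}$ vertices and then invoke Remark~\ref{re:sft-bound} together with the monotonicity of the $M$-step property. The only difference is that the paper obtains that presentation by citing \citet{jonoska1996sofic}, whereas you derive it self-containedly from Theorem~\ref{thm:sdp-char} and bound its size via Proposition~\ref{prop:basic}(iii); that derivation is sound.
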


\begin{proof}
  If $\shift{G}$ is $2^{2|Q_G|}$-step, then it is an SFT.
  Conversely, suppose $\shift{G}$ is an SFT.
  Then, by \citet[Corollary 5.4, Proposition 6.2]{jonoska1996sofic}, $\shift{G}$
  has a follower-separated synchronizing deterministic presentation, and it has at most
  $2^{|Q_G|}$ vertices. Thus, by Remark \ref{re:sft-bound}, $\shift{G}$ is $M$-step
  for some $M \leq (2^{|Q_G|})^2 - (2^{|Q_G|})$. Since
  any shift space that is $M$-step is also $M'$-step for every $M' \geq M$,
  then $\shift{G}$ must be $2^{2|Q_G|}$-step.
\end{proof}

A nondeterministic polynomial-space procedure to test if $\shift{G}$ is not
$2^{2|Q_G|}$-step can be implemented by initializing a counter with
$2 |Q_G|$ bits to $0$ and initializing a relation $R$ to
$\act{\epsilon}{G}$. Then, use the counter to repeat the following
$2^{2|Q_G|}$ times: nondeterministically choose $a \in \Ac_G$ and
update $R$ to $R \then \act{a}{G}$. After the loop,
nondeterministically choose $S \subseteq Q^2$ and test the following three predicates:
(i) $S$ is an action,
(ii) $R \then S \neq \varnothing$, and (iii) $R \then S$ is not
intrinsically synchronizing.
If all the tests were true, then return true.

\section{Intersection Construction}
\label{apx:cap-cons}

To prove Theorem~\ref{thm:min-sync-word-lower-bound}, we adapt the construction from \citet{ang2010problems}.
We construct a family of DFAs $M_{i, k}$ and show that they satisfy the following.
\begin{theorem}
  For every $k \geq 0$, the language $\bigcap_{i=0}^k L(M_{i,k})$ is
  nonempty and the minimum length of a word in
  $\bigcap_{i=0}^k L(M_{i,k})$ is $2^k$.
\end{theorem}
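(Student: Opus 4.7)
My plan is to prove both directions of the length equality separately: first exhibit an explicit witness of length $2^k$ in the intersection (which also establishes nonemptiness), then prove no shorter word lies in every $L(M_{i,k})$. Since the target length $2^k$ doubles as $k$ increments by one, I expect the construction to have a recursive doubling structure, and I would exploit that in both directions.

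For the upper bound, I would define a family of words $w_0, w_1, \ldots$ by induction on $k$, with $|w_k| = 2^k$ and $w_k \in \bigcap_{i=0}^k L(M_{i,k})$. The base case $w_0$ is a single symbol chosen to be accepted by $M_{0,0}$. For the inductive step, $w_k$ is obtained from $w_{k-1}$ by a doubling operation dictated by the construction, e.g.\ $w_k = w_{k-1} \cdot \bar w_{k-1}$, where $\bar w_{k-1}$ denotes an appropriate transformation (such as a symbol relabeling) reflecting how the automata at level $k$ relate to those at level $k-1$. Verification that each $M_{i,k}$ accepts $w_k$ reduces, after tracing its transitions, to the inductive property of $w_{k-1}$ together with a direct check for the newly-added automaton $M_{k,k}$.

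For the lower bound, I would again induct on $k$. The base case $k=0$ is immediate: the DFA $M_{0,0}$ rejects $\epsilon$ by construction, so its shortest accepted word has length at least $1 = 2^0$. For the inductive step, let $w \in \bigcap_{i=0}^k L(M_{i,k})$. The crux is to show that the acceptance condition of $M_{k,k}$ forces $w$ to pass through some common ``midpoint'' state across all the $M_{i,k}$, yielding a factorization $w = uv$ with $u,v \neq \epsilon$, in which each of $u$ and $v$ (possibly after a uniform symbol transformation) is accepted by every $M_{i,k-1}$. The inductive hypothesis then gives $|u|, |v| \geq 2^{k-1}$, and therefore $|w| \geq 2^k$.

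The main obstacle will be the inductive splitting step: one must exploit the particular 3-state structure of $M_{k,k}$ to identify a forced factorization of any accepted word and then relate the two factors to the level-$(k-1)$ intersection language. Concretely, this means analyzing the transition monoid of $M_{k,k}$ and showing that any state-sequence realizing an accepted run must visit a designated ``splitting'' state, together with establishing that the restriction of $M_{0,k}, \ldots, M_{k-1,k}$ to the pre- and post-split factors is, up to relabeling, exactly the family $M_{0,k-1}, \ldots, M_{k-1,k-1}$. This combinatorial correspondence is the essential content of the Ang--Pighizzini construction, and I expect the remainder of the argument (explicit witness words and base cases) to follow by routine verification.
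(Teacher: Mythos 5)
Your overall architecture (an explicit witness for the upper bound, induction on $k$ for the lower bound) matches the paper's, but the key step of your lower bound does not work for this construction, and the paper uses a genuinely different reduction there. You propose to factor an accepted word as $w=uv$ at a ``common midpoint state'' and argue that each factor lies, up to relabeling, in $\bigcap_{i=0}^{k-1}L(M_{i,k-1})$. The obstruction is the \emph{states}, not the symbols: at the natural split point the automata $M_{i,k}$ with $0<i<k$ need not be in their initial or accepting states, so the prefix $u$ is simply not accepted by the smaller family. Concretely, $0212\in\bigcap_{i=0}^{2}L(M_{i,2})$ splits as $02\mid 12$, and after reading $02$ the automata $M_{0,2},M_{1,2},M_{2,2}$ sit in states $q_1,q_1,q_0$ respectively; $q_1$ is neither the initial nor the accepting state of $M_{1,2}$, so $02$ is rejected by $M_{1,2}$ and no relabeling of the alphabet fixes this. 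Making a midpoint argument work would require carrying the full state vector through the induction, a heavier invariant than the one you state. The paper instead observes that every $M_{i,k+1}$ with $i\leq k$ acts as the identity on the new symbol $k+1$, so deleting all occurrences of $k+1$ from $w$ yields a word $w'\in\bigcap_{i=0}^{k}L(M_{i,k})$, of length at least $2^k$ by induction; the automaton $M_{k+1,k+1}$ then forces $w$ to alternate between $\{0,\dots,k\}$ and $k+1$, so $|w|=2|w'|\geq 2^{k+1}$. This projection argument sidesteps the factorization problem entirely.

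On the upper bound, your guess $w_k=w_{k-1}\cdot\bar w_{k-1}$ with a uniform relabeling also does not match the construction: the paper's witness is built by \emph{interleaving}, namely $w_{k+1}=h_{k+1}(w_k)$ where the homomorphism $h_{k+1}(j)=j\,(k+1)$ inserts the new symbol after every letter (so $w_0=0$, $w_1=01$, $w_2=0212$, and so on). While $w_k$ does happen to decompose into two halves, the two halves are images of $w_{k-1}$ under \emph{different} relabelings, so the verification would not reduce to a single inductive property in the way you describe. This part is repairable by routine computation once the correct witness is written down; the lower-bound gap is the substantive one.
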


We will define $M_{i,k}$ for $i \geq 0$ and $k \geq 0$.  Each
$M_{i,k}$ has the state set $Q \triangleq \{q_0, q_1, q^*\}$ and are
defined over the alphabet $\{0, 1, \dots, k\}$. The transition
function of $M_{i,k}$ is denoted $\delta_{i,k}$ and is defined as
$\delta_i$ restricted to the domain $Q \times \{0,1,\dots,k\}$,
where $\delta_i:Q \times \mathbb N \to Q$ is defined as follows.
For $j \geq 0$, we define
\begin{gather*}
  \delta_0(q_0, j) \triangleq
  \begin{cases}
    q_1 & \text{if } j = 0 \\
    q^* & \text{otherwise}
  \end{cases}
  \qquad
  \delta_0(q_1, j) \triangleq
  \begin{cases}
    q_1 & \text{if } j \neq 0 \\
    q^* & \text{otherwise}
  \end{cases}
  \qquad
  \delta_0(q^*, j) \triangleq q^*
\end{gather*}
For $i \geq 1$ and $j \geq 0$, we define
\begin{gather*}
  \delta_i(q_0, j) \triangleq
  \begin{cases}
    q_0 & \text{if } j > i \\
    q_1 & \text{if } j < i \\
    q^* & \text{otherwise}
  \end{cases}
  \qquad
  \delta_i(q_1, j) \triangleq
  \begin{cases}
    q_0 & \text{if } j = i \\
    q_1 & \text{if } j > i \\
    q^* & \text{otherwise}
  \end{cases}
  \qquad
  \delta_i(q^*, j) \triangleq q^*
\end{gather*}
We set the initial state of $M_{i,k}$ to $q_0$ for every $i \geq 0$ and $k \geq 0$.
We set the final state of $M_{0,k}$ to just $q_1$,
and for $i \geq 1$, we set the final state of $M_{i,k}$ to just $q_0$.
Figure~\ref{fig:inter-construct-ex} depicts an example of the construction.
Essentially, for $i \geq 1$, we have that $\delta_i$ acts on $Q$
in the following way: $q^*$ is a sink state, so no word that under
that action of $\delta_i$ that visits the sink state will be in the
language of $M_{i,k}$; for $j > i$, we have that $q \mapsto \delta_i(q, j)$
is the identity function; for $j \leq i$, the only transition out of $q_0$
that does not lead to $q^*$ is when $j \neq i$, and
conversely, the only transition out of $q_1$ that does not lead
to $q^*$ is when $j = i$.

Define $h_k\colon \{0,1,\dots,k\}^* \to \{0,1,\dots,k\}^*$ to be the
the string homomorphism $h_k : j\mapsto j k$ that inserts the symbol $k$ after every
symbol from the input.
Define $w_0 \triangleq 0$ and
$w_{k+1} \triangleq h_{k+1}(w_k)$ for $k \geq 0$.  Once can show by
induction for $k \geq 0$ that $w_k \in \bigcap_{i=0}^k L(M_{i,k})$ and
that $|w_k| = 2^k$.  Thus, the minimum length of a word in
$\bigcap_{i=0}^k L(M_{i,k})$ is at most $2^k$.  In fact, as we show
next, any word in $\bigcap_{i=0}^k L(M_{i,k})$ must have length at
least $2^k$, so $w_k$ achieves the minimal length.

\begin{proposition}
  For all $k \geq 0$, if $w \in \bigcap_{i=0}^k L(M_{i,k})$, then $|w| \geq 2^k$.
\end{proposition}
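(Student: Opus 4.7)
The plan is to proceed by induction on $k$, with the inductive step exploiting the fact that symbol $k+1$ acts as a no-op in $M_{i,k+1}$ for $i\geq 1$ while being forced to interleave with non-$(k+1)$ symbols by $M_{k+1,k+1}$. The base case $k=0$ is immediate: $L(M_{0,0})=\{0\}$ since from $q_0$ the only letter $0$ sends us to the accepting state $q_1$, and any further read goes to $q^*$; so the unique word has length $1=2^0$.

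For the inductive step, suppose the claim holds for $k$ and take $w\in\bigcap_{i=0}^{k+1}L(M_{i,k+1})$. The first step will be to read off the structure of $w$ forced by $M_{k+1,k+1}$. Since $\delta_{k+1}(q_0,k+1)=q^*$ and $\delta_{k+1}(q_1,j)=q^*$ for $j\neq k+1$, acceptance at $q_0$ requires that $w$ strictly alternate between a non-$(k+1)$ letter and $k+1$, starting with the former and ending with the latter. So $w=a_1(k+1)a_2(k+1)\cdots a_m(k+1)$ with $a_j\in\{0,\dots,k\}$ and $m\geq 1$, giving $|w|=2m$. I would then define $w'\triangleq a_1a_2\cdots a_m\in\{0,\dots,k\}^*$, so $|w|=2|w'|$, and show $w'\in\bigcap_{i=0}^{k}L(M_{i,k})$.

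To verify the latter, I separate $i\geq 1$ from $i=0$. For $i\geq 1$, the key observation is that $k+1>i$, so $\delta_i(q_0,k+1)=q_0$ and $\delta_i(q_1,k+1)=q_1$; reading $k+1$ is a no-op in $M_{i,k+1}$. Hence the runs of $w$ and $w'$ in $M_{i,k+1}$ end at the same state, so $w'\in L(M_{i,k+1})$, and since $w'\in\{0,\dots,k\}^*$ and $\delta_i$ agrees on this alphabet, $w'\in L(M_{i,k})$. For $i=0$, using the transitions of $\delta_0$, acceptance of $w$ forces $w$ to start with $0$ and contain no further $0$; because the inserted symbols $k+1$ are nonzero, this translates exactly to $a_1=0$ and $a_j\in\{1,\dots,k\}$ for $j\geq 2$, i.e. $w'\in 0\{1,\dots,k\}^*\subseteq L(M_{0,k})$. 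Applying the induction hypothesis gives $|w'|\geq 2^k$ and therefore $|w|=2|w'|\geq 2^{k+1}$.

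The main obstacle is simply being careful about how each $M_{i,k+1}$ treats the alphabet extension by the new symbol $k+1$: the $i=0$ machine treats it as nonzero (hence a harmless letter once past the first step) while the $i\geq 1$ machines treat it as a true no-op, and the $i=k+1$ machine treats it as the unique ``reset'' letter. Aligning these three behaviors is what makes the induction bookkeeping (rather than any deep combinatorics) the substantive content of the argument.
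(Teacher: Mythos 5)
Your proof is correct and follows essentially the same route as the paper's: induct on $k$, delete the occurrences of $k+1$ to obtain a word $w'$ in $\bigcap_{i=0}^{k}L(M_{i,k})$, and use the strict alternation forced by $M_{k+1,k+1}$ to get $|w|=2|w'|\geq 2^{k+1}$. If anything, your separate treatment of $i=0$ is slightly more careful than the paper's uniform claim that $k+1$ is a no-op for all $i\leq k$ (which fails at $q_0$ for $i=0$ but is harmless since acceptance forces the first letter of $w$ to be $0$).
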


\begin{proof}
  We show this by induction. For $k=0$, any word in $L(M_{0,0})$ must have length
  at least $1 = 2^0$, so the proposition holds for $k=0$.
  
  Now, suppose the proposition holds at a given $k \geq 0$.
  Let $w \in \bigcap_{i=1}^{k+1} L(M_{i, k+1})$. Note that when
  $i \leq k$, then $\delta_{i,k+1}(q, k+1) = q$ for all $q \in Q$.
  Thus, if we let $w'$ denote the word with every occurance of $k+1$ in $w$
  removed, then
  for $i \in \{0, 1, \dots, k\}$, we have $\delta_{i,k+1}(q, w') = \delta_{i,k+1}(q, w)$,
  which implies that $w' \in \bigcap_{i=1}^k L(M_{i,k+1})$. In fact,
  as $w' \in \{0, 1, \dots, k\}^*$ and as $\delta_{i,k+1}(q, j) = \delta_{i,k}(q, j)$
  when $j \in \{0, 1, \dots, k\}$, we have $w' \in \bigcap_{i=0}^k L(M_{i,k})$.
  By our induction hypothesis, we have that $|w'| \geq 2^k$.

  Note that as $w \in L(M_{k+1, k+1})$, then $w$ must alternate between some
  symbol in $\{0, 1, \dots, k\}$ and $k+1$, which implies number of
  $k+1$'s in $w$ must equal the number of $\{0,1,\dots,k\}$'s. Thus,
  as $|w'|$ measures the number of $\{0,1,\dots,k\}$'s in $w$, the number
  of $k+1$'s in $w$ must be at least $2^k$.
  Putting this together, we have $|w| \geq 2\cdot2^k = 2^{k+1}$,
  so the proposition holds at $k+1$.
\end{proof}

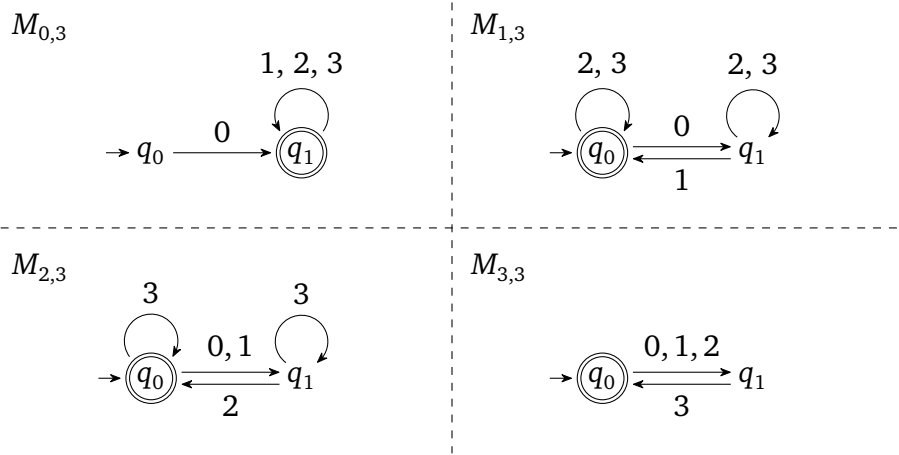
\begin{figure}[t]
  \centering

  \def\agap{.04}
  \begin{tikzpicture}[xscale=\xgap,yscale=\ygap,
    right loop/.style={arc center=right, arc direction=ccw, arc center offset=4pt},
    left loop/.style={arc center=left, arc direction=cw, arc center offset=4pt}]
    
    \path (0,0) node[vertex] (q00) {$q_0$} ++(1, 0)
    node[circle, draw, inner sep=2pt, outer sep=2pt] (q01) {$q_1$}
    node[circle, draw, inner sep=1pt, invisible] {$q_1$};
    
    \path (3,0)
    node[circle, draw, inner sep=2pt, outer sep=2pt] (q10) {$q_0$}
    node[circle, draw, inner sep=1pt, invisible] {$q_0$}
    ++(1, 0) node[vertex] (q11) {$q_1$};
    
    \path (0,-1.5)
    node[circle, draw, inner sep=2pt, outer sep=2pt] (q20) {$q_0$}
    node[circle, draw, inner sep=1pt, invisible] {$q_0$}
    ++(1, 0) node[vertex] (q21) {$q_1$};
    
    \path (3,-1.5)
    node[circle, draw, inner sep=2pt, outer sep=2pt] (q30) {$q_0$}
    node[circle, draw, inner sep=1pt, invisible] {$q_0$}
    ++(1, 0) node[vertex] (q31) {$q_1$};

    \draw (q00) ++(-.3, 0) to (q00);
    \draw (q00) to node[above] {$0$} (q01);
    \draw (q01.north east) to[right loop, arc center offset=3pt]
    node[above] {$1$, $2$, $3$} (q01.north west);

    \draw (q10) ++(-.35, 0) to (q10);
    \draw ($(q10.east)+(0,\agap)$) to node[above] {$0$} ($(q11.west)+(0,\agap)$);
    \draw ($(q11.west)-(0,\agap)$) to node[below] {$1$} ($(q10.east)-(0,\agap)$);
    \draw (q11.north west) to[left loop] node[above] {$2$, $3$} (q11.north east);
    \draw (q10.north west) to[left loop, arc center offset=3pt]
    node[above] {$2$, $3$} (q10.north east);

    \draw (q20) ++(-.35, 0) to (q20);
    \draw ($(q20.east)+(0,\agap)$) to node[above] {$0,1$} ($(q21.west)+(0,\agap)$);
    \draw ($(q21.west)-(0,\agap)$) to node[below] {$2$} ($(q20.east)-(0,\agap)$);
    \draw (q21.north west) to[left loop] node[above] {$3$} (q21.north east);
    \draw (q20.north west) to[left loop, arc center offset=3pt]
    node[above] {$3$} (q20.north east);

    \draw (q30) ++(-.35, 0) to (q30);
    \draw ($(q30.east)+(0,\agap)$) to node[above] {$0,1,2$} ($(q31.west)+(0,\agap)$);
    \draw ($(q31.west)-(0,\agap)$) to node[below] {$3$} ($(q30.east)-(0,\agap)$);

    \draw[-, dashed] (-1, -.5) to (5, -.5);
    \draw[-, dashed] (2, 1) to (2, -2);
    \node[below right] at (-1, 1) {$M_{0,3}$};
    \node[below right] at (2.05, 1) {$M_{1,3}$};
    \node[below right] at (-1, -.6) {$M_{2,3}$};
    \node[below right] at (2.05, -.6) {$M_{3,3}$};
    
  \end{tikzpicture}
  \caption{$M_{i,k}$ for $i \in \{0,1,2,3\}$ and $k=3$.
    The transitions to $q^*$ are not depicted.}
  \label{fig:inter-construct-ex}
\end{figure}

\end{appendices}

\end{document}